\newtheorem{theorem}{Theorem}
\newtheorem{lemma}{Lemma}
\newtheorem{prop}{Proposition}
\newtheorem{assumption}{Assumption}
\def\hat{\widehat}
\def\tilde{\widetilde}
\newcommand{\bbb}{\boldsymbol \beta}
\newcommand{\bO}{\mathbf O}
\newcommand{\bo}{\boldsymbol o}
\newcommand{\bg}{\boldsymbol g}
\newcommand{\bx}{\boldsymbol x}
\newcommand{\be}{\boldsymbol e}
\newcommand{\PP}{\mathbb P}
\newcommand{\bX}{\mathbf X}
\newcommand{\bD}{\mathbf D}
\newcommand{\bS}{\mathbf S}
\newcommand{\bH}{\mathbf H}
\newcommand{\bV}{\mathbf V}
\newcommand{\bI}{\mathbf I}
\newcommand{\ME}{\mathbb E}
\newcommand{\MP}{\mathbb P}
\newcommand\independent{\protect\mathpalette{\protect\independenT}{\perp}}
\def\independenT#1#2{\mathrel{\rlap{$#1#2$}\mkern2mu{#1#2}}}
\DeclareMathOperator*{\argmin}{arg\,min}
\def\spacingset#1{\renewcommand{\baselinestretch}%
{#1}\small\normalsize} \spacingset{1}
\begin{document}

\newcommand{\blind}{0}

\newcommand{\tit}{Nonparametric Estimation of Local Treatment \\ Effects with Continuous Instruments} 

\if0\blind

{\title{\tit\thanks{The authors declare no conflicts. Research in this article was supported by the National Library of Medicine \#1R01LM013361-01A1 and NSF CAREER grant 2047444. All statements in this report, including its findings and conclusions, are solely those of the authors.}}
\author{Zhenghao Zeng\thanks{Stanford University, Email: zhzeng@stanford.edu}, Alexander W. Levis\thanks{Assistant Professor, University of Pennsylvania,, Email: alexander.levis@pennmedicine.upenn.edu}, JungHo Lee\thanks{Carnegie Mellon University, Email: junghol@andrew.cmu.edu}, \\ Edward H. Kennedy\thanks{Associate Professor, Carnegie Mellon University, Email: edward@stat.cmu.edu}, Luke Keele\thanks{Associate Professor, University of Pennsylvania, Email: luke.keele@gmail.com, corresponding author}
}

\date{}

\maketitle
}\fi

\if1\blind
\title{\bf \tit}
\maketitle
\fi

\maketitle

\begin{abstract}
Instrumental variable methods are widely used to address unmeasured confounding, yet much of the existing literature has focused on the binary instrument setting. Extensions to continuous instruments often impose strong parametric assumptions for identification and estimation, which can be difficult to justify and may limit their applicability in complex real-world settings. In this work, we develop theory and methods for nonparametric estimation of treatment effects with a continuous instrumental variable. We introduce an estimand that, under a monotonicity assumption, quantifies the treatment effect among the maximal complier class, generalizing the local average treatment effect framework to continuous instruments. Considering this estimand and the local instrumental variable curve, we draw connections to the dose-response function and its derivative, and propose doubly robust estimation methods. We establish convergence rates and conditions for asymptotic normality, providing valuable insights into the role of nuisance function estimation when the instrument is continuous. Additionally, we present practical procedures for bandwidth selection and variance estimation. Through extensive simulations, we demonstrate the advantages of the proposed nonparametric estimators. Finally, we apply our methods to data where excess travel time is an instrument for patients’ likelihood of receiving care at specialized health care facilities. We use this instrument to estimate the effect of delivering at low-quality neonatal intensive care units (NICUs) on infant mortality.
\end{abstract}

\noindent%
{\it Keywords: causal inference; local instrumental variable curves; nonparametric methods; doubly robust methods; dose-response curve; derivative estimation} 

\thispagestyle{empty}

\clearpage

\spacingset{1.1}

\section{Introduction}
The method of instrumental variables (IVs), originally developed in econometrics in the 1920s, provides a powerful framework for drawing causal inferences in the presence of unobserved confounders. This approach relies on identifying an instrumental variable—a variable that is associated with the treatment of interest but affects outcomes only through its impact on treatment assignment. While the treatment itself may be confounded, IVs remain unconfounded by design or domain knowledge, allowing researchers to identify and estimate different forms of causal relationships, despite unmeasured confounding. Although IV methods were historically proposed in econometrics, the past three decades have witnessed increasing interest from the statistical community. Building upon the potential outcome framework, foundational works have extended IV methods to randomized experiments with noncompliance, relaxed treatment effect homogeneity assumptions, introduced estimation of treatment effects among compliers, and developed partial identification results \citep{robins1989analysis, robins1994correcting, angrist1996identification, imbens1994, Manski:1990, Balke:1997}. \\



Most research on IVs has focused on the canonical scenario where the instrument is binary. However, in many applications, IVs are continuous or nearly continuous measures. In such cases, the method of two-stage least squares (TSLS) method is often used for estimation. However, TSLS relies on parametric assumptions for identification and assumes constant treatment effects \citep{okui2012doubly}, which are often restrictive and unrealistic in practice. Recent research has introduced estimation methods that are far more flexible. These estimation methods are designed for continuous IVs but also incorporate doubly robust adjustments and allow for heterogeneous treatment effects \citep{tan2010marginal, kennedy2019robust, mauro2020instrumental,robins2001comment,vanderlaan2003unified}. One strand of this research has focused on identification and estimation of the local IV (LIV) curve \citep{heckman1997instrumental,heckman1999local,heckman2005structural,glickman2000derivation,vytlacil2002independence,kennedy2019robust}. The LIV framework invokes a generalization of the monotonicity assumption from the binary IV case for continuous IVs.  When the IV is binary, the monotonicity assumption stipulates that no units defy the encouragement of the instrument to receive treatment \citep{imbens1994}. Under monotonicity, the target causal estimand is (typically) the treatment effect specifically for the subgroup of compliers---those who follow the encouragement of the instrument. The LIV framework generalizes this monotonicity assumption to the continuous IV setting: under LIV, if the treatment is binary and monotone with respect to the IV, each unit has a latent threshold such that treatment is taken if and only if the IV exceeds that threshold. In this case, one can identify and estimate the so-called LIV curve which captures the treatment effect within subgroups with specific threshold values across the range of the IV. Early LIV estimators relied on restrictive parametric models \citep{basu2007use,carneiro2011estimating}, however, more recent work has developed semiparametric estimation methods that relax key parametric modeling assumptions. Notably, \cite{kennedy2019robust} introduced an approach that projects the LIV curve onto a parametric working model, ensuring that even if the model is misspecified, the estimand remains interpretable as the best approximation of the LIV curve within the chosen model class. However, when the working model is misspecified, the estimated projection may still lead to substantial estimation error, highlighting the need for more flexible, fully nonparametric approaches. \\

In this paper, we develop nonparametric estimation methods for applications with a continuous IV.  First, we develop fully nonparametric estimators for the LIV curve, addressing the challenges posed by the ratio-of-derivative structure. We note that the numerator and denominator in this ratio share the same structure as the derivative of the usual dose-response function. Leveraging this insight, we propose two doubly robust estimation methods for the LIV curve that rely on derivative estimates of two dose-response functions. Both methods target a smooth approximation of the derivative, and we provide a unifying framework for smooth dose-response estimation generalizing the underlying ideas. We also derive practical variance estimators and outline a cross-validation approach for tuning parameter selection. In summary, our methods enable flexible and efficient estimation using nonparametric machine learning techniques that avoid model misspecification while still allowing for valid statistical inference. \\

Second, we introduce two new estimands for continuous IVs that are bounded. The first we refer to as the maximal complier class probability, which represents the proportion of individuals in the study population whose treatment could be influenced as the instrument varies from its minimum to maximum value. The other estimand measures the treatment effect within this maximal complier class. While its structure mirrors the well-known local average treatment effect (LATE), it cannot be estimated at parametric rates due to the continuous nature of the instrument. We establish a connection between this estimand and the dose-response function, and propose a doubly robust method for its estimation. We then conduct a series of simulation studies to demonstrate the advantages of the proposed methods. Finally, we apply our methods to a well-known empirical study that uses geographic distance as an instrument for access to higher-quality medical care. \\

Our paper is organized as follows: Section \ref{sec:preliminary} introduces the problem setup, causal assumptions, and estimands of interest in the continuous IV setting, including the LIV curve and treatment effects among the maximal complier class. We highlight the connection between these estimands and the dose-response function and its derivative. A framework for doubly robust estimation of the dose-response function is also provided. In Section \ref{sec:dose-res-boundary}, we consider estimation of the dose-response function at the boundary, which is then used for estimating treatment effects within the maximal complier class. Our approach extends the local linear estimator in \cite{kennedy2017non} to a local polynomial estimator, allowing a better fit to the local curvature. Since the LIV curve can be expressed as the ratio of derivatives of two dose-response functions, we introduce two doubly robust methods for estimating the derivative of the dose-response function in Sections \ref{sec:loc_poly} and \ref{sec:smoothing}. The novel theoretical results for estimating the dose-response function and its derivative provide valuable insights into how nuisance function estimation influences the final estimation rate. In Section~\ref{sec:sims}, we study the finite-sample performance of our methods in simulated data. In Section~\ref{sec:app}, we illustrate the practical application of our methods with an empirical example. Additional results, including a practical bandwidth selection method, additional simulation studies, and technical proofs, are provided in the Appendix.

\section{Preliminaries}
\label{sec:preliminary}

In this section, we introduce notation and review the identification conditions for causal effects in the continuous instrumental variable setting. Based on these causal assumptions, we define causal estimands of interest, discuss their interpretation, and lay out corresponding identification results. 

\subsection{Setup \& Notation}
\label{sec:setup}

Suppose we observe $n$ i.i.d. observations $\{\bO_i=(\bX_i,Z_i, A_i, Y_i), 1\leq i \leq n\}$ with a generic observation written $\bO=(\bX,Z, A, Y)$, where $\boldsymbol{X} \in \mathcal{X} \subseteq \mathbb{R}^d$ is a vector of covariates, $Z \in \mathbb{R}$ is a continuous instrument, $A \in \{0,1\}$ is a binary exposure variable, and $Y \in \mathbb{R}$ is a real-valued outcome of interest. Let $\mathcal{O}=\mathcal{X} \times \mathcal{Z} \times \mathcal{A} \times \mathcal{Y}$ denote the support of $\bO=(\bX,Z, A, Y)$ and $\mathcal{Z}_0$ the set of instrument values of interest. We rely on the potential outcome framework \citep{rubin1974estimating} to define causal effects. Specifically, let $A^z$ and $Y^z$ denote the counterfactual exposure and outcome values, had the instrument been set to $Z = z$. We also define $Y^a$ and $Y^{za}$ as the potential outcomes under interventions setting $A=a$ and both $A=a$ and $Z=z$, respectively. After reducing the problem of estimating causal effects in the continuous instrument setting to estimating quantities related to the dose-response function, we use 
$Z$ to denote the treatment in Section~\ref{sec:dose-res-boundary}--\ref{sec:smoothing}. \\

For distribution $\PP$ of  $\bO$ and a $\PP$-integrable function $\eta(\bO)$, we define $\PP[ \eta(\bO)] = \int \eta(\bo) \, d\PP(\bo)$, averaging over the randomness of $\bO$ while conditioning on $\eta$ when it is random. If $\eta$ is $\PP$-square-integrable, we denote its $L_2(\PP)$-norm as $\|\eta\|_2 = \sqrt{\int \eta^2(\bo) d \PP(\bo)}$.  For $n$ i.i.d. copies of $\bO$, we denote by $\mathbb{P}_n$ the empirical distribution and $\mathbb{P}_n [\eta(\bO)]$ the sample average $n^{-1}\sum_{i=1}^n \eta(\bO_i)$.  \\


Next, we introduce notation for three nuisance functions. These nuisance functions are necessary for estimation but are not of direct interest in themselves. First, let $\pi(Z \mid \bX)$ denote the conditional density of the instrumental variable $Z$ given the covariates $\bX$, also known as the instrument propensity score. We also denote the marginal density of $Z$ as $f(Z)$. Second, define $\lambda(\bX,Z):=\ME[A\mid \bX, Z]$,  representing the conditional mean of the treatment $A$ given the instrumental variable $Z$ and covariates $\bX$. Finally, we let $\mu(\bX,Z):=\ME[Y\mid \bX, Z]$, which is the the conditional mean of the outcome $Y$ given the instrumental variable $Z$ and covariates $\bX$. We define the estimation errors of $\pi$ and $\mu$ based on a training set $D$ and bandwidth $h > 0$ as follows:
\[
    \begin{aligned}
      r_n(z_0) :=&\, \sup_{z \in \mathcal{Z}, |z-z_0|\leq h} \sqrt{\mathbb{E}_{\bX} \left[ \mathbb{E}_D \left(\hat{\pi}(z \mid \bX) - \pi(z \mid \bX)\right)^2 \right]} ,  \\
      s_n(z_0) :=&\, \sup_{z \in \mathcal{Z}, |z-z_0|\leq h} \sqrt{\mathbb{E}_{\bX} \left[ \mathbb{E}_D \left(\hat{\mu}(\bX,z) - \mu(\bX,z)\right)^2 \right]},
    \end{aligned} 
\]
which will be useful in characterizing how the estimation error depends on nuisance function estimation. Note that $r_n(z_0)$ and $s_n(z_0)$ measure the average estimation error over $\bX$, uniformly within an $h$-radius neighborhood centered at the target point $z_0$.
We often illustrate our results under the assumption that the nuisance functions are smooth. Mathematically, we say a function $f$ is $s$-smooth if it is $\lfloor s\rfloor$ times continuously differentiable with derivatives up to order $\lfloor s\rfloor$ bounded by some constant $L>0$ and $\lfloor s\rfloor$-order derivatives Hölder continuous, i.e.
\[
    \left|D^{\boldsymbol{\beta}} f(\bx)-D^{\boldsymbol{\beta}} f\left(\bx^{\prime}\right)\right| \leq L\left\|\bx-\bx^{\prime}\right\|_2^{s-\lfloor s\rfloor}
\]
for all $\boldsymbol{\beta} = (\beta_1,\dots, \beta_d)$ with $\sum_{i} \beta_i = \lfloor s\rfloor$, where $D^{\boldsymbol{\beta}}=\frac{\partial^{\boldsymbol{\beta}}}{\partial x_1^{\beta_1} \ldots \partial x_d^{\beta_d}}$ is the differential operator. \\

Finally, our work utilizes kernel-based estimators, so we introduce the necessary notation for kernel regression. Given a symmetric kernel function $K: \mathbb{R} \mapsto \mathbb{R}$ and a bandwidth parameter $h>0$, the localized kernel is defined as $K_h(z) = K(z/h)/h$. To capture the local curvature of target functions, we rely on high-order kernels or polynomial bases. We say a kernel $K$ is a $\ell$-th order kernel, for a positive integer $\ell$, if it satisfies $\int  K(u) d u = 1$ and 
    \[
    \begin{aligned}
         \int u^{j} K(u) d u = 0, \, 1\leq j \leq \ell, \ , \ \int |u|^{\ell} |K(u)| d u < \infty.
    \end{aligned}
    \]
    We denote the (rescaled) $p$-th order polynomial basis as $\bg_h(z) = (1,z/h, \dots, z^p/h^p)^{\top}$. \\

\subsection{Identification Assumptions}
\label{sec:causal-assumptions}

Next, we outline the assumptions necessary for identifying causal effects in the continuous IV design. First, we briefly review a set of assumptions that are standard in the instrumental variables literature \citep{angrist1996identification}:
\label{sec:assump}
\begin{assumption}[Consistency] \label{ass:consistency}
    $A = A^Z$ and $Y = Y^{ZA}$ almost surely.
\end{assumption}

\begin{assumption}[Positivity] \label{ass:positivity}
   $\pi(z \mid \bX)>0$ almost surely for $z \in \mathcal{Z}$.
\end{assumption}

\begin{assumption}[Unconfoundedness] \label{ass:UC}
    $Z \independent (A^z, Y^z) \mid \boldsymbol{X}$.
\end{assumption}

\begin{assumption}[Exclusion Restriction] \label{ass:ER}
    $Y^{za} = Y^a$ almost surely, for all $z \in \mathcal{Z}, a \in \mathcal{A}$.
\end{assumption}
 
Assumption~\ref{ass:consistency} says interventions on $Z$ and $A$ are uniquely defined and unaffected by other units’ interventions (i.e., there is no interference between subjects). Assumption~\ref{ass:positivity} implies that each unit has some chance of receiving each level of the instrument, regardless of covariate values. Assumption~\ref{ass:UC} states that conditional on measured covariates $\bX$, the instrument assignment is as-if randomized. The exclusion restriction implies that the effect of $Z$ on $A$ operates solely through $A$, meaning $Z$ has no direct effect on $Y$. See \citet{hernan2006instruments} and \citet{imbens2014instrumental} for detailed discussions, and \citet{baiocchi2014} for a broader introduction to the IV assumptions.
\\ 

For continuous IVs, Assumption~\ref{ass:positivity} requires additional consideration and discussion. With a binary instrument, positivity means that each subject in the population has a positive probability of receiving both possible instrument values. However, when $Z$ is multi-valued or continuous, positivity implies that each subject must have a positive conditional probability (or density) of receiving any $z \in \mathcal{Z}$, given their covariates. This requirement may be unrealistic if certain units in the data have no chance of being exposed to instrument values far from those they actually received. For approaches that relax the positivity assumption with continuous instruments see \citet{rakshit2024local}. \\

These assumptions are necessary but not sufficient for point identification. In the binary IV setting, monotonicity (i.e., the absence of defiers) is often invoked as an additional assumption that enables point identification of causal effects among the population of compliers \citep{imbens1994, imbens2014instrumental}. Generalizations of this monotonicity remain critical for identifying causal effects with a continuous IV, and we employ a version used in~\citet{kennedy2019robust}:

\begin{assumption}[Monotonicity] \label{ass:mono}
   If $z' > z$ then $A^{z'} \geq A^z$ almost surely.
\end{assumption}
\noindent This monotonicity assumption stipulates that higher values of the instrument can either encourage otherwise unexposed units to be exposed to treatment or have no effect at all. This implies that higher instrument values cannot discourage treatment exposure compared with lower values and there do not exist defiers in the population. It is important to note that \citet{glickman2000derivation} and \citet{vytlacil2002independence} demonstrated that this continuous version of the monotonicity assumption can equivalently be expressed as the following latent threshold model:
\begin{assumption}[Latent Threshold] \label{ass:thresh}
  $A^z = 1(z \geq T)$, for all $z \in \mathcal{Z}$, where $T \in [-\infty, \infty]$ is an unobserved random threshold. 
\end{assumption}
\noindent Assumption~\ref{ass:thresh} implies that each complier has a threshold instrument value---denoted $T$---above which they are exposed to the treatment. Large values of $T$ imply that it requires higher instrument values to encourage treatment exposure, suggesting that such units are inherently less inclined to receive treatment. \\

When the instrument is binary, under the monotonicity assumption, we can classify units into three principal strata: never-takers, always-takers, and compliers. In the continuous IV setting, $T$ defines these principal strata as follows:
\[
T= \begin{cases}-\infty & \text { if } A^z=1 \text { for all } z \text { (always-takers), } \\ \inf \left\{z: A^z=1\right\} & \text { if } A^{z^{\prime}}>A^z \text { for some } z^{\prime}>z \text { (compliers), } \\ \infty & \text { if } A^z=0 \text { for all } z \text { (never-takers). }\end{cases}
\]
It is straightforward to see that Assumption~\ref{ass:thresh} implies Assumption~\ref{ass:mono}; conversely, under Assumption~\ref{ass:mono} the above display can be seen as a definition of $T$ which satisfies Assumption~\ref{ass:thresh}. Readers are referred to \cite{vytlacil2002independence} for additional discussion on monotonicity and latent index models. \\

Finally, we require the following regularity condition for the latent threshold $T$.

\begin{assumption}[Instrumentation] \label{ass:instru}
    The latent threshold $T$ is continuously distributed with a positive density on the set of instrument values of interest $\mathcal{Z}_0$:
    \[
    p(z_0):=\lim_{h \rightarrow 0} \frac{\MP(T\leq z_0 +h)-\MP(T\leq z_0 )}{h} >0, z_0 \in \mathcal{Z}_0.
    \]
\end{assumption}
\noindent The instrumentation Assumption \ref{ass:instru} implies that there are some units who would be exposed to the treatment when the instrument reaches $Z=z_0$.  This condition is analogous to the relevance assumption in the canonical IV design. That is, the instrument must encourage some units to be exposed to treatment. As in the binary IV case, estimation challenges may arise if the instrument is weak, i.e., if it has a nonzero but minimal effect on exposure. We will see in the next section, the density of $T$ can be identified and estimated from the data, allowing for an assessment of the strength of the continuous IV. In this work, we do not consider extensions for scenarios with weak instruments (in the sense that Assumption \ref{ass:instru} is violated).

\subsection{Target Causal Estimands}
\label{sec:estimand}

\subsubsection{Local Instrumental Variable Curve}
\label{sec:liv-estimand}

The first estimand of interest is the marginalized LIV curve, which \citet{kennedy2019robust} defined as 
\begin{equation}\label{eq:LIV}
   \gamma(z_0) = \ME[Y^{a=1} - Y^{a=0}\mid T=z_0]. 
\end{equation}
The LIV curve is the causal effect among those who would be treated precisely when the instrument reaches or exceeds $Z=z_0$, but would not be exposed at lower values. Early research focused on a version of the LIV curve that is fully conditional on $\bX$ \citep{heckman1997instrumental,heckman1999local,heckman2005structural}. Here, we focus on a marginal version of the LIV curve averaged over any non-effect modifiers in $\bX$. Note that the LIV curve differs from the more conventional IV causal effect known as LATE. The LATE with a continuous instrument is defined, for any pair $z, z' \in \mathcal{Z}$, as:
\begin{equation}\label{eq:late-continuous}
    \text{LATE}(z,z')=\mathbb{E}\left(Y^{a=1}-Y^{a=0} \mid A^{z}>A^{z'}\right),
\end{equation}
which represents the effect among those who would take the treatment at $Z=z$ but not at $Z=z'$. 
See Appendix~\ref{appendix:mte} for a detailed comparison to the marginal treatment effect framework of \citet{heckman1999local, heckman2001policy, heckman2005structural}. \\

Under Assumptions \ref{ass:consistency}--\ref{ass:instru} and assuming $\gamma$ is a continuous function, \citet{kennedy2019robust} showed that the LIV curve and the density of the latent threshold $T$ can be identified as
\begin{equation}\label{eq:liv-identification}
    \gamma(z_0)=\left.\frac{\frac{\partial}{\partial z} \mathbb{E}\{\mathbb{E}(Y \mid \mathbf{X}, Z=z) \}}{\frac{\partial}{\partial z} \mathbb{E}\{\mathbb{E}(A \mid \mathbf{X}, Z=z) \}}\right|_{z=z_0}=\left.\frac{\frac{\partial}{\partial z} \mathbb{E}[\mu(\bX,z) ]}{\frac{\partial}{\partial z} \mathbb{E}[\lambda(\bX,z)]}\right|_{z=z_0}
\end{equation}
\begin{equation}
\label{eq:liv-identification-density}
    \lim_{h \rightarrow 0} \frac{\MP(T \leq z_0+h)-\MP(T \leq z_0)}{h} = \frac{\partial}{\partial z} \mathbb{E}\{\mathbb{E}(A \mid \mathbf{X}, Z=z) \} |_{z=z_0} = \frac{\partial}{\partial z} \mathbb{E}[\lambda(\bX,z)] |_{z=z_0}.
\end{equation}
The identification proof closely follows the approach used when $Z$ is binary. We should also note that the LIV curve is only defined for finite $z_0 \in \mathcal{Z}_0$, and we cannot identify effects for always-takers ($T=-\infty$) and never-takers ($T=+\infty$). Critically, the ratio-of-derivatives structure of the LIV curve makes nonparametric estimation particularly challenging. \citet{kennedy2019robust} assumed a parametric working model for $\gamma(z_0)$ and developed doubly robust methods for the parameters that minimize the weighted distance between $\gamma(z_0)$ and the working model. \\

Here, we develop a  nonparametric estimator for $\gamma(z_0)$ by separately estimating the derivatives on the numerator and denominator in Equation~\eqref{eq:liv-identification}. Specifically, the numerator $\theta(z_0):= \frac{\partial}{\partial z} \mathbb{E}\{\mathbb{E}(Y \mid \mathbf{X}, Z=z) \}|_{z=z_0}$ has the same structure as the derivative of the ``dose-response curve'' in \citet{kennedy2017non}. That is, under Assumptions~\ref{ass:consistency}--\ref{ass:UC}, $\mathbb{E}\{\mathbb{E}(Y \mid \mathbf{X}, Z=z_0)\} \equiv \mathbb{E}(Y^{z_0})$ can be interpreted as the causal effect of setting the instrument $Z$ to the ``dose'' $z \in \mathcal{Z}$ on the outcome $Y$. Thus we use the term \textit{dose-response curves} to refer to the following functions:
\[
\tau(z_0)\coloneqq \ME[\ME(Y \mid \bX, Z=z_0)] \text{ and }\delta(z_0) \coloneqq \ME[\ME(A \mid \bX, Z=z_0)], \, z_0 \in \mathcal{Z}_0.
\]
The term $\delta(z_0)$ can be similarly interpreted as $\mathbb{E}(A^{z_0})$ under Assumptions~\ref{ass:consistency}--\ref{ass:UC}. Below, without loss of generality, we describe estimation of $\tau(z_0)$ and its derivative, since estimation of $\delta(z_0)$ proceeds analogously with $A$ replacing $Y$. 

As shown in equation~\eqref{eq:liv-identification}, the \textit{derivatives} of the dose-response curves $\tau(z_0)$ and $\delta(z_0)$ are components of the LIV curve. In practice,  these quantities can be independently informative as well; for example, the derivative of $\tau$ provides insight into whether practitioners should increase or decrease $z$ to maximize the average outcome locally. The derivative of $\delta$ can also be interpreted as the density of the latent threshold $T$, as shown in equation~\eqref{eq:liv-identification-density}. Such quantities have also been studied in other works in the literature \citep{colangelo2020double, bong2023local,zhang2025doubly}. Notably, \cite{zhang2025doubly} recently proposed a doubly robust estimator for the derivative of the dose-response curve and extended it to settings with positivity violations. However, their approach relies on modeling the partial derivative of the
outcome model, which can be challenging when the covariates are high-dimensional. 
In Sections \ref{sec:loc_poly} and \ref{sec:smoothing}, we propose doubly robust methods for estimating the derivative of the dose-response curve that circumvent the need to model the partial derivative. Our analysis  extends to general smooth nuisance functions, ensuring greater flexibility and robustness in practical applications. Importantly, we establish a connection between dose-response derivative estimation and LIV curve estimation, highlighting the close relationship between dose-response estimation and treatment effect estimation with a continuous IV. \\

\subsubsection{Maximal Complier Class and Local Average Treatment Effects}
\label{sec:maximal-complier}

Next, we outline an estimand that is particularly relevant to IV designs with a continuous instrument. First, we assume there is a valid bounded instrumental variable $Z \in [0,1]$. Of particular interest is how many people in the study population could possible be encouraged to take the treatment by increasing the instrument from its minimum to its maximum? In formal terms, to answer this question, we are interested in what we call the maximal complier class proportion:
 \[
\MP(A^1 > A^0).
\]
Maximality of the compliance class $\{A^1 > A^0\}$, relative to $\{A^z > A^{z'}\}$ for arbitrary $z > z'$, is implied by the monotonicity assumption, and for binary instruments $\MP(A^1 > A^0)$ is referred to as the strength of the instrument under monotonicity. Of obvious interest is the treatment effect within this maximal complier class, since this is the subpopulation whose treatment status can be influenced by changes in the instrumental variable $Z$. Formally, the objective is to identify and estimate the LATE in this group:
\[
\mathbb{E}\left[Y^{a=1}-Y^{a=0} \mid A^{z=1}>A^{z=0}\right]
\]

\noindent Notably, the LATE among the maximal complier class is a special case of $\text{LATE}(z,z')$ with $z=1,z'=0$. Under Assumptions \ref{ass:consistency}--\ref{ass:mono} and Assumption~\ref{ass:instru}, it is the case that the relevance assumption $\MP(A^1 = A^0) < 1$ holds. Moreover, an identical argument to that in \cite{angrist1996identification} proves that the proportion of the maximal complier class can be identified as
$$
\label{eq:maximal-complier}
\MP(A^1 > A^0) = \ME[\ME(A \mid \bX, Z=1)] -\ME[\ME(A \mid \bX, Z=0)]=\ME[\lambda(\bX,1)-\lambda(\bX,0)].
$$
and the LATE can be identified as
\begin{equation}
\label{eq:eff-max-complier}
\begin{aligned}
    \mathbb{E}&\left(Y^{a=1}-Y^{a=0} \mid A^{z=1}>A^{z=0}\right)\\
    =&\, \frac{\mathbb{E}[\mathbb{E}(Y \mid \bX, Z=1)-\mathbb{E}(Y \mid \bX, Z=0)]}{\mathbb{E}[\mathbb{E}(A \mid \bX, Z=1)-\mathbb{E}(A \mid \bX, Z=0)]} =      \frac{\mathbb{E}[\mu(\bX,1)-\mu(\bX,0)]}{\mathbb{E}[\lambda(\bX,1)-\lambda(\bX,0)]}  .
\end{aligned}
\end{equation}
This estimand is comprised of the terms $\tau(z)$ and $\delta(z)$ for $z=0,1$. Both of these terms have the same structure as the dose-response curve evaluated at the boundary \citep{kennedy2017non, schindl2024incremental}. In Section \ref{sec:dose-res-boundary}, we study the estimation of the dose-response function at the boundary to assess treatment effects among the maximal complier class. At first glance, the expression for the treatment effect among the maximal complier class in \eqref{eq:eff-max-complier} may appear more complex and difficult to estimate than the LIV curve in \eqref{eq:liv-identification}. However, as our analysis in the following sections reveals, the LIV curve is actually more challenging to estimate due to its reliance on derivatives, which leads to slower convergence rates. \\

\subsection{A Framework for Doubly Robust Dose-response Function Estimation}\label{sec:framework}

Finally, we outline the doubly robust framework we use to derive our estimators. One approach to estimation would be to use plug-in estimators. For example, the formula $\tau(z_0)= \mathbb{E}\{\mathbb{E}(Y \mid \mathbf{X}, Z=z) \}|_{z=z_0} =  \mathbb{E}[\mu(\bX,z)]|_{z=z_0}$ suggests the following plug-in estimator
\[
\hat{\theta}(z_0) = \MP_n\left[\hat{\mu}(\bX,z_0)\right],
\]
where $\hat{\mu}$ is an estimator for the outcome model $\mu$. However, plug-in-style estimators often suffer from bias due to nuisance estimation error, since the accuracy of the plug-in estimator depends on the estimation error in $\hat{\mu}$. When $\mu$ is difficult to estimate—such as when no prior knowledge of its parametric form is available or when it is non-smooth---the plug-in estimator will inherit the bias in $\hat{\mu}$, leading to suboptimal performance. Here, we say the plug-in style estimator has first-order bias, since it will inherit any bias present in the estimates of the nuisance functions such as $\hat{\mu}$. First order bias may result in not achieving optimal rates of convergence or asymptotic normality. \\

One alternative is to use influence function (IF) based estimation \citep{bickel1993efficient, kennedy2024semiparametric}. IF based estimation allows researchers to construct estimators that are doubly robust and have second-order bias. Such estimators yield fast parametric convergence rates even when nuisance functions are estimated at slower rates with machine learning methods. However, the dose-response function and its derivative considered in this work are not pathwise-differentiable \citep{diaz2013targeted, kennedy2017non}, preventing the direct application of standard IF approaches. To address this challenge, we apply efficiency theory to smoothed functionals of the dose-response function, summarized as follows. Specifically, to estimate the dose-response function 
\[
\tau(z_0) = \mathbb{E}[\mathbb{E}(Y \mid \bX,Z=z_0)], \quad z_0 \in \mathcal{Z}_0,
\]
consider the following weighted least-squares problem:
\begin{equation}\label{eq:wls}
    \min_{\boldsymbol{\beta}} \int   
K_h(z-z_0) \left(\tau(z)-\bg_h^{\top}(z-z_0) \boldsymbol{\beta}\right)^2 w(z) dz,
\end{equation}
where $K_h$ is a kernel function that puts more weight to points closer to $z_0$, $\bg_h(z-z_0)$ is the rescaled local basis, and $w$ is a weight function. 
Denoting the optimal solution as 
\begin{equation}\label{eq:wls-sol}
    \boldsymbol{\beta}_{wh}^*(z_0):=\left(\int \bg_h(z-z_0)K_h(z-z_0)\bg_h^{\top}(z-z_0)w(z) dz \right)^{-1} \int  \bg_h(z-z_0)K_h(z-z_0)\tau(z)w(z)dz,
\end{equation}
where we assume the matrix $\int \bg_h(z-z_0)K_h(z-z_0)\bg_h^{\top}(z-z_0)w(z) dz$ is invertible. We can interpret $\bg_h^{\top}(0)\boldsymbol{\beta}_{wh}^*(z_0)$ as a locally weighted projection of $\tau$ around $z_0$. Since this parameter is often pathwise differentiable, influence function-based approaches can be applied. This approximation technique has been applied in various contexts, including dose-response function estimation \citep{branson2023causal}, IV-based bounds on causal effects \citep{levis2023covariate}, and heterogeneous treatment effects estimation \citep{kennedy2024minimax}.
By combining the approximation error of $\bg_h^{\top}(0)\boldsymbol{\beta}_{wh}^*(z_0)$ with the properties of the influence function-based estimator, we can establish its estimation guarantees, including error bounds and asymptotic normality. The estimation error of these estimators depends on the product of nuisance estimation rates, making them more robust to nuisance estimation errors. All our estimators are derived within this framework, and we specify the particular choices of $\bg,w$ when discussing each estimator in the following sections.

\section{Dose-response Estimation at the Boundary}
\label{sec:dose-res-boundary}

In Section \ref{sec:maximal-complier}, we demonstrated how to reduce the problem of estimating the 
local treatment effect among the maximal complier class (and the maximal complier class proportion) to two separate dose-response estimation problems on the boundary of their supports. There are many existing methods for estimating the dose-response functions in the literature \citep{diaz2013targeted,semenova2021debiased,kennedy2017non, branson2023causal}. Notably, \cite{kennedy2017non} proposed a regression-based estimator for the dose-response function. Specifically, to estimate the function $\tau(z_0)=\ME[\mu(\bX,z_0)]$, we construct the following pseudo-outcome:
\[
\xi(\bO; \bar{\pi}, \bar{\mu}) := \frac{Y-\bar{\mu}(\mathbf{X}, Z)}{\bar{\pi}(Z \mid \mathbf{X})} \int_{\mathcal{X}} \bar{\pi}(Z \mid \mathbf{x}) d \MP(\mathbf{x})+\int_{\mathcal{X}} \bar{\mu}(\mathbf{x}, Z) d \MP(\mathbf{x}),
\]
where $\bar{\pi},\bar{\mu}$ are functions that may differ from the true propensity score $\pi$ and regression function $\mu$.
\cite{kennedy2017non} showed that
\[
\ME[\xi(\bO; \bar{\pi}, \bar{\mu})\mid Z=z]|_{z=z_0} = \tau(z_0)
\]
if either $\bar{\pi}=\pi$ or $\bar{\mu}=\mu$. Hence as long as either the propensity score 
or the outcome model is correctly specified, regressing $\xi(\bO; \bar{\pi}, \bar{\mu})$ on $Z$ yields the dose-response function $\tau$. This motivates Algorithm \ref{alg:DR-dose-response} in the Appendix for estimating the dose-response function \citep{bonvini2022fast} and its derivative via local polynomial regression, which will be useful in the next section. \\

We now demonstrate how equation \eqref{eq:wls} connects to Algorithm \ref{alg:DR-dose-response}. Let $\bg_h$ be the local polynomial basis and $w$ be the marginal density of $Z$. Then, the solution \eqref{eq:wls-sol} simplifies to
\[
\boldsymbol{\beta}_{wh}^*(z_0):=\left(\mathbb{E}[ \bg_h(Z-z_0)K_h(Z-z_0)\bg_h^{\top}(Z-z_0)] \right)^{-1} \mathbb{E}[\bg_h(Z-z_0)K_h(Z-z_0)\tau(Z)] ,
\]
which corresponds to the population version of the local polynomial coefficient estimator:
\[
\hat{\boldsymbol{\beta}}_{wh}(z_0):=\left(\mathbb{P}_n[ \bg_h(Z-z_0)K_h(Z-z_0)\bg_h^{\top}(Z-z_0)] \right)^{-1} \mathbb{P}_n[\bg_h(Z-z_0)K_h(Z-z_0)\xi(\bO)] .
\]
Here, $\xi(\bO)$ is the pseudo-outcome introduced in \cite{kennedy2017non}. We  show that $\hat{\boldsymbol{\beta}}_{wh}(z_0)$ is centered around $\boldsymbol{\beta}_{wh}^*(z_0)$. 
Thus, the local polynomial estimator of the dose-response function effectively estimates the smoothed function $\bg_h(0)^\top \boldsymbol{\beta}_{wh}^*(z_0)$, which corresponds to the first component of $\boldsymbol{\beta}_{wh}^*(z_0)$. Our local polynomial estimator in Section \ref{sec:loc_poly} further extends this idea, using the second component of $\boldsymbol{\beta}_{wh}^*(z_0)$ as an approximation for the derivative of the dose-response function. \\

\cite{kennedy2017non} proved that the error contribution from nuisance function estimation is second-order (i.e., in the form of a product of the convergence rates of $\hat{\mu}$ and $\hat{\pi}$). See also \cite{bonvini2022fast} for further discussion and a high-order estimator for the dose-response curve. However, these results apply only when $z_0$ is an interior point of the support $\mathcal{Z}$. Estimating the proportion of the maximal complier class and the treatment effects within this class requires evaluating the dose-response curve at the boundary. \\

In the regression function estimation literature, most regression smoothers exhibit slower convergence rates at boundary points than at interior points, a phenomenon known as ``boundary effects" \citep{gasser1979kernel}. Near boundaries, there tend to fewer data points available leading to less stable estimates and increased variability. Various methods have been proposed to address estimation issues at boundaries \citep{fan1992variable, muller1993boundary, gasser1985kernels, ruppert1994multivariate}. Notably, the local polynomial estimator adapts naturally to boundaries by fitting a higher-degree polynomial at boundary points, eliminating the need for additional boundary modifications \citep{fan1992variable, ruppert1994multivariate}. Given that the dose-response estimation problem can be framed as a regression problem, we show that local polynomial estimators also adapt to boundaries in dose-response estimation. In the following discussion, we assume $\mathcal{Z}=[0,1]$ and focus on estimating $\tau(z_0)$ for $z_0 = ch$, where $0 \leq c < 1$ (i.e., the point $z_0$ is on the left boundary).  \\

The following theorem establishes the consistency of the local polynomial estimator $\hat{\tau}(z_0)$. \\

\begin{theorem}\label{thm:LP-boundary}
    Assume the nuisance functions, their estimates, and the outcome satisfy $\epsilon \leq \pi, \hat{\pi} \leq C, |Y|,|\mu|\leq C$. The kernel is a bounded probability density supported on $[-1,1]$ with the bandwidth satisfying $h \rightarrow 0, nh \rightarrow \infty$ as $n \rightarrow \infty$. Then for the local polynomial estimator $\hat{\tau}$ evaluated at the left boundary $z_0 = ch$ for a constant $c \in [0,1)$, we have 
    \[
    \hat{\tau}(z_0) - \tau(z_0) = \tilde{\tau}(z_0) -\tau(z_0) +R_1 +R_2 ,
    \]
    \[
        R_1=O_{\MP}\left(\frac{1}{\sqrt{n^2h}}+ \frac{1}{\sqrt{nh}} \max \left\{ r_n(z_0),s_n(z_0)\right\}\right),
    \]
    \[
    R_2 = O_{\MP} \left(\frac{1}{\sqrt{n} }+ r_n(z_0)s_n(z_0) \right),
    \]
    where $\tilde{\tau}$ is the ``oracle" estimator obtained by regressing the true pseudo-outcome $\xi$ on $Z$. As a consequence, if we assume $\tau$ is $\gamma$-smooth for $\gamma \in \mathbb{N}_+$, $ \lim_{z \rightarrow 0^+}f(z) >0$ and $f, \tau^{(\gamma)}, \sigma^2$ are right continuous at $z=0$, then for $p=\lfloor \gamma\rfloor $ we have
    \[
    \hat{\tau}(z_0) - \tau(z_0)=O_{\MP}\left (h^{\gamma} + \frac{1}{\sqrt{nh}} + r_n(z_0)s_n(z_0) \right).
    \]
\end{theorem}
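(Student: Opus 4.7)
The plan is to split the estimation error into an oracle part and a nuisance-induced part,
\[
\hat{\tau}(z_0) - \tau(z_0) = \bigl[\tilde{\tau}(z_0) - \tau(z_0)\bigr] + \bigl[\hat{\tau}(z_0) - \tilde{\tau}(z_0)\bigr],
\]
where $\tilde{\tau}$ is the oracle local polynomial regression of the true pseudo-outcome $\xi(\bO;\pi,\mu)$ on $Z$. The first, oracle, difference will be handled by standard local polynomial theory at the boundary; the second captures all error from plugging in $\hat{\pi}$ and $\hat{\mu}$, and is the focus of the analysis producing $R_1$ and $R_2$.

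Since $\hat{\tau}$ and $\tilde{\tau}$ share the same design matrix $\hat{\bD}:=\mathbb{P}_n[\bg_h(Z-z_0)K_h(Z-z_0)\bg_h^{\top}(Z-z_0)]$, I would write the nuisance contribution in the linear form
\[
\hat{\tau}(z_0) - \tilde{\tau}(z_0) = \bg_h^{\top}(0)\,\hat{\bD}^{-1}\, \mathbb{P}_n\!\bigl[\bg_h(Z-z_0) K_h(Z-z_0)\{\hat{\xi}(\bO) - \xi(\bO)\}\bigr].
\]
A preliminary step is to show $\hat{\bD}$ converges in probability to a matrix $\bD^{*}$ that is invertible even at the boundary: the truncated kernel moments $\int_{-c}^{1} u^j K(u)\,du$ form a nondegenerate Gram matrix for any $c\in[0,1)$, so under $f(0^+)>0$ one has $\hat{\bD}^{-1}=O_{\MP}(1)$. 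This reduces the problem to bounding the stochastic linear functional $\mathbb{P}_n[\bg_h K_h (\hat{\xi} - \xi)]$.

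I would then decompose this functional into its centered empirical process and its conditional mean given the training sample $D$. Under cross-fitting, conditional on $D$ the nuisance estimators are deterministic, the summands are i.i.d., and the conditional variance of $K_h(Z-z_0)(\hat{\xi}-\xi)$ is bounded by $h^{-1}\{r_n^2(z_0)+s_n^2(z_0)\}$ from the kernel scaling combined with the nuisance errors, plus an additional $n^{-1}$ contribution from the sample-average estimates of $\int \hat{\pi}(z\mid\bx)\,d\MP(\bx)$ and $\int \hat{\mu}(\bx,z)\,d\MP(\bx)$ used inside $\hat{\xi}$. Chebyshev's inequality then yields $R_1=O_{\MP}\bigl((n^2h)^{-1/2}+(nh)^{-1/2}\max\{r_n(z_0),s_n(z_0)\}\bigr)$. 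For the conditional mean $\MP[\bg_h K_h(\hat{\xi}-\xi)]$, I would exploit the doubly robust structure: at each fixed $z$, algebraic manipulation of $\ME[\hat{\xi}(\bO)-\xi(\bO)\mid Z=z]$ writes it as a weighted product of $(\hat{\pi}-\pi)(z\mid\cdot)$ and $(\hat{\mu}-\mu)(\cdot,z)$, plus an $O_{\MP}(n^{-1/2})$ term from replacing the true marginal over $\bX$ by its empirical analogue. Applying Cauchy--Schwarz uniformly over $|z-z_0|\leq h$ then gives $R_2=O_{\MP}(n^{-1/2}+r_n(z_0)s_n(z_0))$.

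For the consequence under $\gamma$-smoothness, the oracle error admits the classical local polynomial expansion at the boundary: Taylor expansion of $\tau$ to order $p=\lfloor\gamma\rfloor$ combined with the truncated kernel moments yields bias $O(h^\gamma)$, while right-continuity of $f$ and $\sigma^2$ at $0$ together with boundedness of $Y$ yields variance $O((nh)^{-1})$, giving an oracle error of order $h^\gamma+(nh)^{-1/2}$. Combining with $R_1,R_2$ and absorbing the lower-order pieces $1/\sqrt{n^2h}$, $(nh)^{-1/2}\max(r_n,s_n)$ and $n^{-1/2}$ into $(nh)^{-1/2}$ (using $r_n,s_n=o(1)$ and $h\to 0$) yields the advertised rate $O_{\MP}(h^\gamma+(nh)^{-1/2}+r_n(z_0)s_n(z_0))$. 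The main obstacle I foresee is the uniform-in-$z$ control of the conditional-mean bias: the doubly robust algebra for $\hat{\xi}-\xi$ involves ratios like $\hat{\pi}(z\mid\cdot)/\pi(z\mid\cdot)$ and differences of marginal integrals that must be controlled across an entire $h$-window of $z$-values, which is exactly why the supremum form of $r_n(z_0)$ and $s_n(z_0)$ appears in the statement and requires careful pointwise bookkeeping of how nuisance errors propagate.
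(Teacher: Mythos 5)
Your proposal is correct and follows essentially the same route as the paper: the same oracle/nuisance decomposition with the linear representation through $\hat{\bD}_{hz_0}^{-1}$, invertibility at the boundary via the truncated kernel moments $\int_{-c}^{1}u^{j}K(u)\,du$, the split of the nuisance term into a centered empirical-process piece (bounded by Chebyshev/$L_2$-norm arguments under sample splitting, giving $R_1$) and a conditional-bias piece handled by the doubly robust product structure plus the $O_{\MP}(n^{-1/2})$ empirical-marginal terms (giving $R_2$), and the classical boundary local polynomial bias/variance expansion for the oracle rate $h^{\gamma}+(nh)^{-1/2}$. No gaps worth flagging.
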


\bigskip

In the error decomposition of $\hat{\tau}$, the term $h^{\gamma} + \frac{1}{\sqrt{nh}}$ represents the oracle rate for estimating a $\gamma$-smooth function, while the remaining term captures the product of convergence rates for nuisance parameter estimation. 
Theorem \ref{thm:LP-boundary} shows that the local polynomial estimator achieves the same convergence rate for estimating the dose-response function at boundary points as it does at interior points, demonstrating that it automatically adapts to boundaries in dose-response estimation problems. Our results also extend those of \citep{kennedy2017non} to the general class of smooth functions using local polynomial estimators. To illustrate the final rate, we impose the following smoothness assumptions on the nuisance parameters. \\

\begin{assumption}[Smoothness]
    Assume $\pi, \mu, \tau$ belong to H\"{o}lder smooth function class:
    \label{ass:smoothness}
    \begin{itemize}
        \item $\pi$ is $\alpha$-smooth.
        \item $\mu$ is $\beta$-smooth in $\bx$ and $\gamma$-smooth in $z$.
        \item $\tau$ is $\gamma$-smooth.
    \end{itemize}
    And $\pi, \mu$ are estimated at corresponding minimax rates in the sense that
    \[
    r_n(z_0)  \asymp  n^{-\frac{1}{2+\frac{d+1}{\alpha}}},\,s_n(z_0)   \asymp n^{-\frac{1}{2+\frac{1}{\gamma}+\frac{d}{\beta}}}.
    \]
\end{assumption}

\bigskip

\noindent Note that the smoothness of $\gamma$ matches that of $\mu$ in the direction of $z$, as the smoothness of $\gamma$ can be inferred from that of $\mu$ under mild conditions. Under the smoothness assumptions specified in Assumption \ref{ass:smoothness}, we obtain the following estimation rate for the local polynomial estimator $\hat{\tau}$. \\

\begin{theorem}\label{thm:dose-response-smooth}
    Under conditions in Theorem \ref{thm:LP-boundary} and further assume Assumption \ref{ass:smoothness}, we have
    \[
\hat{\tau}(z_0)-\tau(z_0) =
\begin{cases} 
    O_{\MP}\left(  n^{-\frac{\gamma}{2\gamma+1}} \right), & \text{if } \frac{d/\beta}{(2+1/\gamma)(2+1/\gamma+d/\beta)} \leq \frac{\alpha}{2\alpha+d+1}, \\
    O_{\MP} \left( n^{-\left( \frac{1}{2+\frac{1}{\gamma}+\frac{d}{\beta}}+ \frac{1}{2+\frac{d+1}{\alpha}} \right)} \right), & \text{if } \frac{d/\beta}{(2+1/\gamma)(2+1/\gamma+d/\beta)} > \frac{\alpha}{2\alpha+d+1}.
\end{cases}
\]
\end{theorem}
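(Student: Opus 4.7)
The plan is to chain the nonparametric error bound from Theorem~\ref{thm:LP-boundary} with the minimax rate specifications in Assumption~\ref{ass:smoothness}, then optimize the bandwidth to balance bias and variance, and finally compare the resulting oracle rate to the product nuisance rate. Concretely, the second conclusion of Theorem~\ref{thm:LP-boundary} already gives
\[
\hat{\tau}(z_0) - \tau(z_0) = O_\MP\!\left(h^{\gamma} + \frac{1}{\sqrt{nh}} + r_n(z_0)\, s_n(z_0)\right),
\]
so the overall rate is the maximum of the oracle rate $h^{\gamma} + (nh)^{-1/2}$ and the doubly-robust product $r_n(z_0)\, s_n(z_0)$. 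No new probabilistic argument is needed; the task is essentially to optimize $h$ and compare exponents.

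First, I would optimize $h$ by balancing the two oracle terms. The classical choice $h \asymp n^{-1/(2\gamma+1)}$ equates $h^{\gamma}$ and $(nh)^{-1/2}$, yielding the standard nonparametric rate $n^{-\gamma/(2\gamma+1)}$ for estimating a $\gamma$-smooth regression function. Next, plugging the rates from Assumption~\ref{ass:smoothness} into the product term gives
\[
r_n(z_0)\, s_n(z_0) \asymp n^{-\left(\frac{1}{2+(d+1)/\alpha} \,+\, \frac{1}{2+1/\gamma+d/\beta}\right)}.
\]

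The final step is to determine which of these two rates dominates. The key algebraic identity is $\gamma/(2\gamma+1) = 1/(2+1/\gamma)$. The condition for the oracle rate to dominate is therefore
\[
\frac{1}{2+1/\gamma} \;\leq\; \frac{1}{2+(d+1)/\alpha} + \frac{1}{2+1/\gamma+d/\beta}.
\]
Subtracting the second fraction on the right from the left and combining over a common denominator yields
\[
\frac{d/\beta}{(2+1/\gamma)(2+1/\gamma+d/\beta)} \;\leq\; \frac{1}{2+(d+1)/\alpha} \;=\; \frac{\alpha}{2\alpha+d+1},
\]
which is exactly the dichotomy stated in Theorem~\ref{thm:dose-response-smooth}. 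The two cases then follow by taking the maximum of the oracle and product rates.

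The main obstacle is really only bookkeeping: Theorem~\ref{thm:LP-boundary} has already reduced the problem to comparing two explicit polynomial-in-$n$ rates, so all that remains is the careful algebraic rearrangement of the threshold inequality into the form given in the statement, and writing the bandwidth choice in a way that produces the stated rate regardless of whether the oracle or doubly-robust term is the binding one. No further smoothness or stability arguments beyond those already invoked in Theorem~\ref{thm:LP-boundary} are required.
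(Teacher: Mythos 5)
Your proposal is correct and follows essentially the same route as the paper: it takes the bound $h^{\gamma} + (nh)^{-1/2} + r_n(z_0)s_n(z_0)$ from Theorem~\ref{thm:LP-boundary}, substitutes the minimax nuisance rates from Assumption~\ref{ass:smoothness}, sets $h \asymp n^{-1/(2\gamma+1)}$, and compares the oracle rate with the product rate via the identity $\gamma/(2\gamma+1)=1/(2+1/\gamma)$, which is exactly the paper's argument (the paper states the equivalent threshold inequality in the same rearranged form you derive). Since the nuisance product term here does not depend on $h$, the single bandwidth choice suffices in both regimes, just as in the paper.
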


\bigskip

\noindent Theorem \ref{thm:dose-response-smooth} shows that the final rate of $\hat{\tau}$ depends on the relationship among the smoothness parameters $\alpha, \beta, \gamma$. In the oracle regime
\begin{equation}\label{eq:oracle-nuisance1}
    \frac{d/\beta}{(2+1/\gamma)(2+1/\gamma+d/\beta)} \leq \frac{\alpha}{2\alpha+d+1},
\end{equation}
the nuisance functions can be estimated at sufficiently fast rates, allowing $\hat{\tau}$ to achieve the oracle rate for estimating a univariate $\gamma$-smooth function. In the alternative regime, the nuisance estimation error dominates; therefore, $\hat{\tau}$ inherits the slow convergence rates of the nuisance estimation and cannot achieve the oracle rate.

\section{A Local Polynomial Estimator for Derivative Estimation}
\label{sec:loc_poly} 

Since the LIV curve is identified as the ratio of derivatives of dose response curves, we next develop a local polynomial-based derivative estimator. Here, the derivative is estimated by the local scope of the fitted polynomial. Mathematically, since we can express the derivative function as 
\[
\theta(z_0) = \tau'(z_0), \, \tau(z_0)=\ME[\xi(\bO)\mid Z=z_0],
\]
after solving the following ``oracle" local polynomial optimization problem:
\[
\tilde{\boldsymbol{\beta}}_h(z_0)=\underset{\boldsymbol{\beta} \in \mathbb{R}^{p+1}}{\arg \min} \, \mathbb{P}_n\left[K_{h }(Z-z_0)\left\{{\xi}(\bO )-\bg_{h }(Z-z_0)^{\mathrm{T}} \boldsymbol{\beta}\right\}^2\right],
\]
the ``oracle" estimator for $\theta(z_0)$ is then given by $\tilde{\theta}(z_0)= \be_2^{\top} \tilde{\boldsymbol{\beta}}_h(z_0)/h$. However, this estimator is not feasible since the pseudo-outcome $\xi$ is not directly observed and needs to be estimated in the first stage. Following a similar approach to dose-response estimation, we first estimate the nuisance functions to impute the pseudo-outcome $\xi$, and then apply a local polynomial regression to estimate $\theta$, as detailed in Algorithm \ref{alg:DR-dose-response}. The following lemma characterizes the difference between $\hat{\theta}$ and its oracle counterpart $\tilde{\theta}$.

\begin{lemma}\label{lemma:LP-oracle}
    Assume the nuisance functions, their estimates, and the outcome satisfy $\epsilon \leq \pi, \hat{\pi} \leq C, |Y|,|\mu|\leq C$. The kernel is a bounded probability density supported on $[-1,1]$ with the bandwidth satisfing $h \rightarrow 0, nh \rightarrow \infty$ as $n \rightarrow \infty$. Then for an interior point $z_0 \in \mathcal{Z}$ we have
    \[
    \hat{\theta}(z_0) - \theta(z_0) = \tilde{\theta}(z_0) -\theta(z_0) +R_1 +R_2 ,
    \]
    \[
    R_1=O_{\MP}\left(\frac{1}{\sqrt{n^2h^3}}+ \frac{1}{\sqrt{nh^3}} \max \{r_n(z_0),s_n(z_0)\} \right),
    \]
    \[
    R_2 =O_{\MP}\left( \frac{1}{\sqrt{nh^2} }+ \frac{1}{h}r_n(z_0)s_n(z_0)\right).
    \]
\end{lemma}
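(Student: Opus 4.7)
\textbf{Proof plan for Lemma~\ref{lemma:LP-oracle}.}

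Since $\hat{\boldsymbol{\beta}}_h$ and $\tilde{\boldsymbol{\beta}}_h$ share the same (empirical) weighted design matrix $\hat{\bD}_n := \mathbb{P}_n[\bg_h(Z-z_0) K_h(Z-z_0)\bg_h^{\top}(Z-z_0)]$, the first step is to write the basic algebraic identity
\[
\hat{\theta}(z_0) - \tilde{\theta}(z_0) \;=\; \frac{1}{h}\,\be_2^{\top} \hat{\bD}_n^{-1}\, \mathbb{P}_n\!\left[\bg_h(Z-z_0) K_h(Z-z_0)\bigl(\hat{\xi}(\bO) - \xi(\bO)\bigr)\right].
\]
I would then quickly dispatch $\hat{\bD}_n$ via standard kernel-moment arguments: under the kernel and bandwidth conditions, $\hat{\bD}_n$ concentrates around a deterministic invertible matrix $\bD_*$ at rate $O_{\MP}(1/\sqrt{nh})$, so $\hat{\bD}_n^{-1} = \bD_*^{-1} + O_{\MP}(1/\sqrt{nh})$ in operator norm, and $\be_2^{\top}\bD_*^{-1}$ is an $O(1)$ row vector.

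Next, invoke the sample-splitting convention: $\hat{\pi},\hat{\mu}$ are trained on an independent fold $D$, so conditional on $D$ the only randomness in $\hat{\xi}-\xi$ is through $(\bX,Z,Y)$. Decompose
\[
\mathbb{P}_n\bigl[\bg_h K_h (\hat{\xi}-\xi)\bigr] \;=\; \PP\bigl[\bg_h K_h (\hat{\xi}-\xi)\,\big|\, D\bigr] + (\mathbb{P}_n-\PP)\bigl[\bg_h K_h (\hat{\xi}-\xi)\bigr],
\]
and handle the two pieces separately. For the bias piece, use the doubly robust identity from \citet{kennedy2017non}: $\ME[\xi(\bO;\bar{\pi},\bar{\mu})\mid Z=z] - \tau(z)$ vanishes when either $\bar{\pi}=\pi$ or $\bar{\mu}=\mu$, so a first-order expansion shows $\ME[\hat{\xi}-\xi\mid Z=z, D]$ is bilinear in $(\hat{\pi}-\pi,\hat{\mu}-\mu)$. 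Bounding the bilinear form by Cauchy--Schwarz, together with the uniform definitions of $r_n(z_0), s_n(z_0)$ over the $h$-neighborhood of $z_0$, gives $|\ME[\hat{\xi}-\xi\mid Z=z,D]|\lesssim r_n(z_0)s_n(z_0)$ for $|z-z_0|\leq h$. Integrating against $\bg_h K_h f$ (which has total mass of order one) yields a bias contribution of order $r_n(z_0)s_n(z_0)$, and after the $1/h$ derivative scaling this furnishes the $\frac{1}{h}r_n(z_0)s_n(z_0)$ summand in $R_2$.

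For the empirical process piece, compute the conditional variance
\[
\Var\!\left((\mathbb{P}_n-\PP)[\bg_h K_h(\hat{\xi}-\xi)]\,\big|\, D\right) \leq \frac{1}{n}\,\ME\!\bigl[\|\bg_h\|^2 K_h^2 (\hat{\xi}-\xi)^2\bigr].
\]
The components of $\bg_h$ are uniformly bounded on $\operatorname{supp}(K_h)$; a change of variables $u=(z-z_0)/h$ converts $K_h^2$ into $h^{-1}K^2(u)/h$, so the right-hand side is bounded by $\tfrac{1}{nh}\,\sup_{|z-z_0|\leq h}\ME[(\hat{\xi}-\xi)^2\mid Z=z,D]$. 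Using $\epsilon\leq \pi,\hat{\pi}\leq C$ and $|Y|,|\mu|\leq C$ to expand the pseudo-outcome difference, this conditional second moment is at most a constant times $\max\{r_n(z_0),s_n(z_0)\}^2$ plus the $O(1/n)$ error from estimating the marginal integrals $\int\pi(z\mid\bx)dP(\bx),\int\mu(\bx,z)dP(\bx)$ by sample averages. Chebyshev then yields $(\mathbb{P}_n-\PP)[\bg_h K_h(\hat{\xi}-\xi)] = O_{\MP}\!\bigl(\tfrac{1}{\sqrt{nh}}\max\{r_n,s_n\} + \tfrac{1}{\sqrt{nh}}\cdot\tfrac{1}{\sqrt{n}}\bigr)$, and after the $1/h$ scaling the two summands become the $R_1$ terms $\tfrac{1}{\sqrt{nh^3}}\max\{r_n,s_n\}$ and $\tfrac{1}{\sqrt{n^2h^3}}$. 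The analogous finite-sample approximation error for the marginal integrals, propagated through the bias expansion, yields the $\tfrac{1}{\sqrt{nh^2}}$ summand in $R_2$. Finally, the remainder from $\hat{\bD}_n^{-1}-\bD_*^{-1}=O_{\MP}(1/\sqrt{nh})$ multiplying the bounds above produces strictly lower-order cross terms, absorbed into the $O_{\MP}$ statement.

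The main obstacle will be the careful bookkeeping of the $h$-powers: both the $1/h$ derivative rescaling and the $1/h$ inside $K_h$ must be tracked consistently, and the doubly robust bias cancellation has to be stated precisely enough (via the bilinear expansion of $\ME[\hat{\xi}-\xi\mid Z=z,D]$) to yield the product $r_n(z_0)s_n(z_0)$ rather than the maximum. Handling the empirical approximation to the marginal integrals $\int\pi dP,\int\mu dP$ is the secondary technical nuisance that produces the subleading $1/\sqrt{nh^2}$ and $1/\sqrt{n^2h^3}$ terms, and is where sample splitting is most useful.
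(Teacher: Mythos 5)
Your proposal follows essentially the same route as the paper's proof: the same algebraic identity through the shared design matrix $\hat{\bD}_{hz_0}^{-1}$ (with $\be_2^{\top}\hat{\bD}_{hz_0}^{-1}=O_{\MP}(1)$), the same split of $\mathbb{P}_n[\bg_h K_h(\hat\xi-\xi)]$ into an empirical-process part bounded via sample splitting and a conditional-variance/Chebyshev argument (the paper packages this as Lemma 2 of Kennedy et al.\ 2020) and a conditional-bias part bounded via the doubly robust bilinear structure and Cauchy--Schwarz, with the Monte Carlo error from the sample-average estimates $\hat f,\hat\tau_0$ correctly identified as the source of the $1/\sqrt{n^2h^3}$ and $1/\sqrt{nh^2}$ terms. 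The plan is correct and matches the paper's argument, including the $h$-power bookkeeping.
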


\noindent Under the smoothness assumption in Assumption \ref{ass:smoothness}, we can obtain the following estimation rate for $\hat{\theta}$ in estimating the derivative of the dose-response function.

\begin{theorem}\label{thm:deriv-smooth}
    Under conditions in Lemma \ref{lemma:LP-oracle}, further assume Assumption \ref{ass:smoothness} and additional regularity conditions for local polynomial estimators in the proof, we have
    \[
\hat{\theta}(z_0)-\theta(z_0) =
\begin{cases} 
    O_{\MP}\left(n^{-\frac{\gamma-1}{2\gamma+1}}\right), & \text{if } \frac{d/\beta}{(2+1/\gamma)(2+1/\gamma+d/\beta)} \leq \frac{\alpha}{2\alpha+d+1}, \\
    O_{\MP}\left( n^{- \frac{\gamma-1}{\gamma}\left( \frac{1}{2+\frac{1}{\gamma}+\frac{d}{\beta}}+ \frac{1}{2+\frac{d+1}{\alpha}} \right)} \right), & \text{if } \frac{d/\beta}{(2+1/\gamma)(2+1/\gamma+d/\beta)} > \frac{\alpha}{2\alpha+d+1}.
\end{cases}
\]
\end{theorem}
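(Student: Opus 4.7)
The starting point is the error decomposition provided by Lemma~\ref{lemma:LP-oracle},
\[
  \hat{\theta}(z_0) - \theta(z_0) = \bigl(\tilde{\theta}(z_0) - \theta(z_0)\bigr) + R_1 + R_2,
\]
with $R_1 = O_{\MP}\bigl((n^2h^3)^{-1/2} + (nh^3)^{-1/2}\max\{r_n(z_0), s_n(z_0)\}\bigr)$ and $R_2 = O_{\MP}\bigl((nh^2)^{-1/2} + h^{-1} r_n(z_0) s_n(z_0)\bigr)$. The task reduces to (i) controlling the oracle error $\tilde{\theta}(z_0) - \theta(z_0)$, (ii) inserting the minimax nuisance rates from Assumption~\ref{ass:smoothness}, and (iii) optimizing the bandwidth $h$.

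\textbf{Step 1: oracle rate.} The estimator $\tilde{\theta}(z_0)$ is the coefficient $e_2^\top \tilde{\boldsymbol{\beta}}_h(z_0)/h$ from a local polynomial regression of the true pseudo-outcome $\xi(\bO)$ on $Z$ at an interior point. Since $\ME[\xi(\bO)\mid Z=z] = \tau(z)$ and $\tau$ inherits $\gamma$-smoothness from $\mu$, standard local polynomial derivative theory (e.g.\ Fan and Gijbels, taking the polynomial order $p = \lfloor \gamma\rfloor$ so that $p-1 \geq \gamma-1$) gives, under the stated regularity conditions on $f$ and the kernel,
\[
  \tilde{\theta}(z_0) - \theta(z_0) = O_{\MP}\!\left( h^{\gamma - 1} + \frac{1}{\sqrt{nh^3}}\right).
\]
The bias term comes from the $\gamma$-order Hölder remainder in the local Taylor expansion of $\tau$ and the variance term from the usual $1/(nh^{2k+1})$ variance for estimating a $k$-th derivative with $k=1$.

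\textbf{Step 2: inserting the nuisance rates and identifying leading terms.} Substituting $r_n(z_0) \asymp n^{-1/(2+(d+1)/\alpha)}$ and $s_n(z_0) \asymp n^{-1/(2+1/\gamma+d/\beta)}$ from Assumption~\ref{ass:smoothness}, write $\rho := \frac{1}{2+(d+1)/\alpha} + \frac{1}{2+1/\gamma+d/\beta}$, so that $r_n s_n \asymp n^{-\rho}$. For any $h \to 0$ satisfying $nh \to \infty$, the terms $(n^2h^3)^{-1/2}$ and $(nh^2)^{-1/2}$ are strictly dominated by the oracle variance $(nh^3)^{-1/2}$, and the cross term $(nh^3)^{-1/2}\max\{r_n,s_n\}$ is likewise dominated since $\max\{r_n,s_n\} \to 0$. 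Hence the three leading pieces of the total error are the bias $h^{\gamma-1}$, the oracle variance $(nh^3)^{-1/2}$, and the nuisance product $n^{-\rho}/h$.

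\textbf{Step 3: bandwidth optimization in the two regimes.} Balancing the bias against the oracle variance alone gives $h \asymp n^{-1/(2\gamma+1)}$ and a rate of $n^{-(\gamma-1)/(2\gamma+1)}$; with this choice, the nuisance piece is $n^{-\rho + 1/(2\gamma+1)}$, which is at most the oracle rate iff $\rho \leq \gamma/(2\gamma+1)$. A short algebraic manipulation shows that this inequality is equivalent to the theorem's first-case condition $\frac{d/\beta}{(2+1/\gamma)(2+1/\gamma+d/\beta)} \leq \frac{\alpha}{2\alpha+d+1}$, so the first branch of the theorem follows. In the complementary regime, the nuisance term dominates the oracle variance at the oracle bandwidth, so instead one balances bias against nuisance by solving $h^{\gamma-1} = n^{-\rho}/h$, i.e.\ $h \asymp n^{-\rho/\gamma}$. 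The resulting rate is $h^{\gamma-1} = n^{-\rho(\gamma-1)/\gamma}$; plugging this $h$ back into $(nh^3)^{-1/2}$ confirms that the oracle variance is still negligible whenever $\rho > \gamma/(2\gamma+1)$, which yields the theorem's second branch.

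\textbf{Anticipated obstacle.} The main technical work is the oracle step: showing that the local polynomial estimator based on the true pseudo-outcome $\xi$ genuinely achieves the $h^{\gamma-1} + (nh^3)^{-1/2}$ rate for derivative estimation uniformly over $\gamma$-smooth $\tau$. This requires checking invertibility and appropriate concentration of the random design matrix $\MP_n[\bg_h(Z-z_0)K_h(Z-z_0)\bg_h^\top(Z-z_0)]$, a higher-order Taylor expansion of $\tau$ combined with the vanishing-moment properties of the effective equivalent kernel induced by the polynomial basis of order $p = \lfloor\gamma\rfloor$, and a bound on the conditional variance of $\xi$ near $z_0$ (this is where the ``additional regularity conditions'' enter). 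The remainder of the argument is then essentially bookkeeping: combining the oracle bound with the nuisance product via Lemma~\ref{lemma:LP-oracle} and verifying the algebraic equivalence between the branch conditions stated in the theorem and the threshold $\rho \lessgtr \gamma/(2\gamma+1)$.
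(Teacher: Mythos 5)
Your overall route is the same as the paper's: you bound the oracle error by $h^{\gamma-1} + (nh^3)^{-1/2}$ via standard local polynomial derivative theory with $p=\lfloor\gamma\rfloor$ (the paper invokes Tsybakov's exercise on local polynomial estimation rather than Fan--Gijbels, but the content is the same), you correctly argue that the remaining pieces of $R_1$ and $R_2$ are dominated by $(nh^3)^{-1/2}$ and $h^{-1}r_n(z_0)s_n(z_0)$, you substitute the minimax nuisance rates to get the three-term bound $h^{\gamma-1} + n^{-\rho}/h + (nh^3)^{-1/2}$ with $\rho = \frac{1}{2+1/\gamma+d/\beta}+\frac{1}{2+(d+1)/\alpha}$, and your bandwidth choices $h\asymp n^{-1/(2\gamma+1)}$ and $h\asymp n^{-\rho/\gamma}$ with the resulting rates match the paper exactly.

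However, both threshold inequalities on $\rho$ in your Step 3 are reversed. With $h\asymp n^{-1/(2\gamma+1)}$, the nuisance term $n^{-\rho+1/(2\gamma+1)}$ is at most the oracle rate $n^{-(\gamma-1)/(2\gamma+1)}$ iff $\rho \ge \gamma/(2\gamma+1)$, not $\rho \le \gamma/(2\gamma+1)$; and since $\gamma/(2\gamma+1)=1/(2+1/\gamma)$ and $\alpha/(2\alpha+d+1)=1/(2+(d+1)/\alpha)$, the condition $\rho \ge \gamma/(2\gamma+1)$ rearranges to $\frac{1}{2+(d+1)/\alpha} \ge \frac{1}{2+1/\gamma}-\frac{1}{2+1/\gamma+d/\beta} = \frac{d/\beta}{(2+1/\gamma)(2+1/\gamma+d/\beta)}$, which is precisely the theorem's first-case condition — so the oracle regime is the \emph{fast}-nuisance regime $\rho \ge \gamma/(2\gamma+1)$, as intuition dictates. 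Similarly, with $h\asymp n^{-\rho/\gamma}$ one has $(nh^3)^{-1/2}=n^{-1/2+3\rho/(2\gamma)} \lesssim n^{-\rho(\gamma-1)/\gamma}$ iff $\rho\,(2\gamma+1)/(2\gamma) \le 1/2$, i.e., iff $\rho \le \gamma/(2\gamma+1)$, which is exactly the second (slow-nuisance) regime; your claim that the oracle variance is negligible whenever $\rho > \gamma/(2\gamma+1)$ is the opposite of what is needed and, as written, leaves the second branch unjustified. Once these two directions are corrected, your argument coincides with the paper's proof.
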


Similar to Theorem \ref{thm:dose-response-smooth}, Theorem \ref{thm:deriv-smooth} demonstrates that the estimation rate of  $\hat{\theta}$ depends on the relationship among the smoothness parameters. Notably, the oracle regime for achieving the oracle rate in derivative estimation is identical to the condition in \eqref{eq:oracle-nuisance1} for $\hat{\tau}$ to achieve the oracle rate in dose-response function estimation. However, the optimal rates for estimating the derivative are slower than those for the dose-response function in both smoothness regimes, emphasizing that derivative estimation is generally a more challenging task.

\subsection{Asymptotic Normality}

Next, we characterize the asymptotic normality of $\hat{\theta}(z_0)$ with the following theorem. 

\begin{theorem}\label{thm:LP-normality}
    Let $z_0 \in \mathcal{Z}$ be an interior point of the support $\mathcal{Z}$ of $Z$ and $B(z_0)$ is a neighborhood of $z_0$. Assume the following conditions hold:
    \begin{enumerate}
        \item On $B(z_0)$, the marginal density of $Z$, $f$, is continuous and $f(z_0)>0$. The dose-response $\tau$ is $\gamma$-times continuously differentiable.
        \item $h \rightarrow 0,\, nh^{2p+1} \rightarrow \infty $ as $n \rightarrow \infty$, where $p=\lfloor \gamma \rfloor$ is the order of the local polynomial.
        \item The nuisance functions, their estimates, and the outcome satisfy $\epsilon \leq \pi, \hat{\pi} \leq C, |Y|,|\mu|\leq C$.
        \item The kernel is a bounded probability density supported on $[-1,1]$. The matrix $\bS = (\mu_{i+j})_{0 \leq i, j \leq p1},  \tilde{\bS} = (\nu_{i+j})_{0 \leq i, j \leq p} \in \mathbb{R}^{(p+1)\times (p+1)}$ are non-singular, where we denote $\mu_j = \int u^j K(u) du,\, \nu_j = \int u^j K^2(u) du$.
        \item The variance function $\sigma^2(z_0) = \ME[(\varphi(\bO)-\tau(Z))^2 \mid Z=z_0]$ is continuous.
        \item The nuisance estimates satisfy
    \[
    \max \left\{ r_n(z_0),s_n(z_0)\right\} \rightarrow 0,\, \sqrt{nh} \, r_n(z_0)s_n(z_0) \rightarrow 0,
    \]
    \end{enumerate}
    then we have 
    \begin{equation}\label{eq:LP-normality}
    \sqrt{nh^3}\left(\hat{\theta}(z_0)-\theta(z_0)-\hat{B}_2 (z_0)\right) \stackrel{d}{\rightarrow} N(0, \sigma^2(z_0)V_{22}/f(z_0)).
    \end{equation}
    Here $\bV = \bS^{-1}\tilde{\bS}\bS^{-1}$ and $\hat{B}_2$ is the second component of 
    \[
    \frac{1}{(p+1)!h} \bS_n^{-1}(z_0)\ME \left[ K_h(Z-z_0) \bg_h(Z-z_0)\theta^{(p+1)}(\tilde{Z})(Z-z_0)^{p+1} \right],
    \]
    where $\bS_n(z_0)=\frac{1}{n}\sum_{i=1}^n K_h(Z_i-z_0)\bg_h(Z-z_0) \bg_h(Z-z_0)^\top$, $\tilde{Z}$ lies between $z_0$ and $Z$ satisfying 
    \[
    \theta(Z)=\sum_{j=0}^p \frac{\theta^{(j)}(z_0)(Z-z_0)^j}{j!}  + \frac{\theta^{(p+1)}(\tilde{Z})(Z-z_0)^{p+1}}{(p+1)!}.
    \]
    If we further assume $ nh^{2p+3} = O(1)$, then we have
    \begin{equation}\label{eq:LP-normality-bias}
    \sqrt{nh^3} \left(\hat{\theta}(z_0)-\theta(z_0)-B_2(z_0) \right) \stackrel{d}{\rightarrow} N(0, \sigma^2(z_0)V_{22}/f(z_0)),    
    \end{equation}
    where $B_2(z_0) $ is the second component of $\frac{1}{(p+1)!}\theta^{(p+1)}(z_0) \bS^{-1}(\mu_{p+1},\dots, \mu_{2p+1})^{\top}h^{p}$. 
    
\end{theorem}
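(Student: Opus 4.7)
The plan is to combine the oracle decomposition in Lemma~\ref{lemma:LP-oracle} with a classical bias--variance analysis of the local polynomial regression of the true (but unobservable) pseudo-outcome $\xi(\bO)$ on $Z$.

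First, I apply Lemma~\ref{lemma:LP-oracle} to write $\hat{\theta}(z_0)-\theta(z_0) = [\tilde{\theta}(z_0)-\theta(z_0)] + R_1 + R_2$ and verify that $\sqrt{nh^3}(R_1+R_2)=o_\MP(1)$: under the assumed bandwidth $h\to 0$, $nh^{2p+1}\to\infty$ and nuisance rates, $\sqrt{nh^3}\,R_1 = O_\MP(1/\sqrt{n}) + O_\MP(\max\{r_n(z_0),s_n(z_0)\}) = o_\MP(1)$, while $\sqrt{nh^3}\,R_2 = O_\MP(\sqrt{h}) + O_\MP(\sqrt{nh}\,r_n(z_0)s_n(z_0)) = o_\MP(1)$. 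This reduces the problem to proving asymptotic normality of the oracle estimator $\tilde{\theta}(z_0) = \be_2^\top\tilde{\bbb}_h(z_0)/h$.

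Next, I decompose $\tilde{\bbb}_h(z_0) - \bbb^*(z_0)$, where $\bbb^*(z_0)$ has $j$-th entry $h^j\tau^{(j)}(z_0)/j!$ so that $\bg_h(Z-z_0)^\top\bbb^*(z_0)$ is the degree-$p$ local Taylor polynomial of $\tau$ at $z_0$:
\[
\tilde{\bbb}_h(z_0)-\bbb^*(z_0) = \bS_n^{-1}(z_0)\MP_n\!\left[K_h(Z-z_0)\bg_h(Z-z_0)\{\xi(\bO)-\tau(Z)\}\right] + \bS_n^{-1}(z_0)\MP_n\!\left[K_h(Z-z_0)\bg_h(Z-z_0) R_p(Z)\right],
\]
where $R_p$ is the Taylor remainder. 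The matrix convergence $\bS_n(z_0)\xrightarrow{\MP} f(z_0)\bS$ follows from the LLN after the change-of-variable $u=(Z-z_0)/h$ and the continuity of $f$ at $z_0$; analogous kernel calculations give the covariance of the variance-part summands as $(f(z_0)\sigma^2(z_0)/h)\tilde{\bS}(1+o(1))$. Applying the Lindeberg--Feller CLT to this triangular array (whose Lyapunov condition is ensured by the uniform bounds on $\pi,\mu,Y$, which make $\xi(\bO)-\tau(Z)$ bounded) and combining with Slutsky yields
\[
\sqrt{nh^3}\cdot\be_2^\top\bS_n^{-1}(z_0)\MP_n[K_h\bg_h(\xi(\bO)-\tau(Z))]/h \xrightarrow{d} N(0,\,\sigma^2(z_0)V_{22}/f(z_0)).
\]

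For the bias, substituting the Taylor remainder into the second term and extracting the second component with the $1/h$ rescaling produces a quantity whose expectation equals $\hat{B}_2(z_0)$; a uniform LLN shows the sample version deviates from this expectation by $o_\MP(1/\sqrt{nh^3})$, establishing \eqref{eq:LP-normality}. To upgrade to \eqref{eq:LP-normality-bias}, I would replace $\bS_n$ by its limit $f(z_0)\bS$ and evaluate $\theta^{(p+1)}$ at $z_0$ rather than at the intermediate point $\tilde{Z}$; under the additional assumption $nh^{2p+3}=O(1)$ together with continuity of $\theta^{(p+1)}$ at $z_0$, the substitution error is $o_\MP(h^p)=o_\MP(1/\sqrt{nh^3})$, giving $\hat{B}_2(z_0)=B_2(z_0)+o_\MP(1/\sqrt{nh^3})$. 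The main technical obstacle is the extra $1/h$ factor inherited by derivative estimation: the $\sqrt{nh^3}$ rescaling (rather than the $\sqrt{nh}$ used in Theorem~\ref{thm:LP-boundary}) demands tight control of every remainder in the oracle decomposition, and this is precisely what motivates the relaxed nuisance-product condition $\sqrt{nh}\,r_n s_n\to 0$ rather than the $\sqrt{n}\,r_n s_n\to 0$ that would arise in the function-estimation analogue.
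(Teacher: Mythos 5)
Your proposal is correct in outline, and its architecture matches the paper's at both ends: like the paper, you reduce to the oracle estimator through Lemma~\ref{lemma:LP-oracle} and verify that $\sqrt{nh^3}(R_1+R_2)=o_{\MP}(1)$ under conditions 2 and 6 (your bookkeeping $\sqrt{nh^3}R_1=O_{\MP}(n^{-1/2})+O_{\MP}(\max\{r_n(z_0),s_n(z_0)\})$ and $\sqrt{nh^3}R_2=O_{\MP}(\sqrt{h})+O_{\MP}(\sqrt{nh}\,r_n(z_0)s_n(z_0))$ is exactly right). The genuine difference is the middle step: the paper does not prove the oracle CLT at all — it imports it from \cite{sawada2024local}, which supplies joint asymptotic normality of the rescaled oracle coefficient vector with the bias kept in precisely the form appearing in $\hat B_2$, and the extra condition $nh^{2p+3}=O(1)$ is likewise delegated to that reference to collapse $\hat B_2$ to the leading term $B_2(z_0)$; the paper then just extracts the second component. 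You instead rederive the oracle normality directly in Fan--Gijbels style: the variance/Taylor-remainder split of $\tilde{\bbb}_h(z_0)-\bbb^*(z_0)$, convergence of $\bS_n(z_0)$ to $f(z_0)\bS$ in probability, the kernel second-moment computation giving $f(z_0)\sigma^2(z_0)\tilde{\bS}/h$, a Lindeberg--Feller argument made easy by boundedness of the oracle pseudo-outcome, and Slutsky. Your route is self-contained and makes the $\sqrt{nh^3}$ rate and the constant $\sigma^2(z_0)V_{22}/f(z_0)$ transparent; the paper's is shorter but rests entirely on the external result. One point to reconcile before asserting that your bias term ``equals $\hat B_2(z_0)$'': expanding the pseudo-outcome regression function $\tau$ to degree $p$ yields a remainder driven by $\tau^{(p+1)}=\theta^{(p)}$, so the expectation you obtain is $\frac{1}{(p+1)!\,h}\bS_n^{-1}(z_0)\ME[K_h(Z-z_0)\bg_h(Z-z_0)\tau^{(p+1)}(\tilde Z)(Z-z_0)^{p+1}]$, whereas the theorem (following the cited reference's notation, in which $\theta$ plays the role of the regression function being locally fit) writes $\hat B_2$ and $B_2$ with $\theta^{(p+1)}$; this mismatch originates in the statement's notation rather than in your derivation, but your proof should make the identification explicit. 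Likewise, your replacement of $\tilde Z$ by $z_0$ and of $\bS_n^{-1}$ by $(f(z_0)\bS)^{-1}$ indeed costs only $o_{\MP}(h^p)=o_{\MP}(1/\sqrt{nh^3})$ under $nh^{2p+3}=O(1)$, provided the relevant $(p+1)$st derivative is continuous at $z_0$ — an assumption marginally beyond the stated $\gamma$-fold differentiability when $\gamma$ is an integer, which you correctly flag as an added continuity requirement.
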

Theorem \ref{thm:LP-normality} enables the construction of pointwise confidence intervals based on the local polynomial estimator $\hat{\theta}$. If we undersmooth and set $h \ll n^{-\frac{1}{2p+3}}$ so that the bandwidth is smaller than the optimal choice $ n^{-\frac{1}{2p+3}}=n^{-\frac{1}{2\gamma+1}}$ when nuisance estimation errors are negligible in the oracle regime \eqref{eq:oracle-nuisance1}, the confidence intervals are centered around the target derivative $\theta$. However, with the optimal choice $h \asymp n^{-\frac{1}{2p+3}}$, the confidence intervals and corresponding inference are for the smoothed function $\theta(z_0)+B_2(z_0)$ rather than $\theta(z_0)$. This is known as the bias problem \citep[Section 5.7]{wasserman2006all}, a common challenge in function estimation problems \citep{Ruppert2003semiparametric,bonvini2023flexibly}. Several approaches exist to address the bias problem, each with its own trade-offs. One approach is to estimate the second-order derivative and debias the estimator \citep{calonico2018effect, takatsu2024debiased}, but this requires additional smoothness assumptions. Another method is to undersmooth \citep{fan2022estimation}, reducing the bias asymptotically relative to the variance. However, finding a practical and reliable rule for the degree of undersmoothing remains challenging. \\

Here, we acknowledge that our inference is potentially for the smoothed function and use the asymptotic variance as an uncertainty quantification for our local polynomial estimator. Theoretically, the bias shrinks to 0 as $n \rightarrow \infty$ and the proposed estimator $\hat{\theta}(z_0)$ remains consistent for $\theta(z_0)$. Compared to estimating the dose-response function itself, the appropriate scaling for $\hat{\theta}$ is $\sqrt{nh^3}$ instead of $\sqrt{nh}$ \citep{kennedy2017non}. However, the requirement on the nuisance estimation error to be asymptotically negligible remains the same as in dose-response estimation; specifically, we require that the product of the estimation errors for $\mu$ and $\pi$ be of order $o_{\MP}(1/\sqrt{nh})$. Since we employ a doubly robust estimator, the contribution of nuisance estimation error involves a product:
\[
r_n(z_0)s_n(z_0)=\sup_{|z-z_0|\leq h} \sqrt{ \ME_{\bX} \ME_{D}(\hat{\pi}(z\mid \bX)-\pi(z \mid \bX))^2 } \sup_{|z-z_0|\leq h} \sqrt{ \ME_{\bX} \ME_{D}(\hat{\mu}(\bX,z)-\mu(\bX,z))^2},
\]
which makes it easier to meet the required nuisance estimation rate compared to a plug-in-style estimator that relies solely on $\hat{\mu}$. Therefore, flexible nonparametric machine learning methods can be used to estimate the nuisance functions, while our methods remain valid for statistical inference as long as $r_n(z_0)s_n(z_0)=o(1/\sqrt{nh})$. \\


In practice, the bandwidth can be chosen by estimating the optimal value that minimizes either the local Mean Squared Error (MSE) or the global Mean Integrated Squared Error (MISE) for derivative estimation \citep{fan2018local, herrmann2024lokern}. Additionally, we propose a data-adaptive model selection framework in Appendix \ref{sec:bw-selector}, which can also be applied to select the bandwidth for estimating the derivative of the dose-response function.

\section{A Smoothing Approach for Derivative Estimation}
\label{sec:smoothing}

In this section, we introduce an alternative approach for estimating the derivative of the dose-response curve. Similar to the smoothing approach outlined in Section \ref{sec:framework}, the key idea is to define a smooth, pathwise differentiable approximation function for $\theta$, allowing for the derivation of influence function-based estimators. Following the approach in \cite{branson2023causal}, we define an estimand that smooths across $Z$ and places greater weight on subjects near $Z=z_0$. Recall that $K$ is a symmetric kernel and $K_h(z) = K(z/h)/h$ is its rescaled version for a given bandwidth parameter $h>0$. The kernel-smoothed version of $\theta$ is defined as
\[
\theta_h(z_0) = \ME \left[ \int \frac{\partial \mu(\bX, z)}{\partial z} K_h(z-z_0) dz \right],
\]
where $\mu(\bX, z) = \ME[Y \mid \bX, Z=z]$. Assume $K$ is supported on $[-1,1]$ or satisfies $K(z) \rightarrow 0$ as $|z| \rightarrow \infty$, and applying integration by parts, we obtain
\[
\int \frac{\partial \mu(\bX, z)}{\partial z} K_h(z-z_0) dz =- \int  \mu(\bX, z)K_h'(z-z_0) dz.
\]
Thus the smooth approximation $\theta_h$ can also be expressed as
\[
\theta_h(z_0) = - \ME \left[\int  \mu(\bX, z)K_h'(z-z_0) dz \right ].
\]
Another way to motivate $\theta_h$ is by directly differentiating the smooth approximation of the dose-response function:
\[
\tau_h(z_0)= \ME \left[ \int \mu(\bX,z) K_h(z-z_0) dz \right]
\]
as defined in \cite{branson2023causal}. Note that $\tau_h$ corresponds to the solution in \eqref{eq:wls-sol} with $\bg_h = 1$ (the constant basis) and $w = 1$. Thus, this smooth approximation also falls within the general framework outlined in Section \ref{sec:framework}.
Our smooth approximation approach for $\theta$ is motivated by extending this idea to the derivative of the dose-response function. \\

As $h \rightarrow 0$, the rescaled kernel $K_h(z-z_0)$ converges to a point mass at $z_0$ and we expect $\theta_h(z_0) \rightarrow \theta(z_0)$. Since this approximation does not utilize a local polynomial basis, high-order kernels are required to accurately capture the local curvature. The following proposition formalizes these intuitions and quantifies the approximation error of $\theta_h(z_0)$ under the assumptions that $\mu$ is smooth in $z$ and $K$ is a high-order kernel.\\

\begin{prop}[Approximation Error of $\theta_h$]\label{prop:smooth-error}
    Assume $\mu(\bx,z): z \mapsto \mathbb{R}$ is $\gamma$-smooth w.r.t. $z$ for $\bx \in \mathcal{X}$ almost surely and the kernel $K$ is a $(\ell-1)$-th order kernel for $\ell = \lfloor \gamma \rfloor$ satisfying
    \[
    \begin{aligned}
         \int  K(u) d u = 1, &\,\int u^{j} K(u) d u = 0, \, 1\leq j \leq \ell-1,\\
        &\,\int |u|^{\gamma-1} |K(u)| d u < \infty.
    \end{aligned}
    \]
    Then we have the following bound on the approximation error of $\theta_h$
    \[
    |{\theta}_h(z_0)-\theta(z_0)| \leq C_1 h^{\gamma-1},
    \]
    where $C_1 = \frac{L \int |u|^{\gamma-1} |K(u)| du}{(\ell-1)!}$ and $L$ is the constant of Hölder continuity. 
\end{prop}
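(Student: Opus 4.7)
The plan is to reduce the approximation error to a standard bias calculation for a higher-order kernel acting on a Hölder smooth function. First I would set $m(\bx, z) := \partial \mu(\bx, z)/\partial z$. Since $\mu(\bx, \cdot)$ is $\gamma$-smooth with $\gamma \geq 1$, its derivatives up to order $\ell$ are uniformly bounded by $L$, so dominated convergence justifies swapping $\partial/\partial z$ with $\ME$ to obtain $\theta(z_0) = \ME[m(\bX, z_0)]$. For $\theta_h$, the substitution $u = (z - z_0)/h$ gives $\theta_h(z_0) = \ME\left[\int m(\bX, z_0 + hu) K(u)\, du\right]$, and Fubini is legal by the same boundedness. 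Using $\int K(u)\, du = 1$, the approximation error then admits the clean representation
\begin{equation*}
\theta_h(z_0) - \theta(z_0) = \ME\left[\int \left(m(\bX, z_0 + hu) - m(\bX, z_0)\right) K(u)\, du\right].
\end{equation*}

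Next I would Taylor expand $m(\bX, z_0 + hu)$ around $z_0$ in the second argument. The key regularity observation is that $\gamma$-smoothness of $\mu$ in $z$ (with $\ell = \lfloor \gamma \rfloor$) transfers to $m$ as follows: $m(\bx, \cdot)$ has $\ell - 1$ continuous derivatives, and $m^{(\ell-1)}(\bx, \cdot) = \mu^{(\ell)}(\bx, \cdot)$ is Hölder continuous of order $\gamma - \ell$ with constant $L$. Taylor's theorem with Lagrange remainder then gives, for some $\tilde z$ between $z_0$ and $z_0 + hu$,
\begin{equation*}
m(\bX, z_0 + hu) - m(\bX, z_0) = \sum_{j=1}^{\ell-1} \frac{m^{(j)}(\bX, z_0)(hu)^j}{j!} + \frac{\left[m^{(\ell-1)}(\bX, \tilde z) - m^{(\ell-1)}(\bX, z_0)\right] (hu)^{\ell-1}}{(\ell-1)!}.
\end{equation*}
Multiplying by $K(u)$ and integrating, the vanishing-moment conditions $\int u^j K(u)\, du = 0$ for $1 \leq j \leq \ell - 1$ of the $(\ell-1)$-th order kernel annihilate the entire polynomial sum.

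Only the remainder survives, and I would bound it pointwise via Hölder continuity: $|m^{(\ell-1)}(\bX, \tilde z) - m^{(\ell-1)}(\bX, z_0)| \leq L |\tilde z - z_0|^{\gamma - \ell} \leq L (h|u|)^{\gamma - \ell}$. Combining with the prefactor $|hu|^{\ell-1}/(\ell-1)!$, integrating against $|K(u)|$, and taking the outer expectation yields $L\, h^{\gamma - 1} \int |u|^{\gamma - 1} |K(u)|\, du /(\ell - 1)!$, which is exactly $C_1 h^{\gamma - 1}$. The only substantive step is the smoothness bookkeeping---tracking how much regularity $m$ inherits from $\mu$ and matching that to the kernel order $\ell - 1$; once the Taylor expansion is set up at the correct order, the bound follows immediately from the vanishing-moment conditions plus a single Hölder application.
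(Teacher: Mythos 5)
Your proposal is correct and follows essentially the same route as the paper: change of variables, Taylor expansion of $\partial\mu/\partial z$ about $z_0$, annihilation of the polynomial terms by the vanishing-moment conditions (including the $(\ell-1)$-th term at $z_0$, which you fold into the remainder and the paper removes by adding and subtracting), and a single Hölder bound on the surviving term integrated against $|K(u)|$. The smoothness bookkeeping ($m^{(\ell-1)} = \mu^{(\ell)}$ is $(\gamma-\ell)$-Hölder with constant $L$) matches the paper's argument exactly.
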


Proposition \ref{prop:smooth-error} demonstrates that the smoothing bias vanishes as $h \rightarrow 0$, with the rate of convergence depending on the smoothness of $\mu$. When $h$ is sufficiently small, any estimator for $\theta_h$ effectively serves as an estimator for $\theta$. Therefore, we focus on developing an estimator for $\theta_h$. By smoothing the parameter, the resulting function becomes pathwise differentiable and incorporates an influence function. Following a similar derivation to \cite{branson2023causal}, one can derive the efficient influence function of $\theta_h(z_0)$ as
\[
\varphi_h(\bO;z_0) = -K_h'(Z-z_0)  \frac{Y-\mu(\bX,Z)}{\pi(Z\mid \bX)} - \int  \mu(\bX, z)K_h'(z-z_0) dz.
\]
Let $\hat{\varphi}_h$ denote the estimated influence function, with $\mu, \pi$ replaced by $\hat{\mu}, \hat{\pi}$, respectively. The doubly robust estimator of $\theta_h(z_0)$ is then given by
\[
\hat{\theta}_h(z_0) = \MP_n [\hat{\varphi}_h(\bO; z_0)]
\]
The following proposition summarizes the bias and variance of $\hat{\theta}_h(z_0)$, conditioned on the data $D$ used to train the nuisance functions $\pi$ and $\mu$.

\begin{prop}[Bounds on the Conditional Bias and Variance]\label{prop:bound-variance} Assume $|Y|, |\hat{\mu}|, \pi(z\mid \bX) \leq C$ and $\hat{\pi}(Z\mid \bX) \geq \epsilon$ for some constant $\epsilon, C>0$. Further assume the kernel $K$ satisfies $\int |K'(u)|du, \int \left(K'(u)\right)^2 du < \infty$. Then the conditional bias of $\hat{\theta}_h(z_0)$ is bounded as 
    \[
    |\MP[\hat{\theta}_h(z_0) - \theta_h(z_0)]|  \lesssim \int |K_h'(z-z_0)| \|\hat{\mu}(\cdot, z) - \mu(\cdot, z)\|_2 \|\hat{\pi}(z\mid \cdot) - \pi(z \mid \cdot)\|_2 dz = O_{\MP} \left( \frac{1}{h} r_n(z_0)s_n(z_0)\right).
    \]
    The conditional variance of $\hat{\theta}_h(z_0)$ is bounded as 
    \[
    \operatorname{Var}\left(\hat{\theta}_h(z_0)\right) \lesssim \frac{1}{nh^3}.
    \]
    \end{prop}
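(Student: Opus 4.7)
The plan is to exploit the doubly robust structure of the estimated influence function $\hat{\varphi}_h$. For the bias, write the conditional expectation under the true law by iterated expectation: first condition on $(\bX,Z)$ to replace $Y-\hat{\mu}$ with $\mu-\hat{\mu}$, then integrate $Z\mid\bX$ against its density $\pi(z\mid\bX)$. Combining this with the offset term $-\int \hat{\mu}(\bX,z) K_h'(z-z_0)\,dz$ of $\hat{\varphi}_h$, and subtracting the identity $\theta_h(z_0) = -\ME_\bX\!\int \mu(\bX,z)K_h'(z-z_0)\,dz$, the cross terms collapse cleanly into the product form
\begin{equation*}
\ME[\hat{\varphi}_h(\bO;z_0) - \theta_h(z_0)\mid D]
= -\ME_\bX\!\left[\int K_h'(z-z_0)\,\frac{(\mu(\bX,z)-\hat{\mu}(\bX,z))(\pi(z\mid\bX)-\hat{\pi}(z\mid\bX))}{\hat{\pi}(z\mid\bX)}\,dz\right].
\end{equation*}
Bounding $\hat{\pi}\geq\epsilon$ and applying Cauchy--Schwarz in $\bX$ at each fixed $z$ then yields the first displayed inequality.

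To convert this into the $O_\MP(r_n(z_0)s_n(z_0)/h)$ rate, I apply a second Cauchy--Schwarz, this time to the integral in $z$:
\begin{equation*}
\int |K_h'|\,\|\hat{\mu}-\mu\|_2\,\|\hat{\pi}-\pi\|_2\,dz
\leq \sqrt{\int |K_h'|\,\|\hat{\mu}-\mu\|_2^2\,dz}\;\sqrt{\int |K_h'|\,\|\hat{\pi}-\pi\|_2^2\,dz}.
\end{equation*}
Because $K_h'$ is supported on $|z-z_0|\leq h$, passing $\ME_D$ inside each factor and invoking the definitions of $r_n(z_0), s_n(z_0)$ bounds the expectations by $s_n^2(z_0)\int|K_h'|\,dz$ and $r_n^2(z_0)\int|K_h'|\,dz$, respectively. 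The substitution $u=(z-z_0)/h$ gives $\int|K_h'(z-z_0)|\,dz = h^{-1}\int|K'(u)|\,du$, which is finite by assumption; Markov's inequality over $D$ then turns these into $O_\MP(s_n^2(z_0)/h)$ and $O_\MP(r_n^2(z_0)/h)$, and taking square roots and multiplying yields the stated rate.

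For the variance, use $\operatorname{Var}(\hat{\theta}_h\mid D)\leq n^{-1}\ME[\hat{\varphi}_h(\bO;z_0)^2\mid D]$ and split $\hat{\varphi}_h$ into its residual summand and its $\bX$-only offset summand via $(a+b)^2\leq 2a^2+2b^2$. The offset is uniformly bounded by $C\int|K_h'|\,dz = O(1/h)$, contributing $O(1/h^2)$. The residual summand's squared expectation is at most $4C^2\epsilon^{-2}\,\ME_\bX\!\int K_h'(z-z_0)^2\,\pi(z\mid\bX)\,dz$, and $\int K_h'(z-z_0)^2\,dz = h^{-3}\int K'(u)^2\,du$ together with boundedness of $\pi$ produces the dominant $O(1/h^3)$ term. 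Dividing by $n$ yields the claim. The main point of care is translating pointwise $L_2(\PP_\bX)$ error control into integrated error control over the $h$-window where $K_h'$ is supported; the double Cauchy--Schwarz plus Markov chain is exactly the device that lets the product-of-errors $r_n(z_0)s_n(z_0)$ appear, with the $1/h$ inflation coming purely from the $L^1$ norm of the rescaled kernel derivative.
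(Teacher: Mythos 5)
Your proposal is correct and follows essentially the same route as the paper: the doubly robust bias computation collapsing to the product $(\mu-\hat{\mu})(\pi-\hat{\pi})/\hat{\pi}$, Cauchy--Schwarz over $\bX$, then exploiting the support of $K_h'$ and $\int|K_h'(z-z_0)|\,dz = h^{-1}\int|K'(u)|\,du$ to get the $r_n(z_0)s_n(z_0)/h$ rate, and the same second-moment split giving $O(1/h^3)$ versus $O(1/h^2)$ for the variance. Your second Cauchy--Schwarz in $z$ followed by Markov on each factor is a harmless reorganization of the paper's step, which instead applies Cauchy--Schwarz over the training set $D$ at each fixed $z$ and then sup-bounds before integrating; both yield the identical bound.
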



Compared to the results in \cite{branson2023causal}, we explicitly characterize the dependency of the bias and variance on $h$, offering valuable insights into bandwidth selection to minimize the estimation error. Under the assumptions of Propositions \ref{prop:smooth-error}--\ref{prop:bound-variance}, and combining the approximation error, conditional bias, and variance, the estimation error of $\hat{\theta}_h(z_0)$ can be expressed as
\[
\hat{\theta}_h(z_0)-\theta(z_0)=O_{\MP} \left( h^{\gamma-1} + \frac{1}{\sqrt{nh^3}} +\frac{1}{h}r_n(z_0)s_n(z_0) \right).
\]
Under Assumption \ref{ass:smoothness}, we obtain the same error decomposition as that for the local polynomial estimator (see equation \eqref{eq:LP-deriv-error} in the Appendix \ref{appendix:proof-deriv-smooth}). Consequently, similar rate analysis there can be applied to obtain the same estimation rate in Theorem \ref{thm:deriv-smooth} for $\hat{\theta}_h(z_0)$. 

\subsection{Asymptotic Normality}
In this section, we study the asymptotic normality of $\hat{\theta}_h(z_0)$. To begin, we note the following decomposition of the error:
\[
\begin{aligned}
    \hat{\theta}_h(z_0) - \theta(z_0) = &\,\hat{\theta}_h(z_0) - {\theta}_h(z_0) + {\theta}_h(z_0)-\theta(z_0)\\
    = &\, (\MP_n-\MP) [\varphi_h(\bO;z_0)] + (\MP_n-\MP)[\hat{\varphi}_h(\bO; z_0)-\varphi_h(\bO;z_0)] \\
    &\,+ \MP[\hat{\varphi}_h(\bO; z_0)-\varphi_h(\bO;z_0)]+ {\theta}_h(z_0)-\theta(z_0).
\end{aligned}
\]
The first term, $(\MP_n-\MP) [\varphi_h(\bO;z_0)]$, is a sample average that, under appropriate scaling, converges in distribution to a Gaussian random variable asymptotically. The second term, $(\MP_n-\MP)[\hat{\varphi}_h(\bO; z_0)-\varphi_h(\bO;z_0)]$, is an empirical process term that can be bounded using sample splitting or by imposing additional complexity assumptions on the nuisance model class. The third term, $\MP[\hat{\varphi}_h(\bO; z_0)-\varphi_h(\bO;z_0)]$, is the conditional bias and can be bounded by the product of the nuisance estimation rates, as summarized in Proposition \ref{prop:bound-variance}. Finally, the last term captures the approximation error of $\theta_h$, which is bounded in Proposition \ref{prop:smooth-error}. Combining these arguments, we establish the following result on the asymptotic normality of $\hat{\theta}_h(z_0)$.

\begin{theorem}\label{thm:smooth-normality}
    Assume we estimate nuisance functions $\pi, \mu$ from a separate independent sample, and the nuisance estimates satisfy $\epsilon \leq \pi, \hat{\pi} \leq C, |Y|,|\mu|\leq C$. Further assume $\mu$ is $\gamma$-smooth w.r.t. $z$ and the kernel $K$ is a $(\ell-1)$-th order kernel for $\ell = \lfloor \gamma \rfloor$ satisfying $\int |K'(u)|du,  \int (K'(u))^2du < \infty$. Then for an interior point $z_0$ we have
    \[
    \hat{\theta}_h(z_0) - \theta_h(z_0) = (\MP_n-\MP) [\varphi_h(\bO;z_0)]+ O_{\MP} \left( \frac{1}{\sqrt{nh^3} }\max \{r_n(z_0),s_n(z_0)\}+\frac{1}{h} r_n(z_0)s_n(z_0) \right)
    \]
    As a consequence, if we further assume $\operatorname{Var}(Y \mid \bX, Z) \geq c >0,$ and as $n\rightarrow \infty$, $h\rightarrow 0, nh^{3}\rightarrow \infty$,
    \[
    \max \left\{ r_n(z_0),s_n(z_0)\right\} \rightarrow 0,\, \sqrt{nh} \, r_n(z_0)s_n(z_0) \rightarrow 0,
    \]
    then we have
    \[
    \frac{\sqrt{n}(\hat{\theta}_h(z_0) - \theta_h(z_0))}{\sigma_n} \stackrel{d}{\rightarrow} N(0,1),
    \]
    where $\sigma_n^2 = \operatorname{Var}(\varphi_h(\bO;z_0)) \asymp 1/h^3$.
    
\end{theorem}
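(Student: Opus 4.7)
The plan is to work directly from the decomposition of $\hat{\theta}_h(z_0) - \theta_h(z_0)$ (which drops the approximation term $\theta_h - \theta$ from the display just before the theorem), handle the conditional bias term using Proposition~\ref{prop:bound-variance}, control the empirical process term via sample splitting, and then apply a Lyapunov CLT to the remaining sample average.

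First I would note that Proposition~\ref{prop:bound-variance} already gives the conditional bias bound $\MP[\hat{\varphi}_h - \varphi_h] = O_{\MP}(r_n(z_0) s_n(z_0)/h)$, which accounts for the second piece of the remainder. For the empirical process term, since the nuisance functions are estimated on an independent sample $D$, conditional on $D$ the mapping $\hat{\varphi}_h - \varphi_h$ is deterministic and $(\MP_n - \MP)[\hat{\varphi}_h - \varphi_h]$ is a centered average with conditional variance at most $n^{-1} \ME[(\hat{\varphi}_h - \varphi_h)^2 \mid D]$. Using the doubly-robust rearrangement
\[
\frac{Y-\hat{\mu}}{\hat{\pi}} - \frac{Y-\mu}{\pi} = \frac{(Y-\mu)(\pi - \hat{\pi}) + \pi(\mu - \hat{\mu})}{\hat{\pi}\,\pi},
\]
one sees that $\hat{\varphi}_h - \varphi_h$ is linear in $(\pi-\hat{\pi})$ and $(\mu-\hat{\mu})$. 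Boundedness of $Y, \mu, \hat{\mu}$ and $\pi, \hat{\pi} \in [\epsilon, C]$ then yields, for instance,
\[
\ME\!\left[K_h'(Z-z_0)^2 (\pi-\hat{\pi})^2\right] = \int K_h'(z-z_0)^2 f(z)\,\ME_{\bX}\!\left[(\pi(z\mid\bX)-\hat{\pi}(z\mid\bX))^2\right] dz \lesssim \frac{r_n(z_0)^2}{h^3},
\]
because $K_h'$ is supported in $|z-z_0|\le h$ (so the nuisance error is uniformly controlled by $r_n(z_0)$) and $\int K_h'(u)^2\,du \asymp h^{-3}$. The same argument with $s_n$ handles the $\mu-\hat{\mu}$ contribution and the second integral term. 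Conditional Chebyshev then delivers $(\MP_n-\MP)[\hat{\varphi}_h - \varphi_h] = O_{\MP}(\max\{r_n,s_n\}/\sqrt{nh^3})$, completing the first displayed equation of the theorem.

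For the asymptotic normality claim, the stated rate conditions $\max\{r_n, s_n\} \to 0$ and $\sqrt{nh}\, r_n s_n \to 0$ make both remainder terms $o_{\MP}(\sigma_n/\sqrt{n}) = o_{\MP}(h^{-3/2}/\sqrt{n})$, so it suffices to show $\sqrt{n}(\MP_n-\MP)\varphi_h / \sigma_n \stackrel{d}{\to} N(0,1)$. A direct calculation, dominated by the $K_h'(Z-z_0)(Y-\mu)/\pi$ term, gives $\sigma_n^2 = \Var(\varphi_h) \asymp h^{-3}$; the lower bound uses $\Var(Y\mid\bX,Z)\ge c > 0$ and $\int K_h'(u)^2 du \asymp h^{-3}$, while the upper bound uses boundedness of all ingredients. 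A Lyapunov third-moment check gives $\ME|\varphi_h|^3 \asymp h^{-5}$, so the Lyapunov ratio satisfies $n^{-1/2}\ME|\varphi_h|^3/\sigma_n^3 \asymp (nh)^{-1/2} \to 0$, which follows from $nh^3 \to \infty$.

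The most delicate step will be the conditional variance computation in the empirical process step: one has to exploit both the localization of $K_h'$ to the $h$-neighborhood where $r_n(z_0), s_n(z_0)$ are defined \emph{and} the $h^{-3}$ blow-up of $\int K_h'^2$, and the doubly robust rearrangement is what produces the single maximum $\max\{r_n,s_n\}$ (rather than a product) in the empirical process remainder while the product $r_n s_n$ correctly appears in the conditional bias from Proposition~\ref{prop:bound-variance}. Once this bookkeeping is in place the CLT step is routine.
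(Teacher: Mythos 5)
Your proposal is correct and follows essentially the same route as the paper: the same error decomposition, Proposition~\ref{prop:bound-variance} for the conditional bias, a conditional second-moment (Chebyshev-type) bound exploiting sample splitting, the localization of $K_h'$ and $\int K_h'(z-z_0)^2\,dz \asymp h^{-3}$ for the empirical-process term (this is exactly the paper's Lemma~\ref{lemma:bounds-remainder}), and the same variance bounds $\sigma_n^2 \asymp h^{-3}$ using $\operatorname{Var}(Y\mid\bX,Z)\geq c$. The only cosmetic difference is that you verify a Lyapunov third-moment condition while the paper checks Lindeberg via the boundedness $|\varphi_h|\lesssim h^{-2}$ (its Lemma~\ref{lemma:lindeberg}); both reduce to the same ratio $(nh)^{-1/2}\to 0$, so the arguments are interchangeable (and note your displayed second-moment identity with $f(z)$ should strictly be an inequality up to the bound $\pi(z\mid\bx)\leq C$, as the paper writes it).
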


\bigskip

We note that $\hat{\theta}_h(z_0)$ centers around the smooth approximation $\theta_h(z_0)$ in Theorem \ref{thm:smooth-normality}. By Proposition \ref{prop:smooth-error}, the smoothing error is $O(h^{\gamma -1})$. If we undersmooth and assume $n h^{2\gamma +1} \rightarrow 0$, it follows that $\sqrt{nh^3}(\theta_h(z_0) - \theta(z_0))$ becomes asymptotically negligible, allowing $\hat{\theta}_h(z_0)$ to center around $\theta(z_0)$.  
In this paper, we obtain uncertainty quantification for the smooth approximation estimator $\hat{\theta}_h(z_0)$, acknowledging that inference is effectively conducted for $\theta_h(z_0)$ as discussed in Section \ref{sec:loc_poly}. Therefore, we do not pursue undersmoothing or bias correction. Discussions on double robustness and bandwidth selection follow similarly to those in Section \ref{sec:loc_poly}.

\subsection{A Comparison of Estimation Approaches}

The local polynomial estimator of  Section \ref{sec:loc_poly} achieves the same rate as the smoothing approach; both methods use kernel smoothing and are doubly robust, but they differ in their approximation strategies, the local polynomial estimator captures local curvature using polynomials, while the smoothing approach uses a local constant basis (i.e., $\bg_h=1$) and approximates  with high-order kernels. For  the weight $w$, the smoothing approach discussed  sets $w=1$, which may place additional weight on values of $z$ with a small density, where fewer observations are available. In contrast, the local polynomial estimator uses the marginal density as the weight function, which avoids  assigning  weights according to the underlying distribution of $Z$. Thus the marginal density is often preferred. The local polynomial estimator in Section \ref{sec:loc_poly} estimates the derivative of regression functions for constructed pseudo-outcomes and its idea generalizes to broader derivative estimation methods, including splines \citep{zhou2000derivative} and empirical derivatives \citep{de2013derivative}. 
Due to the non-pathwise-differentiability of the dose-response function, various smooth approximation approaches have been proposed \citep{kennedy2017non, branson2023causal}. The framework in Section \ref{sec:framework} unifies these methods and offers potential directions for future research. For instance, one could explore alternative basis functions $\bg$ to approximate the dose-response function under different structural assumptions or find weight function $w$ that improves the asymptotic variance. 



\section{Simulation Study}
\label{sec:sims}

In this section, we use simulations to compare  with the projection approach in \cite{kennedy2019robust}. For the latter, if the working parametric model is misspecified, the estimated LIV curve represents the best approximation within the specified model class to the true LIV curve. Model misspecification can still introduce  bias, leading to large estimation error. Here, we study whether our proposed  methods  reduce bias compared to the projection approach. First, we describe the data-generating process (DGP) we use for the simulations. \\

The covariates $\bX$ are drawn from the following multivariate Gaussian distribution: $\bX = (X_1,X_2,X_3,X_4) \sim N(0, \bI_4)$.
Next, the instrument $Z$ is drawn from $N(\eta(\bX),1)$ with $\eta(\bX) = 2 + 0.1X_1 + 0.1X_2 - 0.1X_3 + 0.2X_4.$
The treatment $A$ consists of draws from $A \mid \bX, Z \sim  1+(0.1, -0.2, 0.3, 0.1)\bX + 0.1Z+ \epsilon$ with $\epsilon \sim N(0,1)$.
Finally, $Y$ consists of draws from $Y \mid \bX, Z \sim  1+(0.2, 0.2, 0.3, -0.1)\bX + Z(-0.1X_1+0.1X_3 - 0.13^2Z^2) + \epsilon, \quad \epsilon \sim N(0,1).$ In this DGP, the derivative of the dose-response functions for the treatment and outcome are given by:
\begin{equation*}
    \theta^A(z) = \ME\bigg\{\frac{\partial\lambda(\bX,z)}{\partial z}\bigg\} =  0.1 \ \ \text{ and } \ \ 
    \theta^Y(z) = \ME\bigg\{\frac{\partial\mu(\bX,z)}{\partial z}\bigg\} =  - 3\cdot0.13^{2}z^2.   
\end{equation*}
The LIV curve is given by $\gamma(z)= -0.507z^2.$ \\

To evaluate the performance of the estimators under different nuisance estimation rates, we manually control the estimation error, which is use for simulation based evaluations \citep{zeng2023efficient, branson2023causal}. Specifically, we define the nuisance estimators as:
\begin{gather*}
    \hat{\eta}(\bX) = 2 + 0.1X_1 + 0.1X_2 - 0.1X_3 + 0.2X_4 +N(n^{-\alpha},n^{-2\alpha}), \\
    \hat{\lambda}(\bX)=1+(0.1, -0.2, 0.3, 0.1)\bX + 0.1Z+N(n^{-\alpha},n^{-2\alpha}),\\
    \hat\mu(\bX, Z) = 1+(0.2, 0.2, 0.3, -0.1)\bX + Z[-0.1X_1+0.1X_3 - 0.13^2(1+N(n^{-\alpha},n^{-2\alpha}))Z^2],
\end{gather*}
such that the estimation errors of $\hat\pi$ and $\hat\mu$ are $O_\MP(n^{-\alpha})$, allowing us to control their convergence rates through $\alpha$. We implement the local polynomial estimator proposed in Section \ref{sec:loc_poly} and the smooth approximation approach from Section \ref{sec:smoothing}, and compare their performance with the projection approach, where the working model is specified as linear: $\gamma^L(z) = \psi z.$ The projection approach is misspecified with respect to the working model. We evaluate the performance of each method using the root mean squared error (RMSE) over $S$ replications, averaged across values of $Z$, as follows:
\begin{equation*}
    \textnormal{RMSE} = \int \left[\frac{1}{S}\sum_{s=1}^S \{\hat\theta^s(z)-\theta(z)\}^2 \right]^{1/2} d \MP^*(z),
\end{equation*}
where replications $S$ is set to $100$ and $\MP^*$ is the truncated marginal distribution of $Z$. This  has been used in a number of previous simulations  \citep{kennedy2017non, branson2023causal, wu2024matching}.\\

\begin{figure}[h]
	\centering
	  \begin{subfigure}[t]{0.45\textwidth}
			\includegraphics[width=\textwidth]{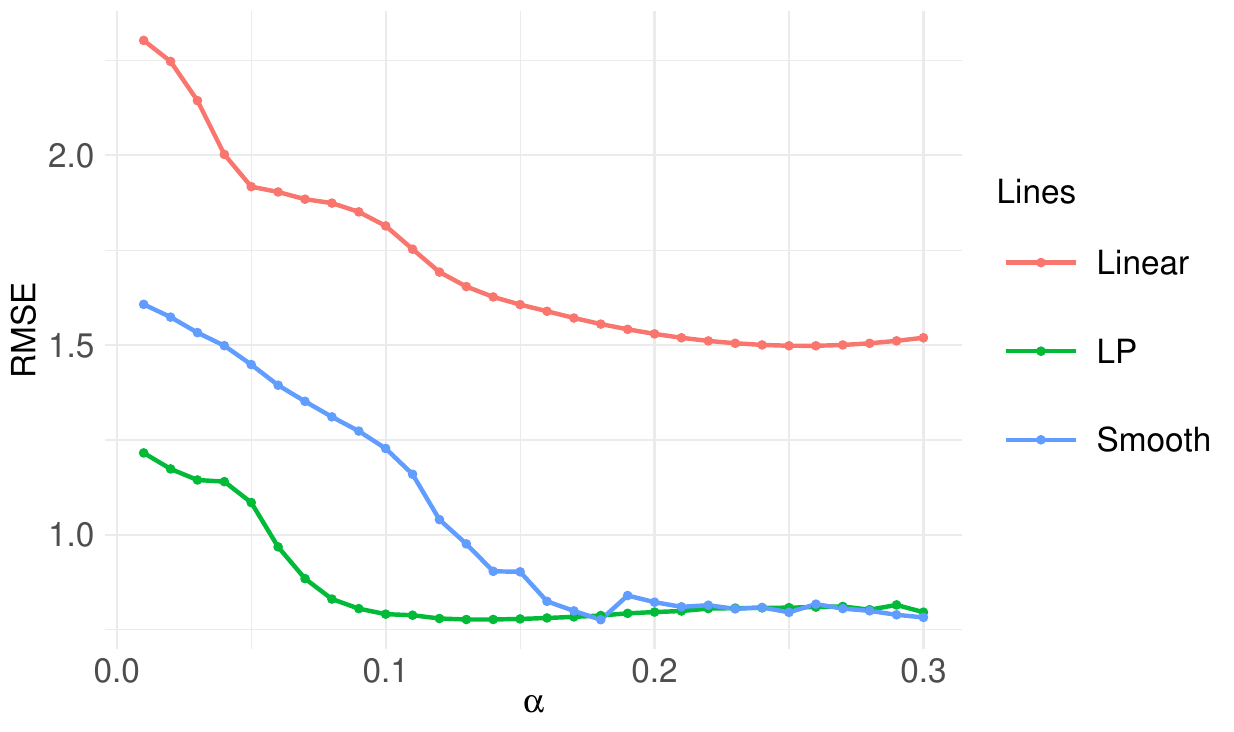}
    \caption{n=2000}
    \end{subfigure}
  \begin{subfigure}[t]{0.45\textwidth}
			\includegraphics[width=\textwidth]{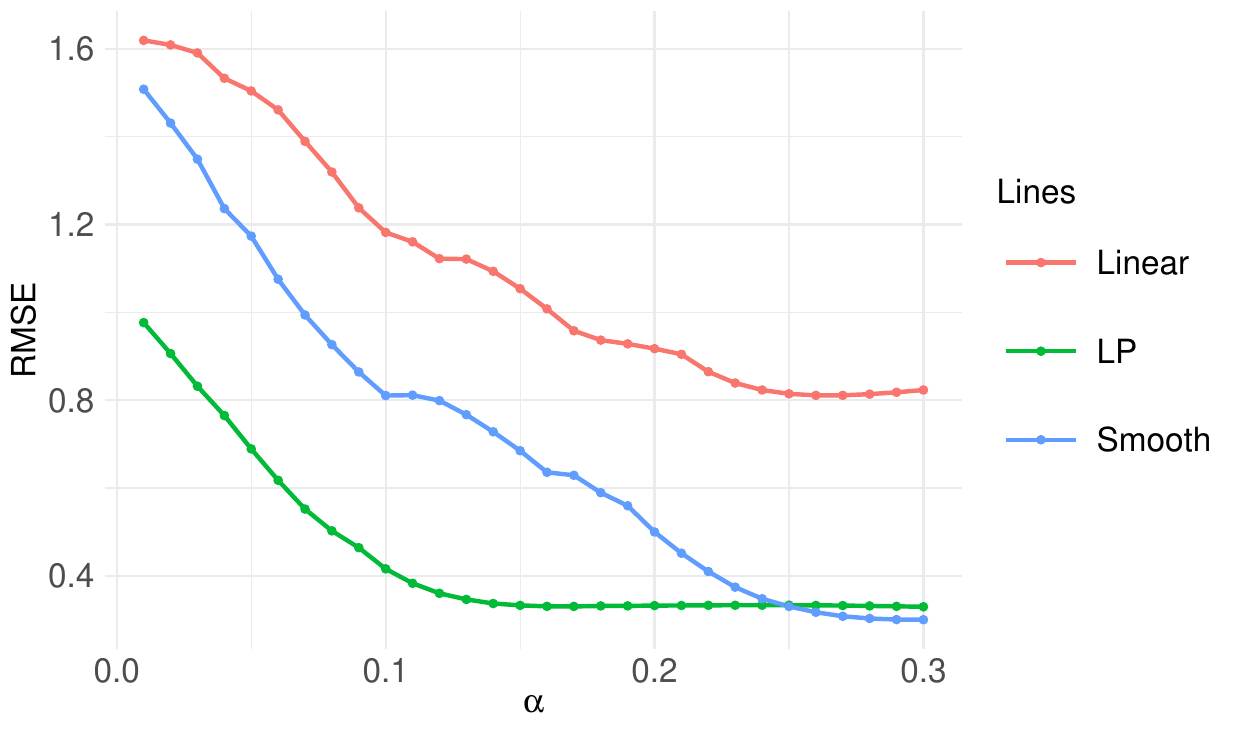}
    \caption{n=20000}
\end{subfigure}
\caption{Estimated RMSE versus $\alpha$, where $n^{-\alpha}$ is the nuisance convergence rate. }
\label{fig:rmse-misspecification}
\end{figure} 
Results are summarized in Figure \ref{fig:rmse-misspecification}. The projection approach exhibits a larger RMSE compared to  nonparametric methods. In contrast, our nonparametric methods achieve lower error and do not require prior knowledge of a correct or meaningful parametric model, making them more robust for real-world applications. Additional simulation studies that compare our doubly robust estimators with a plug-in estimator are provided in the Appendix \ref{appendix:simulation}.

\section{Application}
\label{sec:app}

In this section, we apply our proposed methodology to a study on infant mortality. The original study by \citet{Lorch:2012} aimed to estimate the effect of delivery at high-level neonatal intensive care units (NICUs) on infant mortality. High-level NICUs offer specialized delivery teams, advanced imaging capabilities, and sustained mechanical ventilation, whereas low-level NICUs are designed for routine deliveries and provide only basic care for lower-risk infants. Estimating the causal effect of high-level NICUs is challenging because they typically serve higher-risk patients, leading to potential confounding.\\

\citet{Lorch:2012} analyzed data on all premature births in Pennsylvania from 1995 to 2006. Although the dataset included baseline covariates such as birth weight, gestational age, race, and maternal comorbidities, it lacked important confounders, such as detailed physiological information about the mother and infant. Therefore, causal methods assuming no unmeasured confounding may be unreliable in this context.\\

To address this concern, \citet{Lorch:2012} used excess travel time as an instrumental variable (IV) for whether a baby was delivered at a high- versus low-level NICU. Specifically, they measured the additional travel time to the nearest high-level NICU relative to the closest low-level NICU. A greater excess travel time implies a higher cost (in time) to reach a high-level NICU, thereby discouraging some mothers from delivering there. Both \citet{Lorch:2012} and \citet{Baiocchi:2010} argue that excess travel time is a plausible instrument, as it influences the delivery location but likely has no direct effect on infant mortality.\\

We re-analyze a dataset containing information on $n=192,078$ births. In this analysis, the treatment is defined as delivery at a low-level NICU, and the instrumental variable is the measure of excess travel time. The outcome is a binary indicator of fetal death. Our goal is to estimate the proportion of deaths that could be prevented by delivery at a high-level NICU. A complete list of baseline covariates is provided in the Supplement. We begin by estimating the size of the maximal complier class and the corresponding treatment effect within this subgroup. We then estimate and compare the local instrumental variable (LIV) curves using both the local polynomial estimator and the smoothing approximation method.\\

We first estimate the size of the maximal complier class and the corresponding treatment effect within this subgroup using local polynomial estimators of the dose-response function evaluated at the boundary, as described in Section \ref{sec:dose-res-boundary}. The outcome model is estimated using an ensemble learner implemented via the \texttt{SuperLearner} package in \texttt{R}, incorporating fits from \texttt{glm}, \texttt{gam}, \texttt{ranger}, and \texttt{glmnet}. To estimate the conditional density $\pi$, we first estimate the conditional mean and variance of $A \mid \bX, Z$ using the same ensemble learner. We then apply kernel density estimation to the standardized residuals, defined as $(A - \mathbb{E}[A \mid \bX, Z]) / \operatorname{Var}(A \mid \bX, Z)^{1/2}$. The estimated proportion of the maximal complier class is 85\%, with a 95\% confidence interval of 78\% to 92\%, suggesting that a large share of mothers could be influenced to deliver at a low-level NICU (or discouraged from delivering at a high-level NICU) by varying excess travel time from its minimum to maximum. The estimated treatment effect within this complier class is 15 fewer deaths per 1,000 births, with a 95\% confidence interval of 7.7 to 22.5. This estimate is larger than the one reported in \citet{kennedy2019robust}, which relied on a parametric working model, though the confidence intervals overlap. Our results are also comparable in magnitude to those in \citet{Lorch:2012} and \citet{Baiocchi:2010}.\\

\begin{figure}[ht]
  \centering
  \begin{subfigure}[t]{0.45\textwidth}
    \includegraphics[width=\textwidth]{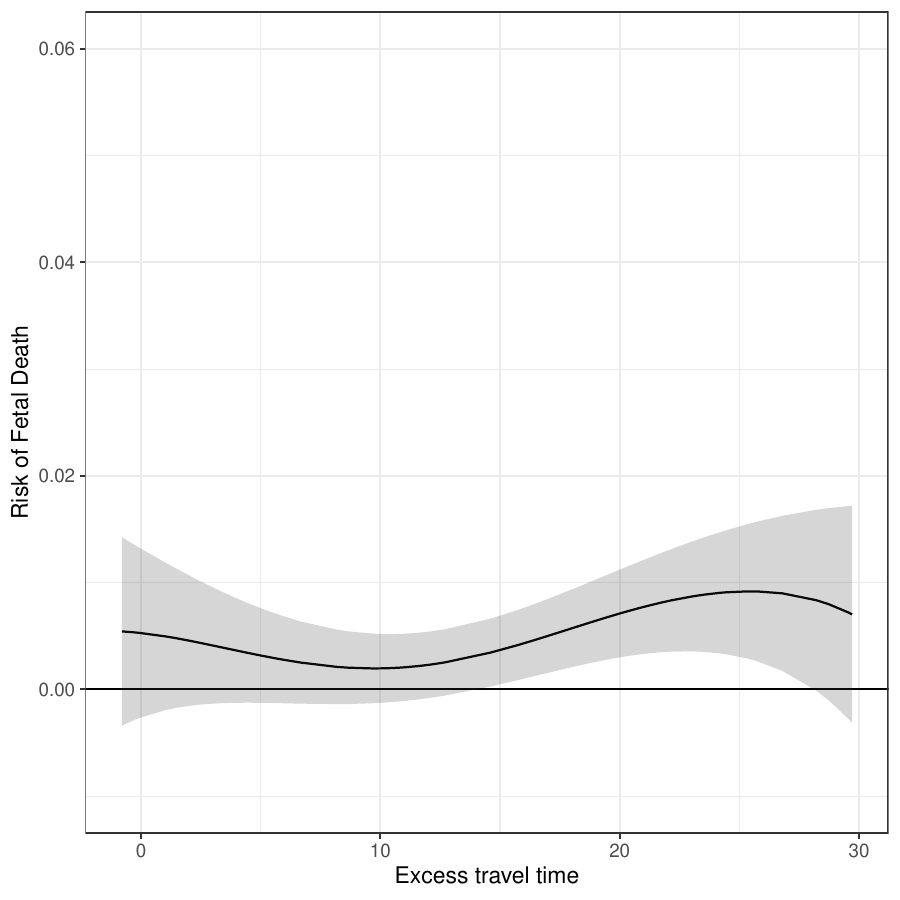}
    \caption{Local Polynomial}
   \label{fig:lp}
  \end{subfigure}
  \begin{subfigure}[t]{0.45\textwidth}
    \includegraphics[width=\textwidth]{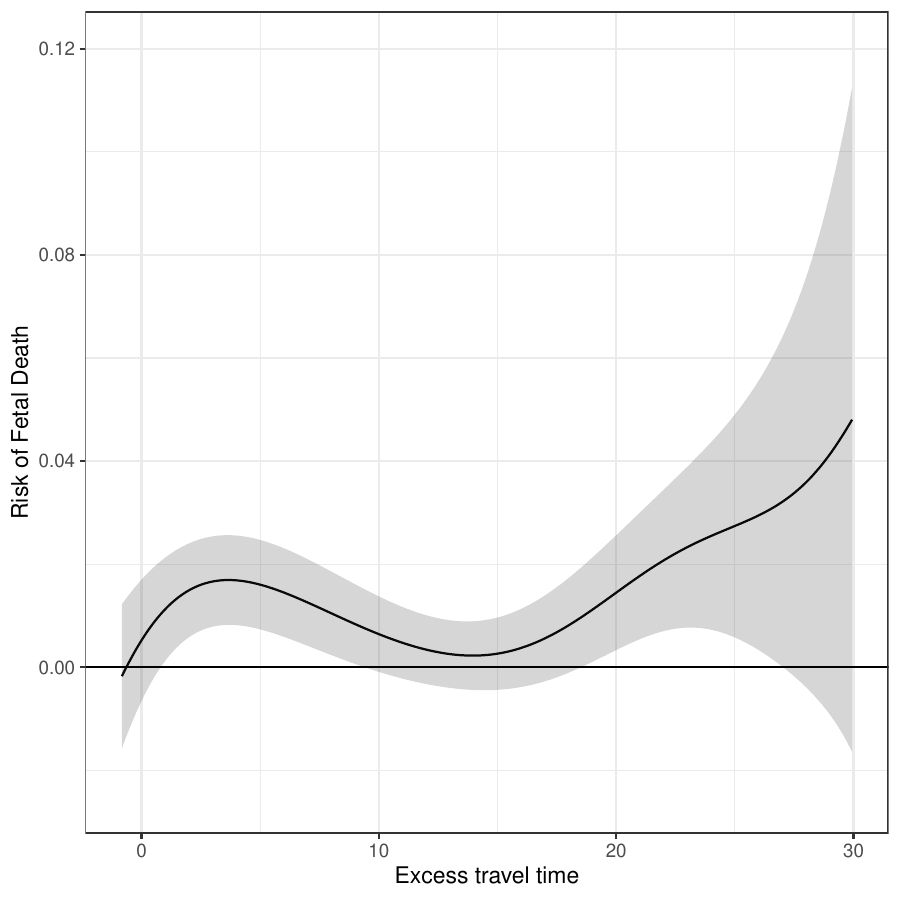}
    \caption{Smooth Approximation}
  \end{subfigure}
	\caption{LIV curve estimates via the local polynomial and smooth approximation approaches.}
	\label{fig:liv}
\end{figure} 

Estimation of the LIV curve involves two key components. First, we use the same ensemble learning approach to estimate the nuisance functions. Second, we apply the data-adaptive bandwidth selection procedures described in Appendix \ref{sec:bw-selector} for both the local polynomial estimator and the smoothing approximation method. In our analysis, varying the bandwidth within a reasonable range produced similar results, suggesting that the findings are robust to the choice of bandwidth. We restrict the LIV curve estimation to IV values below 30, as we observe instability in the estimates beyond this range due to data sparsity. This suggests that few individuals in the population have latent thresholds $T$ exceeding 30. \\

Figure~\ref{fig:liv} presents the LIV estimates. Overall, both methods exhibit similar patterns, indicating that delivery at a low-level NICU increases the risk of fetal death. Notably, the confidence intervals from both approaches nearly align along a horizontal line, suggesting that a constant-effect model may adequately capture the LIV curve. This finding is consistent with the choice of a constant-effect model in \citet{kennedy2019robust}, which relied on parametric working models for estimation.


\section{Discussion}
\label{sec:discussion}

In this paper, we study the problem of nonparametric estimation of treatment effects with a continuous IV. We begin by introducing two key estimands in the continuous IV setting: the LIV curve and the LATE. We then focus on estimating the dose-response function $\tau$ at the boundary and its derivative. To estimate $\tau$ at the boundary, we extend the approach in \cite{kennedy2017non} to a local polynomial estimator and generalize their analysis to accommodate nuisance functions with arbitrary smoothness. We further propose two doubly robust methods for estimating the derivative of the dose-response function and establish their theoretical properties (e.g., estimation rates and asymptotic normality) under appropriate conditions. The LIV curve is then obtained by taking the ratio of two estimated derivatives. All proposed methods are fully nonparametric and doubly robust, allowing for the use of flexible machine learning techniques for nuisance estimation while ensuring valid statistical inference. Finally, we illustrate our methods with an empirical study that uses excess travel time as an instrument to evaluate the treatment effects of high-level NICU on the risk of fetal death. \\

There are several possible extensions for future work. As pointed out in Section \ref{sec:assump}, the LATE estimation problem can be reduced to estimating the dose-response function at the boundary. Recently, \cite{schindl2024incremental} proposed an incremental propensity score method for estimating the boundary value of the dose-response function. Their analysis, however, is limited to Lipschitz-continuous outcome models. It would be interesting to explore whether their method can be extended to a more general smooth function class and leverage the additional smoothness for improved estimation. \\

Moreover, we address the ratio-of-derivative structure in the LIV curve by estimating the numerator and denominator separately. An important future direction is to develop methods that directly estimate the ratio. This may be particularly useful if the treatment effect (i.e., the ratio) is smoother and easier to estimate than the individual derivatives of the dose-response function, which may exhibit less smoothness and be more challenging to estimate accurately. Finally, we note a specific connection between the LIV curve and the LATE$(z,z')$:
$\gamma(z_0) = \lim_{h \rightarrow 0^+} \text{LATE}(z_0+h,z_0).$ Given the identification results in Equation~\eqref{eq:eff-max-complier} for the LATE$(z,z')$, this equation follows from the definition of the derivative. Thus, the LIV curve can be interpreted as the limit of LATE$(z,z')$ at the boundary of its domain $\{(z,z') : z > z', \, z,z' \in \mathcal{Z}\}$. Exploring estimation methods that explicitly consider this relationship between the LIV curve and LATE could provide new insights and improved techniques for estimation. \\

\bibliographystyle{apalike}
\bibliography{refer}

\begin{thebibliography}{}

\bibitem[Angrist et~al., 1996]{angrist1996identification}
Angrist, J.~D., Imbens, G.~W., and Rubin, D.~B. (1996).
\newblock Identification of causal effects using instrumental variables.
\newblock {\em Journal of the American Statistical Association}, 91(434):444--455.

\bibitem[Baiocchi et~al., 2014]{baiocchi2014}
Baiocchi, M., Cheng, J., and Small, D.~S. (2014).
\newblock Instrumental variable methods for causal inference.
\newblock {\em Statistics in medicine}, 33(13):2297--2340.

\bibitem[Baiocchi et~al., 2010]{Baiocchi:2010}
Baiocchi, M., Small, D.~S., Lorch, S., and Rosenbaum, P.~R. (2010).
\newblock Building a stronger instrument in an observational study of perinatal care for premature infants.
\newblock {\em Journal of the American Statistical Association}, 105(492):1285--1296.

\bibitem[Balke and Pearl, 1997]{Balke:1997}
Balke, A. and Pearl, J. (1997).
\newblock Bounds on treatment effects from studies with imperfect compliance.
\newblock {\em Journal of the American Statistical Association}, 92(439):1171--1176.

\bibitem[Basu et~al., 2007]{basu2007use}
Basu, A., Heckman, J.~J., Navarro-Lozano, S., and Urzua, S. (2007).
\newblock Use of instrumental variables in the presence of heterogeneity and self-selection: an application to treatments of breast cancer patients.
\newblock {\em Health economics}, 16(11):1133--1157.

\bibitem[Bickel et~al., 1993]{bickel1993efficient}
Bickel, P.~J., Klaassen, C.~A., Bickel, P.~J., Ritov, Y., Klaassen, J., Wellner, J.~A., and Ritov, Y. (1993).
\newblock {\em Efficient and adaptive estimation for semiparametric models}, volume~4.
\newblock Springer.

\bibitem[Bong and Lee, 2023]{bong2023local}
Bong, S. and Lee, K. (2023).
\newblock Local causal effects with continuous exposures: A matching estimator for the average causal derivative effect.
\newblock {\em arXiv preprint arXiv:2311.18532}.

\bibitem[Bonvini and Kennedy, 2022]{bonvini2022fast}
Bonvini, M. and Kennedy, E.~H. (2022).
\newblock Fast convergence rates for dose-response estimation.
\newblock {\em arXiv preprint arXiv:2207.11825}.

\bibitem[Bonvini et~al., 2023]{bonvini2023flexibly}
Bonvini, M., Zeng, Z., Yu, M., Kennedy, E.~H., and Keele, L. (2023).
\newblock Flexibly estimating and interpreting heterogeneous treatment effects of laparoscopic surgery for cholecystitis patients.
\newblock {\em arXiv preprint arXiv:2311.04359}.

\bibitem[Branson et~al., 2023]{branson2023causal}
Branson, Z., Kennedy, E.~H., Balakrishnan, S., and Wasserman, L. (2023).
\newblock Causal effect estimation after propensity score trimming with continuous treatments.
\newblock {\em arXiv preprint arXiv:2309.00706}.

\bibitem[Calonico et~al., 2018]{calonico2018effect}
Calonico, S., Cattaneo, M.~D., and Farrell, M.~H. (2018).
\newblock On the effect of bias estimation on coverage accuracy in nonparametric inference.
\newblock {\em Journal of the American Statistical Association}, 113(522):767--779.

\bibitem[Carneiro et~al., 2011]{carneiro2011estimating}
Carneiro, P., Heckman, J.~J., and Vytlacil, E.~J. (2011).
\newblock Estimating marginal returns to education.
\newblock {\em American Economic Review}, 101(6):2754--2781.

\bibitem[Colangelo and Lee, 2020]{colangelo2020double}
Colangelo, K. and Lee, Y.-Y. (2020).
\newblock Double debiased machine learning nonparametric inference with continuous treatments.
\newblock {\em arXiv preprint arXiv:2004.03036}.

\bibitem[De~Brabanter et~al., 2013]{de2013derivative}
De~Brabanter, K., De~Brabanter, J., De~Moor, B., and Gijbels, I. (2013).
\newblock Derivative estimation with local polynomial fitting.
\newblock {\em Journal of Machine Learning Research}, 14(1).

\bibitem[D{\'\i}az and van~der Laan, 2013]{diaz2013targeted}
D{\'\i}az, I. and van~der Laan, M.~J. (2013).
\newblock Targeted data adaptive estimation of the causal dose--response curve.
\newblock {\em Journal of Causal Inference}, 1(2):171--192.

\bibitem[Fan, 2018]{fan2018local}
Fan, J. (2018).
\newblock {\em Local polynomial modelling and its applications: monographs on statistics and applied probability 66}.
\newblock Routledge.

\bibitem[Fan and Gijbels, 1992]{fan1992variable}
Fan, J. and Gijbels, I. (1992).
\newblock Variable bandwidth and local linear regression smoothers.
\newblock {\em The Annals of Statistics}, pages 2008--2036.

\bibitem[Fan et~al., 2022]{fan2022estimation}
Fan, Q., Hsu, Y.-C., Lieli, R.~P., and Zhang, Y. (2022).
\newblock Estimation of conditional average treatment effects with high-dimensional data.
\newblock {\em Journal of Business \& Economic Statistics}, 40(1):313--327.

\bibitem[Gasser and M{\"u}ller, 1979]{gasser1979kernel}
Gasser, T. and M{\"u}ller, H.-G. (1979).
\newblock Kernel estimation of regression functions.
\newblock In {\em Smoothing Techniques for Curve Estimation: Proceedings of a Workshop held in Heidelberg, April 2--4, 1979}, pages 23--68. Springer.

\bibitem[Gasser et~al., 1985]{gasser1985kernels}
Gasser, T., Muller, H.-G., and Mammitzsch, V. (1985).
\newblock Kernels for nonparametric curve estimation.
\newblock {\em Journal of the Royal Statistical Society. Series B (Methodological)}, pages 238--252.

\bibitem[Glickman and Normand, 2000]{glickman2000derivation}
Glickman, M.~E. and Normand, S.-L.~T. (2000).
\newblock The derivation of a latent threshold instrumental variables model.
\newblock {\em Statistica Sinica}, pages 517--544.

\bibitem[Heckman, 1997]{heckman1997instrumental}
Heckman, J. (1997).
\newblock Instrumental variables: A study of implicit behavioral assumptions used in making program evaluations.
\newblock {\em Journal of human resources}, pages 441--462.

\bibitem[Heckman and Vytlacil, 2001]{heckman2001policy}
Heckman, J.~J. and Vytlacil, E. (2001).
\newblock Policy-relevant treatment effects.
\newblock {\em American Economic Review}, 91(2):107--111.

\bibitem[Heckman and Vytlacil, 2005]{heckman2005structural}
Heckman, J.~J. and Vytlacil, E. (2005).
\newblock Structural equations, treatment effects, and econometric policy evaluation 1.
\newblock {\em Econometrica}, 73(3):669--738.

\bibitem[Heckman and Vytlacil, 1999]{heckman1999local}
Heckman, J.~J. and Vytlacil, E.~J. (1999).
\newblock Local instrumental variables and latent variable models for identifying and bounding treatment effects.
\newblock {\em Proceedings of the national Academy of Sciences}, 96(8):4730--4734.

\bibitem[Hern\'{a}n and Robins, 2006]{hernan2006instruments}
Hern\'{a}n, M.~A. and Robins, J.~M. (2006).
\newblock Instruments for causal inference: An epidemiologists dream.
\newblock {\em Epidemiology}, 17(4):360--372.

\bibitem[Herrmann and Maechler, 2024]{herrmann2024lokern}
Herrmann, E. and Maechler, M. (2024).
\newblock {\em lokern: Kernel Regression Smoothing with Local or Global Plug-in Bandwidth}.
\newblock R package version 1.1-12,.

\bibitem[Imbens, 2014]{imbens2014instrumental}
Imbens, G. (2014).
\newblock Instrumental variables: An econometrician's perspective.
\newblock {\em Statistical Science}, 29(3):323--358.

\bibitem[Imbens and Angrist, 1994]{imbens1994}
Imbens, G.~W. and Angrist, J.~D. (1994).
\newblock Identification and estimation of local average treatment effects.
\newblock {\em Econometrica}, 62(2):467--475.

\bibitem[Kennedy, 2019]{kennedy2019nonparametric}
Kennedy, E.~H. (2019).
\newblock Nonparametric causal effects based on incremental propensity score interventions.
\newblock {\em Journal of the American Statistical Association}, 114(526):645--656.

\bibitem[Kennedy, 2024]{kennedy2024semiparametric}
Kennedy, E.~H. (2024).
\newblock Semiparametric doubly robust targeted double machine learning: a review.
\newblock {\em Handbook of Statistical Methods for Precision Medicine}, pages 207--236.

\bibitem[Kennedy et~al., 2020]{kennedy2020sharp}
Kennedy, E.~H., Balakrishnan, S., and G’Sell, M. (2020).
\newblock {Sharp instruments for classifying compliers and generalizing causal effects}.
\newblock {\em The Annals of Statistics}, 48(4):2008 -- 2030.

\bibitem[Kennedy et~al., 2024]{kennedy2024minimax}
Kennedy, E.~H., Balakrishnan, S., Robins, J.~M., and Wasserman, L. (2024).
\newblock Minimax rates for heterogeneous causal effect estimation.
\newblock {\em The Annals of Statistics}, 52(2):793--816.

\bibitem[Kennedy et~al., 2019]{kennedy2019robust}
Kennedy, E.~H., Lorch, S., and Small, D.~S. (2019).
\newblock Robust causal inference with continuous instruments using the local instrumental variable curve.
\newblock {\em Journal of the Royal Statistical Society Series B: Statistical Methodology}, 81(1):121--143.

\bibitem[Kennedy et~al., 2017]{kennedy2017non}
Kennedy, E.~H., Ma, Z., McHugh, M.~D., and Small, D.~S. (2017).
\newblock Non-parametric methods for doubly robust estimation of continuous treatment effects.
\newblock {\em Journal of the Royal Statistical Society Series B: Statistical Methodology}, 79(4):1229--1245.

\bibitem[Levis et~al., 2023]{levis2023covariate}
Levis, A.~W., Bonvini, M., Zeng, Z., Keele, L., and Kennedy, E.~H. (2023).
\newblock Covariate-assisted bounds on causal effects with instrumental variables.
\newblock {\em arXiv preprint arXiv:2301.12106}.

\bibitem[Lorch et~al., 2012]{Lorch:2012}
Lorch, S.~A., Baiocchi, M., Ahlberg, C.~E., and Small, D.~S. (2012).
\newblock The differential impact of delivery hospital on the outcomes of premature infants.
\newblock {\em Pediatrics}, 130(2):270--278.

\bibitem[Manski, 1990]{Manski:1990}
Manski, C.~F. (1990).
\newblock Nonparametric bounds on treatment effects.
\newblock {\em The American Economic Review Papers and Proceedings}, 80(2):319--323.

\bibitem[Mauro et~al., 2020]{mauro2020instrumental}
Mauro, J.~A., Kennedy, E.~H., and Nagin, D. (2020).
\newblock Instrumental variable methods using dynamic interventions.
\newblock {\em Journal of the Royal Statistical Society: Series A (Statistics in Society)}, 183(4):1523--1551.

\bibitem[M{\"u}ller, 1993]{muller1993boundary}
M{\"u}ller, H.-G. (1993).
\newblock On the boundary kernel method for non-parametric curve estimation near endpoints.
\newblock {\em Scandinavian Journal of Statistics}, pages 313--328.

\bibitem[Okui et~al., 2012]{okui2012doubly}
Okui, R., Small, D.~S., Tan, Z., and Robins, J.~M. (2012).
\newblock Doubly robust instrumental variable regression.
\newblock {\em Statistica Sinica}, pages 173--205.

\bibitem[Rakshit et~al., 2024]{rakshit2024local}
Rakshit, P., Levis, A., and Keele, L. (2024).
\newblock Local effects of continuous instruments without positivity.
\newblock {\em arXiv preprint arXiv:2409.07350}.

\bibitem[Robins, 1989]{robins1989analysis}
Robins, J.~M. (1989).
\newblock The analysis of randomized and non-randomized aids treatment trials using a new approach to causal inference in longitudinal studies.
\newblock {\em Health service research methodology: a focus on AIDS}, pages 113--159.

\bibitem[Robins, 1994]{robins1994correcting}
Robins, J.~M. (1994).
\newblock Correcting for non-compliance in randomized trials using structural nested mean models.
\newblock {\em Communications in Statistics-Theory and methods}, 23(8):2379--2412.

\bibitem[Robins and Rotnitzky, 2001]{robins2001comment}
Robins, J.~M. and Rotnitzky, A. (2001).
\newblock Comment on ``inference for semiparametric models: Some questions and an answer,'' by pj bickel and j. kwon.
\newblock {\em Statistica Sinica}, 11:920--936.

\bibitem[Rubin, 1974]{rubin1974estimating}
Rubin, D.~B. (1974).
\newblock Estimating causal effects of treatments in randomized and nonrandomized studies.
\newblock {\em Journal of educational Psychology}, 66(5):688.

\bibitem[Ruppert and Wand, 1994]{ruppert1994multivariate}
Ruppert, D. and Wand, M.~P. (1994).
\newblock Multivariate locally weighted least squares regression.
\newblock {\em The annals of statistics}, pages 1346--1370.

\bibitem[Ruppert et~al., 2003]{Ruppert2003semiparametric}
Ruppert, D., Wand, M.~P., and Carroll, R.~J. (2003).
\newblock {\em Semiparametric Regression}.
\newblock Cambridge Series in Statistical and Probabilistic Mathematics. Cambridge University Press.

\bibitem[Sawada et~al., 2024]{sawada2024local}
Sawada, M., Ishihara, T., Kurisu, D., and Matsuda, Y. (2024).
\newblock Local-polynomial estimation for multivariate regression discontinuity designs.
\newblock {\em arXiv preprint arXiv:2402.08941}.

\bibitem[Schindl et~al., 2024]{schindl2024incremental}
Schindl, K., Shen, S., and Kennedy, E.~H. (2024).
\newblock Incremental effects for continuous exposures.
\newblock {\em arXiv preprint arXiv:2409.11967}.

\bibitem[Semenova and Chernozhukov, 2021]{semenova2021debiased}
Semenova, V. and Chernozhukov, V. (2021).
\newblock Debiased machine learning of conditional average treatment effects and other causal functions.
\newblock {\em The Econometrics Journal}, 24(2):264--289.

\bibitem[Takatsu and Westling, 2024]{takatsu2024debiased}
Takatsu, K. and Westling, T. (2024).
\newblock Debiased inference for a covariate-adjusted regression function.
\newblock {\em Journal of the Royal Statistical Society Series B: Statistical Methodology}, page qkae041.

\bibitem[Tan, 2010]{tan2010marginal}
Tan, Z. (2010).
\newblock Marginal and nested structural models using instrumental variables.
\newblock {\em Journal of the American Statistical Association}, 105(489):157--169.

\bibitem[Tsybakov, 2009]{Tsybakov2009}
Tsybakov, A.~B. (2009).
\newblock {\em Introduction to Nonparametric Estimation}.
\newblock Springer Series in Statistics. Springer New York, New York, NY, 1st ed. 2009. edition.

\bibitem[Van Der~Laan and Dudoit, 2003]{van2003unified}
Van Der~Laan, M.~J. and Dudoit, S. (2003).
\newblock Unified cross-validation methodology for selection among estimators and a general cross-validated adaptive epsilon-net estimator: Finite sample oracle inequalities and examples.

\bibitem[van~der Laan and Robins, 2003]{vanderlaan2003unified}
van~der Laan, M.~J. and Robins, J.~M. (2003).
\newblock {\em Unified methods for censored longitudinal data and causality}.
\newblock Springer.

\bibitem[Van~der Vaart, 2000]{van2000asymptotic}
Van~der Vaart, A.~W. (2000).
\newblock {\em Asymptotic statistics}, volume~3.
\newblock Cambridge university press.

\bibitem[Vytlacil, 2002]{vytlacil2002independence}
Vytlacil, E. (2002).
\newblock Independence, monotonicity, and latent index models: An equivalence result.
\newblock {\em Econometrica}, 70(1):331--341.

\bibitem[Wasserman, 2006]{wasserman2006all}
Wasserman, L. (2006).
\newblock {\em All of nonparametric statistics}.
\newblock Springer Science \& Business Media.

\bibitem[Wu et~al., 2024]{wu2024matching}
Wu, X., Mealli, F., Kioumourtzoglou, M.-A., Dominici, F., and Braun, D. (2024).
\newblock Matching on generalized propensity scores with continuous exposures.
\newblock {\em Journal of the American Statistical Association}, 119(545):757--772.
\newblock PMID: 38524247.

\bibitem[Zeng et~al., 2024a]{zeng2024continuous}
Zeng, Z., Arbour, D., Feller, A., Addanki, R., Rossi, R.~A., Sinha, R., and Kennedy, E. (2024a).
\newblock Continuous treatment effects with surrogate outcomes.
\newblock In Salakhutdinov, R., Kolter, Z., Heller, K., Weller, A., Oliver, N., Scarlett, J., and Berkenkamp, F., editors, {\em Proceedings of the 41st International Conference on Machine Learning}, volume 235 of {\em Proceedings of Machine Learning Research}, pages 58306--58328. PMLR.

\bibitem[Zeng et~al., 2024b]{zeng2024causal}
Zeng, Z., Balakrishnan, S., Han, Y., and Kennedy, E.~H. (2024b).
\newblock Causal inference with high-dimensional discrete covariates.
\newblock {\em arXiv preprint arXiv:2405.00118}.

\bibitem[Zeng et~al., 2023]{zeng2023efficient}
Zeng, Z., Kennedy, E.~H., Bodnar, L.~M., and Naimi, A.~I. (2023).
\newblock Efficient generalization and transportation.
\newblock {\em arXiv preprint arXiv:2302.00092}.

\bibitem[Zhang and Chen, 2025]{zhang2025doubly}
Zhang, Y. and Chen, Y.-C. (2025).
\newblock Doubly robust inference on causal derivative effects for continuous treatments.

\bibitem[Zhou and Wolfe, 2000]{zhou2000derivative}
Zhou, S. and Wolfe, D.~A. (2000).
\newblock On derivative estimation in spline regression.
\newblock {\em Statistica Sinica}, pages 93--108.

\end{thebibliography}

\newpage

\appendix

\noindent \textbf{\Large{Appendix to ``Nonparametric Estimation of Treatment Effects Using a Continuous Instrumental Variable''}}

\section{Comparison with the Marginal Treatment Effect (MTE) Framework}\label{appendix:mte}

In this section, we compare the local IV curve estimand in \citet{kennedy2019robust} with the classic marginal treatment effect (MTE) framework~\citep{heckman1999local, heckman2001policy, heckman2005structural}.

The MTE framework is built on a latent threshold selection model of the form:
\[
A = I (U \leq \lambda(\bX, Z)), \quad U \sim \operatorname{Unif}(0,1) \mid \bX,
\]
where \( A \) denotes the treatment, \( U \) is an unobserved latent variable representing individual-level resistance to treatment, and \( Z \) is a continuous instrument. The unconfounded IV assumption \( U \perp Z \mid \bX \) is often imposed. The uniform distribution of \( U \mid \bX \) serves as a normalization and does not impose a substantive restriction, since any continuous \( U \mid \bX \) can be transformed to a uniform random variable via the probability integral transform.

Under this framework, the function \( \lambda(\bX, Z) := \mathbb{P}(A = 1 \mid \bX, Z) \) is the instrument-induced treatment propensity score, which matches the notation used in our work. The variable \( U \) can then be interpreted as a resistance threshold: an individual receives treatment if and only if their latent resistance \( U \) is below the propensity score \( \lambda(\bX, Z) \).
The MTE is then defined as 
\[
\operatorname{MTE}(\bx, u):=\mathbb{E}\left[Y^1-Y^0 \mid \bX=\bx, U=u\right]
\]
and interpreted as the treatment effects among subgroups with covariates $\bX=\bx$ and resistance level $U=u$. Multiple popular causal estimands can be expressed as functionals of the MTE, including the average treatment effect (ATE), the local average treatment effect (LATE), and the policy-relevant treatment effect (PRTE). Under standard instrumental variable assumptions, the MTE is identified as
\[
\operatorname{MTE}(\bx, u) = \frac{\partial}{\partial u} \, \mathbb{E}[Y \mid \bX = \bx, \lambda(\bX, Z) = u].
\]
Since this identification involves conditioning on an unknown function of the instrument, estimation of the MTE curve is often challenging in practice. As a result, parametric models and simple plug-in estimators are commonly used.

The local IV curve in \citet{kennedy2019robust}—a conditional version of our estimand in~\eqref{eq:LIV}—is defined as
\[
\operatorname{LIV}(\bx, t) := \mathbb{E}\left[Y^1 - Y^0 \mid \bX = \bx, T = t\right],
\]
where \( T \) is a latent threshold such that \( A = I(T \leq Z) \); see the discussion following Assumption~\ref{ass:thresh} and equation~\eqref{eq:LIV}. Both the MTE and the local IV curve capture treatment effects for individuals at the margin of indifference—those who would switch treatment status in response to a small change in the instrument. The key difference lies in how this margin is modeled: the MTE is defined on the scale of the instrument-induced propensity score, while the local IV curve is defined directly on the scale of the instrument.

Importantly, in the local IV framework, the latent threshold \( T \) is allowed to have an arbitrary (continuous) distribution. In contrast, the MTE framework typically assumes a uniform latent resistance variable \( U \sim \text{Unif}(0,1) \) for identification and interpretability. Thus, the local IV curve provides a more general framework that enables identification even when the distribution of the latent threshold is unknown and non-uniform.

Moreover, estimation in the MTE framework often relies on strong parametric assumptions or plug-in estimators that assume discrete covariates \( \bX \) \citep{zeng2024causal}. These approaches may yield inconsistent estimators when the parametric model is misspecified or when covariates are high-dimensional. In contrast, \citet{kennedy2019nonparametric} established nonparametric identification of the local IV curve, and we further develop fully nonparametric, doubly robust estimators that allow for flexible machine learning methods in nuisance estimation. Our approach remains consistent under appropriate conditions even with reasonably high-dimensional covariates and misspecification of one nuisance function. See Sections~\ref{sec:loc_poly} and~\ref{sec:smoothing} for details.

\section{Detailed Estimation Algorithm for the Local Polynomial Estimator}

\begin{algorithm}[H]
\caption{Doubly Robust Estimator of the Dose-response Function and its derivative}\label{alg:DR-dose-response}
\begin{algorithmic}[1] 
    \Require Three independent samples of $n$ i.i.d observations of $\bO$ $D_1^n, D_2^n, T^n$. Here $D^n = (D_1^n, D_2^n)$ serves as the training set for estimating the nuisance functions, and pseudo-outcome regression is performed on $T^n$. 
    \Ensure Estimators of the dose-response function and its first-order derivative.
    
    
    \State Nuisance functions training: Construct estimates of $\mu, \pi$ using $D_1^n$. Then use $D_2^n$ to estimate the marginal density $f$ and get an initial estimator of $\tau(z)$ as
        \[
        \hat{f}(z) = \frac{1}{n} \sum_{i \in D_2^n} \hat{\pi}(z \mid \bX_i) , \quad\widehat{\tau}_0(z) = \frac{1}{n} \sum_{i \in D_2^n} \hat{\mu}(\bX_i, z) .
        \]
    \State Pseudo-outcome regression: Construct estimated pseudo-outcome
        \[
        \widehat{\xi}(\bO) = \frac{Y-\hat{\mu}(\bX, Z)}{\hat{\pi}(Z \mid \mathbf{X})} \hat{f}(Z)+\hat{\tau}_0(Z).
        \]
        for each observation in $T^n$ and regress the pseudo-outcomes on the treatment $Z$ in $T^n$ using local polynomial regression
        \[
        \hat{\boldsymbol{\beta}}_h(z_0)=\underset{\boldsymbol{\beta} \in \mathbb{R}^{p+1}}{\arg \min} \, \mathbb{P}_n\left[K_{h }(Z-z_0)\left\{\hat{\xi}(\bO )-\bg_{h }(Z-z_0)^{\mathrm{T}} \boldsymbol{\beta}\right\}^2\right]
        \]
        to obtain 
        \[
        \hat{\tau}(z_0) = \be_1^{\top}\hat{\boldsymbol{\beta}}_h(z_0), \quad \hat{\theta}(z_0) = \be_2^{\top} \hat{\boldsymbol{\beta}}_h(z_0)/h, z_0 \in \mathcal{Z}_0.
        \]

    \State (Optional) Cross-fitting: Swap the role of $D_1^n, D_2^n, T^n$ and repeat steps 1 and 2. Use the average of different estimates as the final estimator of $\tau(z_0), \theta(z_0)$.
    
    \noindent  \Return the estimator for dose-response $\hat{\tau}$ and its derivative $ \hat{\theta}$.
\end{algorithmic}
\end{algorithm}

\section{Adaptive Bandwidth Selection}
\label{sec:bw-selector}
In the main text, we propose two methods for estimating the derivative of the dose-response curves, both depending on a tuning parameter $h$. In this section, we propose a practical approach for model selection, which can be applied to selecting the bandwidth $h$. Specifically, let $\Theta$ be the set of candidate estimators for $\theta$. For a fixed $\bar{\theta} \in \Theta$, we evaluate its performance using the following risk function:
\[
\int (\bar{\theta}(z_0)-\theta(z_0))^2 w(z_0) d z_0,
\]
where $w$ is a weight function specified by the researcher. The model selection problem involves finding the function $\theta^{\star} \in \Theta$ that minimizes the weighted $L_2$-distance between $\bar{\theta}$ and $\theta$:
\[
\begin{aligned}
   \theta^{\star} = &\, \argmin_{\bar{\theta} \in \Theta} \int (\bar{\theta}(z_0)-\theta(z_0))^2 w(z_0) d z_0 \\
   = &\, \argmin_{\bar{\theta} \in \Theta}  \int \left(\bar{\theta}^2(z_0)-2\bar{\theta}(z_0)\theta(z_0)\right) w(z_0) d z_0.
\end{aligned}
\]
We define the pseudo-risk function as $R(\bar{\theta}) = \int (\bar{\theta}^2(z_0)-2\bar{\theta}(z_0)\theta(z_0)) w(z_0) d z_0$.
Notably, the bandwidth selection problem can be reframed as a model selection problem. Given a set of candidate bandwidths $\mathcal{H}$, the optimal bandwidth can be selected by solving the following problem:
\[
h^{\star} \in \argmin_{h \in \mathcal{H}} \int \left(\hat{\theta}_h^2(z_0) -2\hat{\theta}_h(z_0) \theta(z_0) \right)w(z_0) d z_0
\]
where $\hat{\theta}_h$ is the estimator obtained using bandwidth $h$.

In the standard cross-validation framework, the risk can typically be estimated directly from the observed outcomes. However, in our problem, the pseudo-risk depends on the unknown nuisance functions, making it challenging to estimate in a straightforward way. To address this, we derive a doubly robust loss function for  $R(\bar{\theta})$ and then apply the cross-validation framework for model selection \citep{van2003unified, kennedy2019robust}.

The key idea is to treat $R(\bar{\theta})$ as as a functional of the observed data. By deriving its influence function, we can construct a doubly robust estimator for $R(\bar{\theta})$ and hence evaluate the performance of a given candidate $\bar{\theta}$. The following proposition summarizes the influence function for $R(\bar{\theta})$. 

\begin{prop}\label{prop:IF-pseudorisk}
    Suppose the weight function $w(z)$ is continuously differentiable in $z$ and $w(z)=0$ for $z \notin \mathcal{Z}$. Further assume the candidate $\bar{\theta}$ is continuously differentiable. Under a nonparametric model, the (uncentered) influence function of $R(\bar{\theta})$ for fixed $\bar{\theta}$ and $w$ is
    \[
    L_w(\bO) = \int \bar{\theta}(z)^2 w(z )dz + 2 \left(\int \frac{d}{d z} \{ w(z) \bar{\theta}(z)\} \mu(\bX,z)dz + \left.\frac{d}{d z} \{ w(z) \bar{\theta}(z)\} \right|_{z=Z}\frac{Y-\mu(\bX,Z)}{\pi(Z\mid \bX)} \right).
    \]
\end{prop}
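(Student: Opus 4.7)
The plan is to split $R(\bar{\theta})$ into a distribution-free piece and a pathwise-differentiable functional, then derive the influence function of the latter by combining integration by parts with standard efficiency calculus for integrated dose-response curves. Write
\[
R(\bar{\theta}) \;=\; \underbrace{\int \bar{\theta}(z)^2 w(z)\,dz}_{=:A} \;-\; 2\underbrace{\int \bar{\theta}(z)\, w(z)\,\theta(z)\,dz}_{=:B}.
\]
The term $A$ depends only on the user-specified $\bar{\theta}$ and $w$, so it is a constant and its uncentered IF is $A$ itself. All the work is in finding the IF of $B$.

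Since $\theta(z)=\tau'(z)$ with $\tau(z)=\mathbb{E}_{\bX}[\mu(\bX,z)]$, I use integration by parts to move the derivative off the unknown $\tau$:
\[
B \;=\; \int \bar{\theta}(z) w(z)\,\tau'(z)\,dz \;=\; \bigl[\bar{\theta}(z) w(z)\tau(z)\bigr]_{\partial\mathcal{Z}} \;-\; \int g(z)\,\tau(z)\,dz \;=\; -\int g(z)\,\tau(z)\,dz,
\]
where $g(z):=\frac{d}{dz}\{\bar{\theta}(z)w(z)\}$; the boundary term vanishes by the assumption that $w(z)=0$ outside $\mathcal{Z}$ together with the continuous differentiability of $\bar{\theta}$ and $w$. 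This step is the crux of the argument: it replaces the pointwise derivative $\theta(z)$ (which is \emph{not} pathwise differentiable) by an integral against the known function $g$, reducing the problem to a weighted integrated dose-response functional.

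The residual object to analyze is $\psi := \int g(z)\tau(z)\,dz = \mathbb{E}_{\bX}\!\!\int g(z)\mu(\bX,z)\,dz$. Using the Horvitz--Thompson identity $\psi = \mathbb{E}[g(Z)Y/\pi(Z\mid\bX)]$, or equivalently by computing the Gateaux derivative along a smooth one-parameter submodel with score $s = s_Y + s_{Z\mid\bX} + s_{\bX}$, one recovers the familiar augmented-IPW form
\[
\phi(\bO;\psi) \;=\; \int g(z)\,\mu(\bX,z)\,dz \;+\; \frac{g(Z)}{\pi(Z\mid\bX)}\bigl(Y-\mu(\bX,Z)\bigr),
\]
which is the well-known (uncentered) nonparametric IF for weighted means of the dose-response curve; the $s_{Z\mid\bX}$ component cancels, giving the double robustness. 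Combining, $R(\bar{\theta})=A-2B=A+2\psi$, hence $L_w(\bO)=A+2\phi(\bO;\psi)$, which is exactly the expression in the statement once $g$ is written out.

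The main obstacle is bookkeeping rather than conceptual: one must justify (i) vanishing of the boundary term in the integration by parts, which follows from the support hypothesis on $w$; (ii) the interchange of $d/d\epsilon$ and $\int g(z)\mu_\epsilon(\bX,z)\,dz$ in the submodel calculation, which requires only mild regularity on $\mu$ and boundedness of $g$ (guaranteed by the smoothness of $\bar{\theta}w$ on the compact effective support of $w$); and (iii) verifying the Riesz identity $\frac{d}{d\epsilon}\psi_\epsilon\big|_{\epsilon=0}=\mathbb{E}[\phi(\bO;\psi)\,s(\bO)]$ in each of the three tangent directions, which is a routine computation using iterated expectations.
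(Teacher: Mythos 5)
Your proof is correct, and it takes a genuinely different route from the paper. The paper obtains the influence function by specializing an existing result (Theorem 4 of Kennedy et al.\ 2019, with $\bV=\emptyset$ and the instrument playing the role of the treatment), and then performs a single integration by parts, $\int \frac{d}{dz}\{w(z)\bar{\theta}^2(z)\}\,z\,dz=-\int w(z)\bar{\theta}^2(z)\,dz$, to put the answer in the stated form. You instead derive the result from first principles: you split $R(\bar{\theta})$ into the known constant $\int\bar{\theta}^2 w$ (whose uncentered IF is itself) and the term $\int \bar{\theta}\,w\,\tau'$, use integration by parts with $g=\frac{d}{dz}\{\bar{\theta}w\}$ to convert the non-pathwise-differentiable derivative functional into the weighted integrated dose-response functional $\psi=\ME_{\bX}\int g(z)\mu(\bX,z)\,dz$, and then invoke (and sketch the verification of) the standard AIPW influence function for $\psi$. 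Your boundary-term argument is exactly what the hypothesis $w=0$ off $\mathcal{Z}$ together with continuity is there for, and your sign bookkeeping ($B=-\psi$, so $R=A+2\psi$) lands on precisely the stated $L_w$. What each approach buys: the paper's proof is short because it leans on prior machinery whose own derivation is hidden in the citation, while yours is self-contained and makes transparent where the double robustness comes from (the cancellation in the $Z\mid\bX$ tangent direction) and why the smoothness/support assumptions on $w$ and $\bar{\theta}$ are needed; the only part you leave as ``routine'' is the Riesz-identity verification for $\psi$, which is indeed standard and no less detailed than what the paper itself provides.
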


In practice, researchers can specify $w$ based on subject-matter considerations for learning about the curve. When such information is unavailable, a natural choice is the marginal density of $Z$, i.e., $w(z) = f(z)$. Using this choice and following the cross-validation model selection framework \citep{van2003unified}, we split the sample into two subsets, $D_1$ and $D_2$. To select a bandwidth, for each $h \in \mathcal{H}$, we use $D_1$ to obtain the nuisance functions estimates $\hat{\mu}, \hat{\pi}, \hat{f}$ and construct the estimator $\hat{\theta}_h$. The risk $R(\hat{\theta}_h)$ is then estimated on $D_2$ as:
\[
\hat{R}_2(\hat{\theta}_h) = \MP_{n_2} \left[ \hat{\theta}_h^2(Z)+2\left(\int \frac{d}{d z} \{ \hat{f}(z) \hat{\theta}_h(z)\} \hat{\mu}(\bX,z)dz + \left.\frac{d}{d z} \{ \hat{f}(z) \hat{\theta}_h(z)\} \right|_{z=Z}\frac{Y-\hat{\mu}(\bX,Z)}{\hat{\pi}(Z\mid \bX)} \right) \right],
\]
where the sample average is taken over $D_2$. To improve robustness, the roles of $D_1$ and $D_2$ can be swapped to obtain another risk estimator,$\hat{R}_2(\hat{\theta}_h)$. The bandwidth $h^{\star} $ is then selected by minimizing the combined risk estimate:
\[
\hat{R}(\hat{\theta}_h) :=(\hat{R}_1(\hat{\theta}_h)+\hat{R}_2(\hat{\theta}_h))/2.
\]
\cite{van2003unified} provides conditions under which $h^{\star} $ is asymptotically equivalent to the oracle selector that has access to the true nuisance functions. For additional details and discussion, we refer readers to \cite{van2003unified}.

When the local IV curve \eqref{eq:liv-identification} is of interest, the doubly robust cross-validation method from \cite{kennedy2019robust} can be used. This approach directly targets the local IV curve rather than separately estimating the numerator and denominator in \eqref{eq:liv-identification}, potentially leading to improved performance.

\section{Variance Estimation for the Local IV Curve}\label{appendix:variance}
In the main texts, we discuss the asymptotic distributions of the proposed local polynomial and smooth approximation estimators for the derivative of the dose-response functions. The local IV curve is the ratio of two such curves. To quantify the uncertainty of the ratio, where the numerator and the denominator can have different convergence rates, we need the following results.

\begin{lemma}\label{lemma:CLT-ratio}
    Suppose $U_n, V_n$ are sequences of random variables and $a_n, b_n, u_n, v_n$ are non-random sequences satisfying 
    \[
    a_n, b_n \rightarrow \infty,\, u_n \rightarrow \theta_U, \, v_n \rightarrow \theta_V,
    \]
    as $n\rightarrow \infty$, where $\theta_U\in \mathbb{R}, \theta_V \neq 0$. Further assume 
    \[
    a_n (U_n-u_n) \stackrel{d}{\rightarrow} N(0, \sigma_U^2), \,b_n (V_n-v_n) \stackrel{d}{\rightarrow} N(0, \sigma_V^2),
    \]
    then for the asymptotic distribution of the ratio $U_n/V_n$, we have
    \begin{enumerate}
        \item If $a_n/b_n \rightarrow \infty$, we have
        \[
        b_n \left(\frac{U_n}{V_n} -\frac{u_n}{v_n}\right)\stackrel{d}{\rightarrow} N(0, \theta_U^2\sigma_V^2/\theta_V^4).
        \]
        \item If $a_n/b_n \rightarrow 0$, we have 
        \[
        a_n \left(\frac{U_n}{V_n} -\frac{u_n}{v_n}\right)\stackrel{d}{\rightarrow} N(0, \sigma_U^2/\theta_V^2).
        \]
        \item If $a_n = b_n$ and further assume $a_n[(U_n, V_n)^\top - (u_n, v_n)^\top] \stackrel{d}{\rightarrow} N(\boldsymbol{0}, \boldsymbol{\Sigma})$, we have
        \[
        a_n \left(\frac{U_n}{V_n} -\frac{u_n}{v_n}\right)\stackrel{d}{\rightarrow} N(0, (1/\theta_V,-\theta_U/\theta_V^2) \boldsymbol{\Sigma} (1/\theta_V,-\theta_U/\theta_V^2)^\top).
        \]
    \end{enumerate}
\end{lemma}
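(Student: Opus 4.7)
The plan is to work from the exact algebraic identity
\[
\frac{U_n}{V_n} - \frac{u_n}{v_n} \;=\; \frac{U_n - u_n}{V_n} \;-\; \frac{u_n}{v_n} \cdot \frac{V_n - v_n}{V_n},
\]
which cleanly separates the numerator fluctuation $U_n - u_n$ from the denominator fluctuation $V_n - v_n$. Since $b_n(V_n - v_n) \stackrel{d}{\rightarrow} N(0,\sigma_V^2)$ with $b_n \to \infty$ and $v_n \to \theta_V$, one obtains $V_n \to \theta_V$ in probability (in particular $1/V_n \to 1/\theta_V$ in probability, with $V_n \neq 0$ on events of probability tending to one). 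Analogously $U_n \to \theta_U$ in probability and $u_n/v_n \to \theta_U/\theta_V$. These consistency facts are what will unlock every Slutsky-type step that follows.

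For case~1, multiply the identity by $b_n$. The first term becomes $(b_n/a_n)\cdot a_n(U_n - u_n)/V_n$, which is $o_{\MP}(1)$ because $b_n/a_n \to 0$ while $a_n(U_n - u_n)$ and $1/V_n$ are both $O_{\MP}(1)$. The remaining term $-(u_n/v_n)\cdot b_n(V_n - v_n)/V_n$ converges in distribution via Slutsky to $-(\theta_U/\theta_V^2)\cdot N(0,\sigma_V^2) = N(0,\theta_U^2 \sigma_V^2/\theta_V^4)$. Case~2 is the mirror image: multiply by $a_n$, observe that $-(u_n/v_n)(a_n/b_n)\cdot b_n(V_n - v_n)/V_n = o_{\MP}(1)$ because $a_n/b_n \to 0$, and apply Slutsky to $a_n(U_n - u_n)/V_n$ to get the stated $N(0,\sigma_U^2/\theta_V^2)$ limit.

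Case~3 is qualitatively different because the two error components are now of the same order, so the marginal CLTs alone would give an incorrect variance whenever $U_n$ and $V_n$ are correlated; this is exactly why the statement adds the joint-convergence hypothesis. The cleanest route is the delta method applied to $f(u,v) = u/v$, which is continuously differentiable at $(\theta_U,\theta_V)$ with $\nabla f(\theta_U,\theta_V) = (1/\theta_V,\,-\theta_U/\theta_V^2)^\top$. Since $(u_n,v_n) \to (\theta_U,\theta_V)$ and $a_n\bigl[(U_n,V_n)^\top - (u_n,v_n)^\top\bigr] \stackrel{d}{\to} N(\boldsymbol{0},\boldsymbol{\Sigma})$, a mean-value expansion $f(W_n) - f(w_n) = \nabla f(\tilde W_n)^\top (W_n - w_n)$ combined with $\tilde W_n \to (\theta_U,\theta_V)$ and Slutsky yields the claimed $N\!\bigl(0,\,(1/\theta_V,-\theta_U/\theta_V^2)\,\boldsymbol{\Sigma}\,(1/\theta_V,-\theta_U/\theta_V^2)^\top\bigr)$ limit.

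The only subtle point is structural rather than technical: in cases~1 and~2, the argument truly relies on the ``wrong-rate'' term being $o_{\MP}(1)$, not merely $O_{\MP}(1)$, so the strict rate separation $a_n/b_n \to \infty$ or $a_n/b_n \to 0$ cannot be relaxed to bounded ratios without reintroducing joint convergence. Identifying and cleanly invoking the correct rate-separation step in the decomposition is the main obstacle to writing the proof carefully, and it is also what distinguishes the three regimes; everything else is Slutsky bookkeeping.
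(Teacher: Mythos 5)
Your proof is correct and follows essentially the same route as the paper: the same decomposition of $U_n/V_n - u_n/v_n$ into numerator and denominator fluctuations with Slutsky handling the dominant term in cases 1 and 2, and a delta-method argument at $(\theta_U,\theta_V)$ for case 3. The only cosmetic difference is that the paper invokes the delta method for $b_n(1/V_n - 1/v_n)$ and cites the uniform delta method of van der Vaart for case 3, whereas you use $1/V_n \stackrel{P}{\rightarrow} 1/\theta_V$ directly and prove the case-3 expansion by a mean-value argument, which is equivalent.
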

In Lemma \ref{lemma:CLT-ratio}, the centralization terms $u_n$ and $v_n$ are allowed to depend on $n$. We can set $U_n = \hat{\theta}^Y(z_0)$ and $V_n = \hat{\theta}^A(z_0)$, with $u_n$ and $v_n$ chosen according to the estimation methods applied. This allows us to obtain the asymptotic distribution of the ratio $\hat{\theta}^Y(z_0)/\hat{\theta}^A(z_0)$ and estimate its asymptotic variance using the individual variance of $\hat{\theta}^Y(z_0),\hat{\theta}^A(z_0)$ accordingly.

The above approach may require knowledge of the convergence rates of the numerator and denominator. Alternatively, when the convergence rates are unknown and we cannot distinguish among the three cases, we can use an asymptotic expansion approach. Suppose the following asymptotic expansions hold for the numerator and denominator:
\[
\hat{\theta}_{h_1}^Y(z_0)-{\theta}_{h_1}^Y(z_0)=(\MP_n-\MP)[\phi_{h_1}^Y(\bO;z_0)] + o_{\MP}\left(1/\sqrt{nh_1^3}\right),
\]
\[
\hat{\theta}_{h_2}^A(z_0)-{\theta}_{h_2}^A(z_0)=(\MP_n-\MP)[\phi_{h_2}^A(\bO;z_0)] + o_{\MP}\left(1/\sqrt{nh_2^3}\right),
\]
where ${\theta}_{h_1}^Y(z_0), {\theta}_{h_2}^A(z_0)$ are smoothed versions of the derivative of the dose-response function as discussed in Section \ref{sec:loc_poly}--\ref{sec:smoothing}. Then by Taylor's expansion, we have
\[
\frac{\hat{\theta}_{h_1}^Y(z_0)}{\hat{\theta}_{h_2}^A(z_0)} - \frac{{\theta}_{h_1}^Y(z_0)}{{\theta}_{h_2}^A(z_0)} = (\MP_n-\MP) \left[ \frac{\phi_{h_1}^Y(\bO;z_0)}{{\theta}_{h_2}^A(z_0)}-\frac{{\theta}_{h_1}^Y(z_0)}{{\theta}_{h_2}^A(z_0)^2} \phi_{h_2}^A(\bO;z_0)\right]+o_{\MP}\left(1/\sqrt{nh_1^3}+1/\sqrt{nh_2^3}\right).
\]
The variance can then be estimated by 
\[
\frac{1}{n}\hat{\operatorname{Var}} \left(\frac{\hat{\phi}_{h_1}^Y(\bO;z_0)}{\hat{\theta}_{h_2}^A(z_0)}-\frac{\hat{\theta}_{h_1}^Y(z_0)}{\hat{\theta}_{h_2}^A(z_0)^2} \hat{\phi}_{h_2}^A(\bO;z_0)\right).
\]
It is easy to see that this approach automatically adapts to the convergence rates of the numerator and denominator without requiring prior knowledge of which one has a faster rate. The influence functions in the linear expansion for the local polynomial estimator and the smooth approximation estimator are given by (take the numerator as an example)
\[
\begin{aligned}
   \widehat{\phi}_{h }^{lp}(\mathbf{O};z_0)= &\,\frac{1}{h}\mathbf{e}_2^{\top} \widehat{\mathbf{D}}_{h z_0}^{-1} \mathbf{g}_{h}(Z-z_0) K_{h}(Z-z_0)\left(\widehat{\xi}(\mathbf{O})-\mathbf{g}_{h}^{\top}(Z-z_0) \widehat{\boldsymbol{\beta}}_h(z_0)\right)\\
   &\,+\frac{1}{h}\mathbf{e}_2^{\top} \widehat{\mathbf{D}}_{h z_0}^{-1} \int \mathbf{g}_{h}(t-z_0) K_{h }(t-z_0) \widehat{\mu}( \mathbf{X},t) d \mathbb{P}_n(t)-\widehat{\theta}_h(z_0),
\end{aligned}
\]
\[
 \widehat{\phi}_{h }^{sm}(\mathbf{O};z_0) = -K_h'(Z-z_0)  \frac{Y-\hat{\mu}(\bX,Z)}{\hat{\pi}(Z\mid \bX)} - \int  \hat{\mu}(\bX, z)K_h'(z-z_0) dz.
\]

\section{Additional Simulation Results}
\label{appendix:simulation}

In this section, we further evaluate the finite-sample properties of the proposed methods through empirical experiments. We compare the doubly robust estimators for the derivative of the dose-response function, introduced in Sections \ref{sec:loc_poly} and \ref{sec:smoothing}, with a plug-in-style estimator and illustrate their appealing properties. The data-generating process is as follows: The covariates $\bX$ are drawn from a multivariate Gaussian distribution:
\begin{equation*}
    \bX = (X_1,X_2,X_3,X_4) \sim N(0, \bI_4),  
\end{equation*}
Conditioning on the covariates $\bX$, the treatment $Z$ is sampled from $N(\eta(\bX),1)$ with
\[
\eta(\bX) = -0.8 + 0.1X_1 + 0.1X_2 - 0.1X_3 + 0.2X_4.
\]
The outcome $Y$ 
\begin{gather*}
    Y \mid \bX, Z = 1 + (0.2, 0.2, 0.3, -0.1)\bX + Z(0.1-0.1X_1+0.1X_3 - 0.13^2Z^2) + \epsilon, \quad \epsilon \sim N(0,4).
\end{gather*}
Thus, in this setup, the derivative of the dose-response function is given by:
\begin{equation*}
    \theta(z) = \ME\bigg\{\frac{\partial\mu(\bX,z)}{\partial z}\bigg\} = 0.1 - 3\cdot0.13^{2}z^2.   
\end{equation*}
To evaluate the performance of the estimators under different nuisance estimation rates, we manually control the estimation error, which is suitable for simulation purposes \citep{zeng2023efficient, branson2023causal}. Specifically, we define the nuisance estimators as:
\begin{gather*}
    \hat{\eta}(\bX) = -0.8 + 0.1X_1 + 0.1X_2 - 0.1X_3 + 0.2X_4 + N(n^{-\alpha},n^{-2\alpha}), \\
    \hat\mu(\bX, Z) = 1 + (0.2, 0.2, 0.3, -0.1)\bX + Z[0.1-0.1X_1+0.1X_3 - 0.13^2(1+N(n^{-\alpha},n^{-2\alpha}))Z^2],
\end{gather*}
such that the estimation errors of $\hat\pi$ and $\hat\mu$ are $O_\MP(n^{-\alpha})$, allowing us to control their convergence rates through $\alpha$.
We implement the local polynomial estimator proposed in Section \ref{sec:loc_poly} and the smooth approximation approach from Section \ref{sec:smoothing}, and compare their performance against the plug-in-style estimator
\[
\MP_n\big[\partial \hat\mu(\bX,z)/\partial z\big]
\]
obtained by numerical differentiation using \texttt{numDeriv} package in \texttt{R}. Following the previous simulation studies \citep{kennedy2017non, branson2023causal, wu2024matching}, we compute the root mean squared error (RMSE) over $S$ replications, averaged across a set of values of $Z$, as follows:
\begin{equation*}
    \textnormal{RMSE} = \int \left[\frac{1}{S}\sum_{s=1}^S \{\hat\theta^s(z)-\theta(z)\}^2 \right]^{1/2} d \MP^*(z),
\end{equation*}
where the number of replications $S$ is set to $100$ and $\MP^*$ is the truncated marginal distribution of $Z$. The results are summarized in Figure \ref{fig:rmse-alpha}.

\begin{figure}[h]
	\centering
	  \begin{subfigure}[t]{0.45\textwidth}
			\includegraphics[width=\textwidth]{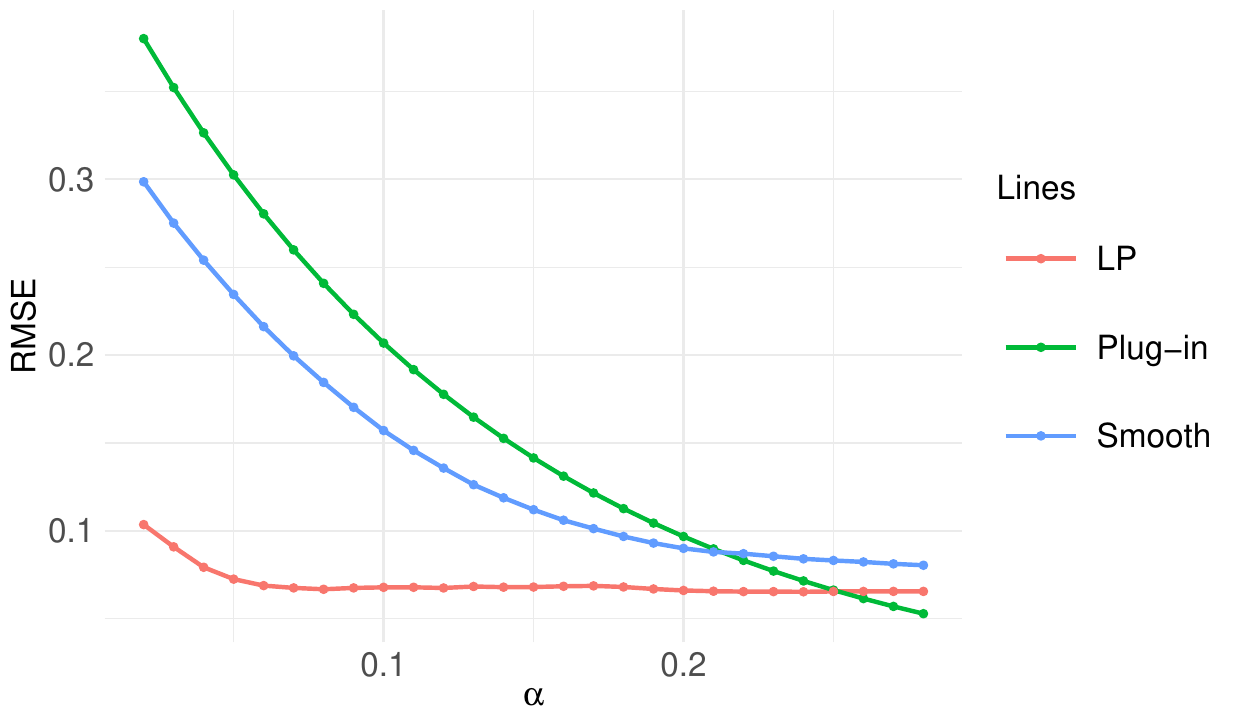}
    \caption{n=2000}
    \end{subfigure}
  \begin{subfigure}[t]{0.45\textwidth}
			\includegraphics[width=\textwidth]{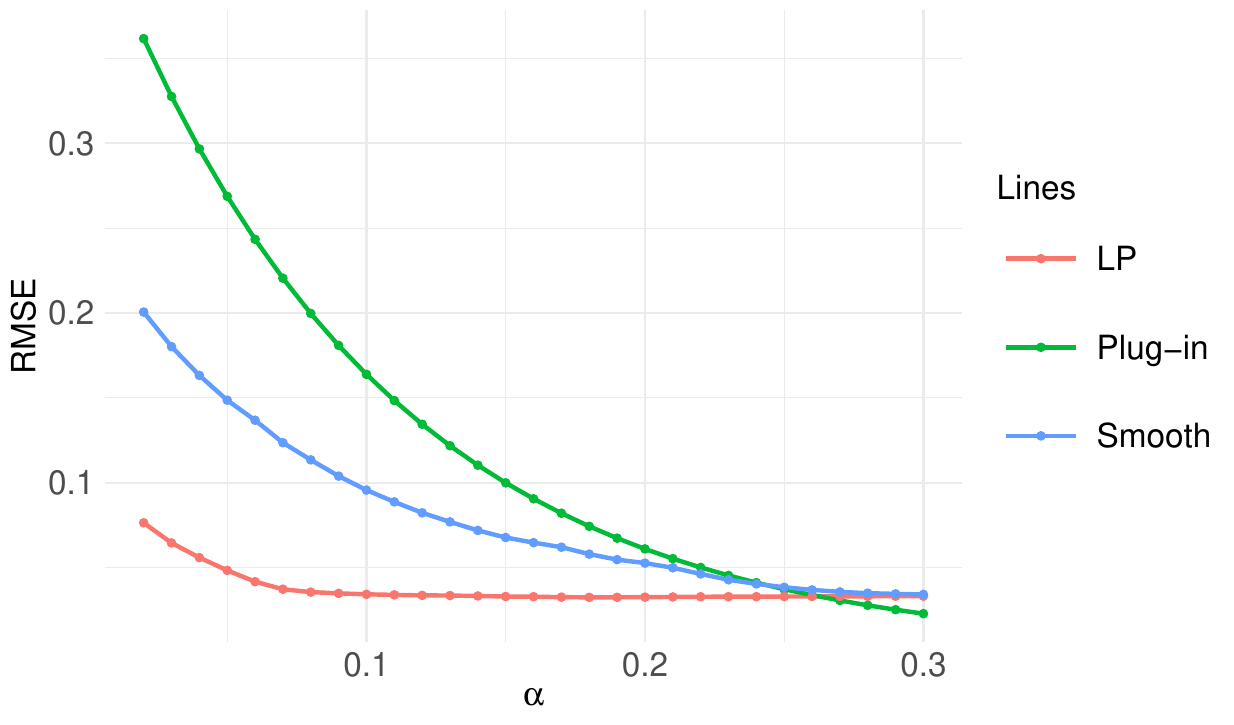}
    \caption{n=20000}
\end{subfigure}
\caption{Estimated RMSE versus $\alpha$, where $n^{-\alpha}$ is the nuisance convergence rate. }
\label{fig:rmse-alpha}
\end{figure}

As shown in Figure \ref{fig:rmse-alpha}, if the nuisance estimation error is large ($\alpha$ is small), both doubly robust estimators outperform the naive plug-in estimator. This can be attributed to the second-order bias term of the doubly robust estimators, where the conditional bias is the product of the nuisance estimation errors, making it ``doubly small." In contrast, the plug-in estimator directly inherits the slower convergence rate of $\widehat{\mu}$. However, as $\alpha$ increases and the nuisance function estimates become more accurate, the plug-in estimator eventually outperforms the doubly robust estimators. This occurs because the doubly robust estimators can suffer from accumulated errors in pseudo-outcome construction, bandwidth selection, and smoothing, which dominate the conditional bias when the nuisance estimation is sufficiently precise. 

\section{Application Details}

Here, we provide additional details on the data analysis in Section~\ref{sec:app}. Following the analysis in \citet{Lorch:2012} and \citet{Baiocchi:2010} we adjusted for 16 covariates. The first set of these covariates measures information about the zip code in which the mother lives: median income, percentage below poverty, median home value, percent with high school degree, percent with college degree, percent who rent versus own home. The second set of these covariates measures information about the mother: age, diabetes status, month prenatal care was started, number of times previously given birth, whether multiple deliveries, education level (8th grade or less, some high school, high school graduate, some college, college graduate, or more than college), mother’s race (White, Black, Asian/Pacific Islander, or other), insurance type (fee for service, HMO, federal/state, other, or uninsured). The final two covariates measured information about the infant: birthweight and gestational age.

\section{Proof of Auxiliary Lemmas}

\subsection{Proof of Lemma \ref{lemma:LP-oracle}}

\begin{proof}
    Since local polynomial estimator is linear in the response, we have
    \[
    \begin{aligned}
        \hat{\theta}(z_0) - \theta(z_0) = &\,\tilde{\theta}(z_0) - \theta(z_0) + \hat{\theta}(z_0) - \tilde{\theta}(z_0)\\
        = &\, \tilde{\theta}(z_0) - \theta(z_0) + \frac{1}{h}\be_2^{\top} \hat{\bD}_{hz_0}^{-1} \MP_n \left[\bg_h(Z-z_0)K_h(Z-z_0)\left(\hat{\xi}(\bO) - \xi(\bO)\right) \right]\\
        = &\, \tilde{\theta}(z_0) - \theta(z_0) + \frac{1}{h}\be_2^{\top} \hat{\bD}_{hz_0}^{-1} (\MP_n-\MP) \left[\bg_h(Z-z_0)K_h(Z-z_0)\left(\hat{\xi}(\bO) - \xi(\bO)\right) \right] \\
        &\,+ \frac{1}{h}\be_2^{\top} \hat{\bD}_{hz_0}^{-1} \MP \left[\bg_h(Z-z_0)K_h(Z-z_0)\left(\hat{\xi}(\bO) - \xi(\bO)\right) \right],
    \end{aligned}
    \]
    where $\hat{\bD}_{hz_0} = \MP_n [\bg_h(Z-z_0)K_h(Z-z_0)\bg_h^{\top}(Z-z_0)]$. Following the proof of \cite[Theorem 2]{kennedy2017non} we have
    \[
    \be_2^{\top} \hat{\bD}_{hz_0}^{-1} = O_{\MP}(1).
    \]
    Write 
    \[
    R_1 = \frac{1}{h}\be_2^{\top} \hat{\bD}_{hz_0}^{-1}(\MP_n-\MP) \left[\bg_h(Z-z_0)K_h(Z-z_0)\left(\hat{\xi}(\bO) - \xi(\bO)\right) \right],
    \]
    \[
    R_2 = \frac{1}{h}\be_2^{\top} \hat{\bD}_{hz_0}^{-1}\MP \left[\bg_h(Z-z_0)K_h(Z-z_0)\left(\hat{\xi}(\bO) - \xi(\bO)\right) \right].
    \]
    For $R_1$, by Lemma 2 in \cite{kennedy2020sharp} we have
    \[
    \begin{aligned}
        &\, (\MP_n-\MP) \left[g_{h,j}(Z-z_0)K_h(Z-z_0)\left(\hat{\xi}(\bO) - \xi(\bO)\right) \right]\\
        = &\, O_{\MP}\left( \frac{\left\|g_{h,j}(Z-z_0)K_h(Z-z_0)\left(\hat{\xi}(\bO) - \xi(\bO)\right)\right\|_2}{\sqrt{n}} \right)
    \end{aligned}   
    \]
    By direct calculations,
    \begin{equation}\label{eq:xi-error}
        \begin{aligned}
        \hat{\xi}(\bO) - \xi(\bO) = &\, \frac{(Y-\hat{\mu}(\bX,Z))(\hat{f}(Z)-f(Z))}{\hat{\pi}(Z \mid \bX)} + \frac{(Y-\mu(\bX,Z))(\pi(Z\mid \bX)-\hat{\pi}(Z\mid \bX))f(Z)}{\hat{\pi}(Z\mid \bX)\pi(Z\mid \bX)}\\
        &\, + \frac{(\mu(\bX,Z)-\hat{\mu}(\bX,Z))f(Z)}{\hat{\pi}(Z \mid \bX)} + \hat{\tau}_0(Z)-\tau(Z).
    \end{aligned}
    \end{equation}
    For the first term in \eqref{eq:xi-error},
    \[
    \begin{aligned}
        &\,\MP\left[g_{h,j}^2(Z-z_0)K_h^2(Z-z_0)\frac{(Y-\hat{\mu}(\bX,Z))^2(\hat{f}(Z)-f(Z))^2}{\hat{\pi}^2(Z \mid \bX)}\right]\\
        \lesssim &\, \MP\left[g_{h,j}^2(Z-z_0)K_h^2(Z-z_0)(\hat{f}(Z)-f(Z))^2\right]\\
        = &\, \int \left(\frac{z-z_0}{h}\right)^{2(j-1)} K_h^2(z-z_0)(\hat{f}(z)-f(z))^2 f(z) dz \\
        = &\, \frac{1}{h} \int u^{2(j-1)} K^2(u) \left(\hat{f}(z_0+hu)-f(z_0+hu)\right)^2 f(z_0+hu) du
    \end{aligned}
    \]
    Take expectation over the training set $D= D^n$ and apply Fubini's Theorem, we have
    \[
    \begin{aligned}
        &\, \ME_D \left[ \MP\left(g_{h,j}^2(Z-z_0)K_h^2(Z-z_0)\frac{(Y-\hat{\mu}(\bX,Z))^2(\hat{f}(Z)-f(Z))^2}{\hat{\pi}^2(Z \mid \bX)}\right)\right]\\
        \lesssim &\, \frac{1}{h} \int u^{2(j-1)} K^2(u) \ME_D \left[(\hat{f}(z_0+hu)-f(z_0+hu))^2\right] f(z_0+hu) du \\
        \leq &\, \frac{1}{h} \sup_{|z-z_0|\leq h } \ME_D \left[(\hat{f}(z)-f(z))^2\right] \int u^{2(j-1)} K^2(u) f(z_0+hu) du \\
        \lesssim &\, \frac{1}{h}\sup_{|z-z_0|\leq h} \ME_D \left[(\hat{f}(z)-f(z))^2\right]
    \end{aligned}
    \]
    Note that
    \[
    \hat{f}(z)-f(z) = \frac{1}{n} \sum_{i \in D_2^n} \hat{\pi}(z \mid \bX_i) - \MP[\hat{\pi}(z\mid \bX)] + \int \hat{\pi}(z \mid \bx)- \pi(z \mid \bx) d \MP(\bx).
    \]
    \[
    \begin{aligned}
        &\,\ME_D\left[(\hat{f}(z)-f(z))^2\right]\\
\leq  &\, 2\left \{ \ME_D\left[\left(\frac{1}{n} \sum_{i \in D_2^n} \hat{\pi}(z \mid \bX_i) - \MP[\hat{\pi}(z\mid \bX)]\right)^2\right] + \ME_D\left[\left(\int \hat{\pi}(z \mid \bx)- \pi(z \mid \bx) d \MP(\bx)\right)^2\right] \right\}
    \end{aligned}
    \]
    By Chebyshev's inequality we have
    \[
    \ME_{D_2^n}\left[\left(\frac{1}{n} \sum_{i \in D_2^n} \hat{\pi}(z \mid \bX_i) - \MP[\hat{\pi}(z\mid \bX)]\right)^2\right] \leq \frac{1}{n} \operatorname{Var}\left(\hat{\pi}(z \mid \bX) \mid D_1^n\right) \lesssim \frac{1}{n}.
    \]
    Hence
    \[
    \sup_{|z-z_0|\leq h }\ME_D\left[\left(\frac{1}{n} \sum_{i \in D_2^n} \hat{\pi}(z \mid \bX_i) - \MP[\hat{\pi}(z\mid \bX)]\right)^2\right] \lesssim \frac{1}{n}.
    \]
    For the second term, we have
    \[
    \begin{aligned}
        &\,\ME_D\left[\left(\int \hat{\pi}(z \mid \bx)- \pi(z \mid \bx) d \MP(\bx)\right)^2\right]\\
        \leq &\, \ME_D\left[\int (\hat{\pi}(z \mid \bx)- \pi(z \mid \bx))^2 d \MP(\bx)\right]\\
        = &\, \ME_{\bX} \left[ \ME_D(\hat{\pi}(z \mid \bX)- \pi(z \mid \bX))^2 \right] 
    \end{aligned}
    \]
    Thus we have
    \[
    \sup_{|z-z_0|\leq h} \ME_D \left[(\hat{f}(z)-f(z))^2\right] \lesssim \frac{1}{n} +\sup_{|z-z_0|\leq h} \ME_{\bX} \left[ \ME_D(\hat{\pi}(z \mid \bX)- \pi(z \mid \bX))^2 \right]. 
    \]
    \[
    \begin{aligned}
        &\,\left\|g_{h,j}(Z-z_0)K_h(Z-z_0)\frac{(Y-\hat{\mu}(\bX,Z))(\hat{f}(Z)-f(Z))}{\hat{\pi}(Z \mid \bX)}\right\| \\
        =&\,  O_{\MP} \left(\frac{1}{\sqrt{nh}} + \sqrt{\frac{1}{h} \sup_{|z-z_0|\leq h} \ME_{\bX} \left[ \ME_D(\hat{\pi}(z \mid \bX)- \pi(z \mid \bX))^2 \right]}\right).
    \end{aligned}
    \]
    Similarly one could show  for the last term in \eqref{eq:xi-error},
    \[
    \begin{aligned}
        &\,\left\|g_{h,j}(Z-z_0)K_h(Z-z_0) (\hat{\tau}_0(Z)- \tau(Z))\right\|\\
        =&\,  O_{\MP} \left(\frac{1}{\sqrt{nh}} + \sqrt{\frac{1}{h} \sup_{|z-z_0|\leq h} \ME_{\bX} \left[ \ME_D(\hat{\mu}(\bX,z)- \mu(\bX,z))^2 \right]}\right).
    \end{aligned}
    \]
    For the third term in \eqref{eq:xi-error}, 
    \[
    \begin{aligned}
        &\, \ME_D \left\{ \ME_{\bX,Z}\left[ g_{h,j}^2(Z-z_0)K_h^2(Z-z_0) \frac{(\mu(\bX,Z)-\hat{\mu}(\bX,Z))^2f^2(Z)}{\hat{\pi}^2(Z \mid \bX)}\right] \right\}\\
        \lesssim &\, \ME_{\bX,Z} \left\{ \ME_D\left[ g_{h,j}^2(Z-z_0)K_h^2(Z-z_0) (\mu(\bX,Z)-\hat{\mu}(\bX,Z))^2\right] \right\}\\
        = &\, \int \int g_{h,j}^2(z-z_0)K_h^2(z-z_0)\ME_D\left[(\hat{\mu}(\bx,z)-{\mu}(\bx,z))^2\right] \pi(z \mid \bx) dz d \MP(\bx) \\
        \lesssim &\, \int  g_{h,j}^2(z-z_0)K_h^2(z-z_0) \int\ME_D\left[(\hat{\mu}(\bx,z)-{\mu}(\bx,z))^2\right]   d \MP(\bx) dz \\
        \leq &\, \sup_{|z-z_0|\leq h}  \ME_\bX\left[ \ME_D (\hat{\mu}(\bX,z)-\mu(\bX,z))^2\right] \int  g_{h,j}^2(z-z_0)K_h^2(z-z_0) dz \\
        \lesssim &\, \frac{1}{h}\sup_{|z-z_0|\leq h}  \ME_\bX\left[ \ME_D (\hat{\mu}(\bX,z)-\mu(\bX,z))^2\right].
    \end{aligned}
    \]
    Thus we have 
    \[
    \begin{aligned}
        &\,\left\|g_{h,j}(Z-z_0)K_h(Z-z_0)\frac{({\mu}(\bX,Z)-\hat{\mu}(\bX,Z))f(Z)}{\hat{\pi}(Z \mid \bX)}\right\| \\
        =&\,  O_{\MP} \left( \sqrt{\frac{1}{h}\sup_{|z-z_0|\leq h}  \ME_\bX\left[ \ME_D (\hat{\mu}(\bX,z)-\mu(\bX,z))^2\right]}\right).
    \end{aligned}
    \]
    Similarly for the second term in \eqref{eq:xi-error} one could show 
    \[
    \begin{aligned}
        &\,\left\|g_{h,j}(Z-z_0)K_h(Z-z_0)\frac{(Y-\mu(\bX,Z))(\pi(Z\mid \bX)-\hat{\pi}(Z\mid \bX))f(Z)}{\hat{\pi}(Z\mid \bX)\pi(Z\mid \bX)}\right\| \\
        =&\,  O_{\MP} \left( \sqrt{\frac{1}{h}\sup_{|z-z_0|\leq h}  \ME_\bX\left[ \ME_D (\hat{\pi}(z\mid \bX)-\pi(z \mid \bX))^2\right]}\right).
    \end{aligned}
    \]
    So we conclude
    \[
    \begin{aligned}
        &\,\left\|g_{h,j}(Z-z_0)K_h(Z-z_0)\left(\hat{\xi}(\bO) - \xi(\bO)\right)\right\|_2 \\
        = &\, O_{\MP}\left(\frac{1}{\sqrt{nh}}+ \frac{1}{\sqrt{h}} \max \left\{ \sup_{|z-z_0|\leq h} \sqrt{\ME_{\bX}[\ME_D \left(\hat{\mu}(\bX,z) - \mu(\bX,z)\right)^2]}, \sup_{|z-z_0|\leq h} \sqrt{\ME_{\bX}[\ME_D \left(\hat{\pi}(z \mid \bX) - \pi(z \mid \bX)\right)^2]}\right\} \right),
    \end{aligned}
    \]
    \[
    \begin{aligned}
        &\,R_1 = O_{\MP}\left(\frac{1}{\sqrt{n^2h^3}}+ \right.\\
        &\, \left.\frac{1}{\sqrt{nh^3}} \max \left\{ \sup_{|z-z_0|\leq h} \sqrt{\ME_{\bX}[\ME_D \left(\hat{\mu}(\bX,z) - \mu(\bX,z)\right)^2]}, \sup_{|z-z_0|\leq h} \sqrt{\ME_{\bX}[\ME_D \left(\hat{\pi}(z \mid \bX) - \pi(z \mid \bX)\right)^2]}\right\} \right).
    \end{aligned}
    \]
    To bound $R_2$, note that
    \[
    \begin{aligned}
        &\,\ME\left [\hat{\xi}(\bO)-\xi(\bO) \mid D, Z=z \right]\\
        = &\, \ME\left[ \frac{\mu(\bX,Z)-\hat{\mu}(\bX,Z)}{\hat{\pi}(Z\mid \bX)}\hat{f}(Z) \mid D, Z=z \right] + \hat{\tau}_0(z)-\tau(z).
    \end{aligned}
    \]
    Rewrite 
    \[
    \begin{aligned}
        \hat{\tau}_0(z)-\tau(z) 
        = &\, \frac{1}{n}\sum_{i \in D_2^n}\hat{\mu}(\bX_i, z) - \MP[\hat{\mu}(\bX, z) ] + \int \hat{\mu}(\bx, z) -{\mu}(\bx, z) d \MP(\bx) \\
        = &\, \frac{1}{n}\sum_{i \in D_2^n}\hat{\mu}(\bX_i, z) - \MP[\hat{\mu}(\bX, z) ] + \int (\hat{\mu}(\bx, z) -{\mu}(\bx, z))\frac{f(z)}{\pi(z\mid \bx)} d \MP(\bx \mid z).
    \end{aligned}
    \]
    Plug into the conditional bias term above we have
    \[
    \begin{aligned}
        &\,\ME\left [\hat{\xi}(\bO)-\xi(\bO) \mid D, Z=z \right]\\
        = &\, - \ME \left[(\hat{\mu}(\bX, Z) -{\mu}(\bX, Z)) \left( \frac{\hat{f}(Z)}{\hat{\pi}(Z \mid \bX)}-\frac{f(Z)}{\pi(Z \mid \bX)} \right) \mid D, Z=z\right] + \frac{1}{n}\sum_{i \in D_2^n}\hat{\mu}(\bX_i, z) - \MP[\hat{\mu}(\bX, z) ]\\
        = &\, - \ME \left[ \frac{(\hat{\mu}(\bX, Z) -{\mu}(\bX, Z)) \left( \hat{f}(Z)-f(Z) \right)}{\hat{\pi}(Z \mid \bX)} \mid D, Z=z\right]\\
        &\, -\ME \left[ (\hat{\mu}(\bX, Z) -{\mu}(\bX, Z)) \left( \frac{1}{\hat{\pi}(Z \mid \bX)}-\frac{1}{{\pi}(Z \mid \bX)} \right)f(Z) \mid D, Z=z\right]\\
        &\, + \frac{1}{n}\sum_{i \in D_2^n}\hat{\mu}(\bX_i, z) - \MP[\hat{\mu}(\bX, z) ]
    \end{aligned}
    \]
    Plug this formula of conditional bias into $R_2$, we have
    \begin{equation}\label{eq:R2-decomposition}
        \begin{aligned}
        &\,\MP \left[\bg_h(Z-z_0)K_h(Z-z_0)\left(\hat{\xi}(\bO) - \xi(\bO)\right) \right] \\
        = &\,-\int \bg_h(z-z_0)K_h(z-z_0) \int \frac{(\hat{\mu}(\bx, z) -{\mu}(\bx, z))}{\hat{\pi}(z \mid \bx)} (\hat{f}(z)-f(z)) d \MP(\bx \mid z) f(z) dz\\
        &\, -\ME_{\bX,Z} \left[\bg_h(Z-z_0)K_h(Z-z_0)( \hat{\mu} (\bX, Z) -{\mu}(\bX, Z)) \left( \frac{1}{\hat{\pi}(Z \mid \bX)}-\frac{1}{{\pi}(Z \mid \bX)} \right)f(Z) \right] \\
        &\, + (\MP_n - \MP) \int \bg_h(z-z_0)K_h(z-z_0) \hat{\mu}(\bX,z) f(z) dz,
    \end{aligned}
    \end{equation}
    where the sample average in the last equation is taken over $D_2^n$. For the first term in \eqref{eq:R2-decomposition} we have
    \[
    \begin{aligned}
        &\,\int \bg_h(z-z_0)K_h(z-z_0) \int \frac{(\hat{\mu}(\bx, z) -{\mu}(\bx, z))}{\hat{\pi}(z \mid \bx)} (\hat{f}(z)-f(z)) d \MP(\bx \mid z) f(z) dz\\
        = &\, \int \bg_h(z-z_0)K_h(z-z_0)  \frac{(\hat{\mu}(\bx, z) -{\mu}(\bx, z))}{\hat{\pi}(z \mid \bx)} (\hat{f}(z)-f(z)) d \MP(\bx, z)\\
        = &\, \int \bg_h(z-z_0)K_h(z-z_0)  \frac{(\hat{\mu}(\bx, z) -{\mu}(\bx, z))}{\hat{\pi}(z \mid \bx)} \left(\frac{1}{n} \sum_{i \in D_2^n} \hat{\pi}(z \mid \bX_i) - \ME_{\bX}[\hat{\pi}(z\mid \bX)]  \right) d \MP(\bx, z)\\
        + &\, \int \bg_h(z-z_0)K_h(z-z_0)  \frac{(\hat{\mu}(\bx, z) -{\mu}(\bx, z))}{\hat{\pi}(z \mid \bx)} \left( \ME_{\bX}[\hat{\pi}(z\mid \bX)-\pi(z \mid \bX)] \right) d \MP(\bx, z)\\
    \end{aligned}
    \]
    By Cheybeshev's inequality, 
    \[
    \begin{aligned}
        &\,\ME_{D_2^n}  \left\{\left[ (\MP_n - \MP )\left(\int g_{h,j}(z-z_0)K_h(z-z_0)  \frac{(\hat{\mu}(\bx, z) -{\mu}(\bx, z))}{\hat{\pi}(z \mid \bx)} \hat{\pi}(z \mid \bX)d \MP(\bx, z)\right)\right]^2 \right\}\\
        \leq &\, \frac{1}{n} \ME_{\bX}\left [\left(\int g_{h,j}(z-z_0)K_h(z-z_0)  \frac{(\hat{\mu}(\bx, z) -{\mu}(\bx, z))}{\hat{\pi}(z \mid \bx)} \hat{\pi}(z \mid \bX)d \MP(\bx, z)\right)^2 \right]\\
        \lesssim &\, \frac{1}{n},
    \end{aligned}
    \]
    where the last inequality follows from
    \[
    \int \left|g_{h,j}(z-z_0)K_h(z-z_0)  \frac{(\hat{\mu}(\bx, z) -{\mu}(\bx, z))}{\hat{\pi}(z \mid \bx)} \hat{\pi}(z \mid \bX) \right| d \MP(\bx, z) \lesssim \int |u|^{j-1} K(u) f(z_0 + hu) dz \lesssim 1.
    \]
    This implies 
    \[
    \int \bg_h(z-z_0)K_h(z-z_0)  \frac{(\hat{\mu}(\bx, z) -{\mu}(\bx, z))}{\hat{\pi}(z \mid \bx)} \left(\frac{1}{n} \sum_{i \in D_2^n} \hat{\pi}(z \mid \bX_i) - \ME_{\bX}[\hat{\pi}(z\mid \bX)]  \right) d \MP(\bx, z) = O_{\MP}\left(\frac{1}{\sqrt{n}} \right).
    \]
    For the other term, note that
    \[
    \begin{aligned}
    &\, \ME_D \left [\int \left| g_{h,j}(z-z_0)K_h(z-z_0)  \frac{(\hat{\mu}(\bx, z) -{\mu}(\bx, z))}{\hat{\pi}(z \mid \bx)} \left( \ME_{\bX}[\hat{\pi}(z\mid \bX)-\pi(z \mid \bX)] \right)\right| d \MP(\bx, z)\right]\\
    \leq &\, \int  \left| g_{h,j}(z-z_0)K_h(z-z_0) \right| \ME_D\left[|\hat{\mu}(\bx, z) -{\mu}(\bx, z)| \left( \ME_{\bX}|\hat{\pi}(z\mid \bX)-\pi(z \mid \bX)| \right)\right] d \MP(\bx, z)\\
    \leq &\, \int  \left| g_{h,j}(z-z_0)K_h(z-z_0) \right| \sqrt{\ME_D(\hat{\mu}(\bx, z) -{\mu}(\bx, z))^2} \sqrt{ \ME_{D} \ME_{\bX}(\hat{\pi}(z\mid \bX)-\pi(z \mid \bX))^2 } d \MP(\bx, z)\\
    \lesssim  &\, \sup_{|z-z_0|\leq h} \sqrt{ \ME_{D} \ME_{\bX}(\hat{\pi}(z\mid \bX)-\pi(z \mid \bX))^2 }  \int | g_{h,j}(z-z_0)K_h(z-z_0) | \int \sqrt{\ME_D(\hat{\mu}(\bx, z) -{\mu}(\bx, z))^2} d \MP(\bx) dz \\
    \lesssim &\,\sup_{|z-z_0|\leq h} \sqrt{ \ME_{D} \ME_{\bX}(\hat{\pi}(z\mid \bX)-\pi(z \mid \bX))^2 } \sup_{|z-z_0|\leq h} \sqrt{ \ME_{D} \ME_{\bX}(\hat{\mu}(\bX,z)-\mu(\bX,z))^2 } \int | g_{h,j}(z-z_0)K_h(z-z_0) | dz\\
    \lesssim &\, \sup_{|z-z_0|\leq h} \sqrt{ \ME_{D} \ME_{\bX}(\hat{\pi}(z\mid \bX)-\pi(z \mid \bX))^2 } \sup_{|z-z_0|\leq h} \sqrt{ \ME_{D} \ME_{\bX}(\hat{\mu}(\bX,z)-\mu(\bX,z))^2 },
    \end{aligned}
    \]
    where we apply Cauchy-Schwarz's inequality. Thus we have
    \[
    \begin{aligned}
       &\,\int \bg_h(z-z_0)K_h(z-z_0)  \frac{(\hat{\mu}(\bx, z) -{\mu}(\bx, z))}{\hat{\pi}(z \mid \bx)} \left( \ME_{\bX}[\hat{\pi}(z\mid \bX)-\pi(z \mid \bX)] \right) d \MP(\bx, z)\\
       = &\,  O_{\MP}\left( \sup_{|z-z_0|\leq h} \sqrt{ \ME_{D} \ME_{\bX}(\hat{\pi}(z\mid \bX)-\pi(z \mid \bX))^2 } \sup_{|z-z_0|\leq h} \sqrt{ \ME_{D} \ME_{\bX}(\hat{\mu}(\bX,z)-\mu(\bX,z))^2 } \right) .
    \end{aligned}
    \]
    \[
    \begin{aligned}
        &\,\int \bg_h(z-z_0)K_h(z-z_0) \int \frac{(\hat{\mu}(\bx, z) -{\mu}(\bx, z))}{\hat{\pi}(z \mid \bx)} (\hat{f}(z)-f(z)) d \MP(\bx \mid z) f(z) dz \\
        = &\, O_{\MP}\left( \frac{1}{\sqrt{n}} +\sup_{|z-z_0|\leq h} \sqrt{ \ME_{D} \ME_{\bX}(\hat{\pi}(z\mid \bX)-\pi(z \mid \bX))^2 } \sup_{|z-z_0|\leq h} \sqrt{ \ME_{D} \ME_{\bX}(\hat{\mu}(\bX,z)-\mu(\bX,z))^2 } \right)
    \end{aligned}
    \]
    For the second term in \eqref{eq:R2-decomposition} we have
    \[
    \begin{aligned}
        &\,\ME_D \left\{ \ME_{\bX,Z} \left[|g_{h,j}(Z-z_0)K_h(Z-z_0)|\left|( \hat{\mu} (\bX, Z) -{\mu}(\bX, Z)) \left( \frac{1}{\hat{\pi}(Z \mid \bX)}-\frac{1}{{\pi}(Z \mid \bX)} \right) \right|f(Z) \right] \right\}\\
        \lesssim &\, \ME_{\bX,Z} \left[|g_{h,j}(Z-z_0)K_h(Z-z_0)| \sqrt{\ME_D[(\hat{\mu}(\bX,Z) -\mu(\bX,Z))^2] \ME_D[(\hat{\pi}(Z\mid \bX) -\pi(Z\mid \bX))^2]}  \right]\\
        \lesssim &\, \int |g_{h,j}(z-z_0)K_h(z-z_0)| \int \sqrt{\ME_D[(\hat{\mu}(\bx,z) -\mu(\bx,z))^2] \ME_D[(\hat{\pi}(z\mid \bx) -\pi(z\mid \bx))^2]} d \MP(\bx) dz \\
        \lesssim &\, \sup_{|z-z_0|\leq h} \sqrt{ \ME_{D} \ME_{\bX}(\hat{\pi}(z\mid \bX)-\pi(z \mid \bX))^2 } \sup_{|z-z_0|\leq h} \sqrt{ \ME_{D} \ME_{\bX}(\hat{\mu}(\bX,z)-\mu(\bX,z))^2 },
    \end{aligned}
    \]
    which implies
    \[
    \begin{aligned}
    &\,\ME_{\bX,Z} \left[\bg_h(Z-z_0)K_h(Z-z_0)( \hat{\mu} (\bX, Z) -{\mu}(\bX, Z)) \left( \frac{1}{\hat{\pi}(Z \mid \bX)}-\frac{1}{{\pi}(Z \mid \bX)} \right)f(Z) \right] \\
    = &\, O_{\MP}\left( \sup_{|z-z_0|\leq h} \sqrt{ \ME_{D} \ME_{\bX}(\hat{\pi}(z\mid \bX)-\pi(z \mid \bX))^2 } \sup_{|z-z_0|\leq h} \sqrt{ \ME_{D} \ME_{\bX}(\hat{\mu}(\bX,z)-\mu(\bX,z))^2 }\right)
    \end{aligned}
    \]
    By Cheybeshev's inequality one can similarly show the third term
    \[
    (\MP_n - \MP) \int \bg_h(z-z_0)K_h(z-z_0) \hat{\mu}(\bX,z) f(z) dz = O_{\MP}\left( \frac{1}{\sqrt{n}} \right).
    \]
    We conclude that
    \[
    \begin{aligned}
    &\,\MP \left[\bg_h(Z-z_0)K_h(Z-z_0)\left(\hat{\xi}(\bO) - \xi(\bO)\right) \right] \\
    =  &\, O_{\MP}\left( \frac{1}{\sqrt{n}} +\sup_{|z-z_0|\leq h} \sqrt{ \ME_{D} \ME_{\bX}(\hat{\pi}(z\mid \bX)-\pi(z \mid \bX))^2 } \sup_{|z-z_0|\leq h} \sqrt{ \ME_{D} \ME_{\bX}(\hat{\mu}(\bX,z)-\mu(\bX,z))^2 } \right),
    \end{aligned}
    \]
    \[
    R_2 = \frac{1}{\sqrt{nh^2} }+ \frac{1}{h}\sup_{|z-z_0|\leq h} \sqrt{ \ME_{D} \ME_{\bX}(\hat{\pi}(z\mid \bX)-\pi(z \mid \bX))^2 } \sup_{|z-z_0|\leq h} \sqrt{ \ME_{D} \ME_{\bX}(\hat{\mu}(\bX,z)-\mu(\bX,z))^2 }.
    \]
\end{proof}

\subsection{Proof of Lemma \ref{lemma:CLT-ratio}}
\begin{proof}
    \textbf{Case 1}: $a_n/b_n \rightarrow \infty$. The idea is that $U_n$ has a faster rate and the final rate is dominated by $V_n$. Rewrite 
    \[
    \frac{U_n}{V_n}-\frac{u_n}{v_n}=\frac{U_n-u_n}{V_n}+u_n\left(\frac{1}{V_n}-\frac{1}{v_n} \right).
    \]
    Since $1/V_n \stackrel{P}{\rightarrow}1/\theta_V$, we have $1/V_n = O_{\MP}(1)$, this together with $U_n-u_n=O_{\MP}(1/a_n)$ implies
    \[
    b_n \frac{U_n-u_n}{V_n} = b_n O_{\MP}(1)O_{\MP}(1/a_n)=O_{\MP}(b_n/a_n)=o_{\MP}(1).
    \]
    For the second term, by delta method we have
    \begin{equation}\label{eq:delta}
        b_n\left(\frac{1}{V_n}-\frac{1}{v_n} \right) \stackrel{d}{\rightarrow} N(0,\sigma_V^2/\theta_V^4),
    \end{equation}
    then apply Slutsky's theorem we obtain 
    \[
    u_nb_n\left(\frac{1}{V_n}-\frac{1}{v_n} \right)\stackrel{d}{\rightarrow} N(0,\theta_U^2\sigma_V^2/\theta_V^4).
    \]
    \[
    b_n \left( \frac{U_n}{V_n}-\frac{u_n}{v_n}\right) \rightarrow N(0,\theta_U^2\sigma_V^2/\theta_V^4).
    \]
    \textbf{Case 2}: $a_n/b_n \rightarrow 0$. Now the final rate is dominated by $U_n$. We write
    \[
    a_n \left( \frac{U_n}{V_n}-\frac{u_n}{v_n} \right)=a_n \left( \frac{U_n}{V_n}-\frac{u_n}{V_n} \right) + a_n\left( \frac{u_n}{V_n}-\frac{u_n}{v_n} \right).
    \]
    By equation \eqref{eq:delta} we have $1/V_n-1/v_n = O_{\MP}(1/b_n)$, which implies
    \[
    a_n\left( \frac{u_n}{V_n}-\frac{u_n}{v_n} \right) = a_n O_{\MP}(1)O_{\MP}(1/b_n) = O_{\MP}(a_n/b_n)=o_{\MP}(1).
    \]
    For the first term, by Slutsky's theorem we have
    \[
    a_n \left( \frac{U_n}{V_n}-\frac{u_n}{V_n} \right)\stackrel{d}{\rightarrow} N(0, \sigma_U^2/\theta_V^2).
    \]
    Thus we have
    \[
    a_n \left( \frac{U_n}{V_n}-\frac{u_n}{v_n} \right)\stackrel{d}{\rightarrow} N(0, \sigma_U^2/\theta_V^2).
    \]
    \textbf{Case 3}: $a_n=b_n$ (note that the constants can be absorbed into the variance $\sigma_U^2, \sigma_V^2$ so we only need to consider this case here). In this case the result follows from the uniform delta method \citep{van2000asymptotic}[Section 3.4].
\end{proof}

\subsection{Proof of Lemma \ref{lemma:bounds-remainder}}
\begin{proof}
    By Lemma 2 in \cite{kennedy2020sharp} we have
    \[
    (\MP_n-\MP)[\hat{\varphi}_h(\bO; z_0)-\varphi_h(\bO;z_0)] = O_{\MP}\left( \frac{\|\hat{\varphi}_h(\bO; z_0)-\varphi_h(\bO;z_0)\|_2}{\sqrt{n}}\right)
    \]
    By direct calculations,
    \[
    \begin{aligned}
        &\,\hat{\varphi}_h(\bO; z_0)-\varphi_h(\bO;z_0)\\
        = &\, \frac{K_h'(Z-z_0)(Y-\mu(\bX,Z))(\hat{\pi}(Z\mid \bX)-\pi(Z \mid \bX))}{\hat{\pi}(Z\mid \bX)\pi(Z \mid \bX)} \\
        &\,+ \frac{K_h'(Z-z_0)(\hat{\mu}(\bX,Z) - \mu(\bX,Z))}{\hat{\pi}(Z \mid \bX)} - \int (\hat{\mu}(\bX,z) - \mu(\bX,z))K_h'(z-z_0) dz.
    \end{aligned}
    \]
    We have
    \[
    \begin{aligned}
        &\,\ME_D \left[\left\|\frac{K_h'(Z-z_0)(\hat{\mu}(\bX,Z) - \mu(\bX,Z))}{\hat{\pi}(Z \mid \bX)}\right\|_2^2 \right]\\
        \lesssim &\, \ME_D \left[ \ME_{\bX,Z} \left((K_h'(Z-z_0))^2(\hat{\mu}(\bX,Z) - \mu(\bX,Z))^2\right) \right]\\
        = &\,  \ME_{\bX,Z}\left[ (K_h'(Z-z_0))^2 \ME_D \left(\hat{\mu}(\bX,Z) - \mu(\bX,Z)\right)^2 \right]\\
        = &\, \int \int (K_h'(z-z_0))^2\ME_D \left(\hat{\mu}(\bx,z) - \mu(\bx,z)\right)^2 \pi(z \mid \bx) dz d \MP(\bx)\\
        \lesssim &\, \int (K_h'(z-z_0))^2 \int \ME_D \left(\hat{\mu}(\bx,z) - \mu(\bx,z)\right)^2  d \MP(\bx) dz\\
        \leq &\, \sup_{|z-z_0|\leq h} \ME_{\bX}[\ME_D \left(\hat{\mu}(\bX,z) - \mu(\bX,z)\right)^2] \int (K_h'(z-z_0))^2 dz \\
        = &\, \frac{1}{h^3}\sup_{|z-z_0|\leq h} \ME_{\bX}[\ME_D \left(\hat{\mu}(\bX,z) - \mu(\bX,z)\right)^2] \int  \left( K'(u)\right)^2 du,
    \end{aligned}
    \]
    where the first inequality follows from positivity of $\hat{\pi}$ and the equation follows from Fubini's Theorem. The second inequality follows from $\pi \leq C$. So we have
    \[
    \left\|\frac{K_h'(Z-z_0)(\hat{\mu}(\bX,Z) - \mu(\bX,Z))}{\hat{\pi}(Z \mid \bX)}\right\|_2 = O_{\MP} \left(\frac{1}{\sqrt{h^3}} \sup_{|z-z_0|\leq h} \sqrt{\ME_{\bX}[\ME_D \left(\hat{\mu}(\bX,z) - \mu(\bX,z)\right)^2]} \right)
    \]
    Similarly one could show
    \[
    \left\|\frac{K_h'(Z-z_0)(Y-\mu(\bX,Z))(\hat{\pi}(Z\mid \bX)-\pi(Z \mid \bX))}{\hat{\pi}(Z\mid \bX)\pi(Z \mid \bX)} \right\|_2 = O_{\MP} \left(\frac{1}{\sqrt{h^3}} \sup_{|z-z_0|\leq h} \sqrt{\ME_{\bX}[\ME_D \left(\hat{\pi}(z\mid\bX) - \pi(z, \bX)\right)^2]} \right).
    \]
    For the third term, by Generalized Minkowski inequality we have
    \[
    \begin{aligned}
        &\,\left\|\int (\hat{\mu}(\bX,z) - \mu(\bX,z))K_h'(z-z_0) \right\|_2\\
        =&\,\left[\ME_{\bX} \left(\int (\hat{\mu}(\bX,z) -\mu(\bX,z))K_h'(z-z_0) dz \right)^2 \right]^{1/2}\\
        \leq &\, \int \left(\int (\hat{\mu}(\bx,z) -\mu(\bx,z))^2(K_h'(z-z_0))^2 d \MP(\bx)\right)^{1/2} dz 
    \end{aligned}
    \]
    By Cauchy Schwarz inequality and Fubini's theorem, we have
    \[
    \begin{aligned}
        &\,\ME_D\left[ \int  \left(\int (\hat{\mu}(\bx,z) -\mu(\bx,z))^2(K_h'(z-z_0))^2 d \MP(\bx)\right)^{1/2} dz\right]\\
        = &\, \ME_D\left[ \int|K_h'(z-z_0)| \left(\int (\hat{\mu}(\bx,z) -\mu(\bx,z))^2 d \MP(\bx)\right)^{1/2} dz  \right]\\
        = &\,  \int |K_h'(z-z_0)| \ME_D \left[\left(\int (\hat{\mu}(\bx,z) -\mu(\bx,z))^2 d \MP(\bx)\right)^{1/2}\right] dz \\
        \leq &\, \int |K_h'(z-z_0)| \sqrt{\ME_D \left[\int (\hat{\mu}(\bx,z) -\mu(\bx,z))^2 d \MP(\bx)\right]} dz \\
        \leq &\, \sup_{|z-z_0|\leq h} \sqrt{\ME_{\bX}[\ME_D \left(\hat{\mu}(\bX,z) - \mu(\bX,z)\right)^2]} \int |K_h'(z-z_0)| dz \\
        = &\, \frac{1}{h} \sup_{|z-z_0|\leq h} \sqrt{\ME_{\bX}[\ME_D \left(\hat{\mu}(\bX,z) - \mu(\bX,z)\right)^2]} \int |K'(u)| du  
    \end{aligned}
    \]
    Hence we have
    \[
    \left\|\int (\hat{\mu}(\bX,z) - \mu(\bX,z))K_h'(z-z_0) \right\|_2 = O_{\MP} \left(\frac{1}{h} \sup_{|z-z_0|\leq h} \sqrt{\ME_{\bX}[\ME_D \left(\hat{\mu}(\bX,z) - \mu(\bX,z)\right)^2]}\right).
    \]
    So the empirical process term can be bounded as 
    \[
    \begin{aligned}
        &\,(\MP_n-\MP)[\hat{\varphi}_h(\bO; z_0)-\varphi_h(\bO;z_0)] \\
        =&\, O_{\MP} \left( \frac{1}{\sqrt{nh^3}} \max \left\{ \sup_{|z-z_0|\leq h} \sqrt{\ME_{\bX}[\ME_D \left(\hat{\mu}(\bX,z) - \mu(\bX,z)\right)^2]}, \sup_{|z-z_0|\leq h} \sqrt{\ME_{\bX}[\ME_D \left(\hat{\pi}(z \mid \bX) - \pi(z \mid \bX)\right)^2]}\right\} \right)\\
    \end{aligned}
    \]
    
\end{proof}

\subsection{Proof of Lemma \ref{lemma:lindeberg}}
\begin{proof}[Proof of Lemma \ref{lemma:lindeberg}]
    Since $L_n / B_n \rightarrow 0$, for any $\tau >0$ we can find $n_0 \in \mathbb{N}_+$ such that for all $n \geq n_0$ we have $2L_n / B_n < \tau$. Then note that for all $n \geq n_0$,
    \[
    \frac{\max_{1 \leq k \leq k_n}|X_{nk}-\ME[X_{nk}]|}{B_n} \leq \frac{2L_n}{B_n} < \tau,
    \]
    which implies
    \[
    \left\{| X_{nk}-\ME[X_{nk}] | \geq \tau B_n\right\}=\varnothing, \quad k=1, \cdots, k_{n}
    \]
    and thus 
    \[
    \frac{1}{B_n^2} \sum_{k=1}^{k_n} \mathbf{E}\left[\left(X_{nk}-\ME[X_{nk}]\right)^2 I\left(\left|X_{nk}-\ME[X_{nk}]\right| \geq \tau B_n\right)\right]=0
    \]
    when $n$ is sufficiently large.
\end{proof}

\section{Proof of Main Results}

\subsection{Proof of Theorem \ref{thm:LP-boundary}}
\begin{proof}
    To prove the asymptotic expansion, similar to the proof of Lemma \ref{lemma:LP-oracle}, we can write
    \[
    \begin{aligned}
        \hat{\tau}(z_0) - \tau(z_0) = &\,\tilde{\tau}(z_0) - \tau(z_0) + \hat{\tau}(z_0) - \tilde{\tau}(z_0)\\
        = &\, \tilde{\tau}(z_0) - \tau(z_0) + \be_1^{\top} \hat{\bD}_{hz_0}^{-1} \MP_n \left[\bg_h(Z-z_0)K_h(Z-z_0)\left(\hat{\xi}(\bO) - \xi(\bO)\right) \right]\\
        = &\, \tilde{\tau}(z_0) - \tau(z_0) + \be_1^{\top} \hat{\bD}_{hz_0}^{-1} (\MP_n-\MP) \left[\bg_h(Z-z_0)K_h(Z-z_0)\left(\hat{\xi}(\bO) - \xi(\bO)\right) \right] \\
        &\,+ \be_1^{\top} \hat{\bD}_{hz_0}^{-1} \MP \left[\bg_h(Z-z_0)K_h(Z-z_0)\left(\hat{\xi}(\bO) - \xi(\bO)\right) \right].
    \end{aligned}
    \]
    The proof then follows from the same calculations as in that of Lemma \ref{lemma:LP-oracle}, with $z_0$ being a point on the boundary instead an interior point. For example, the same proof of Theorem 3 in \cite{zeng2024continuous} shows 
    \[
    \hat{D}_{hz_0,j\ell} \stackrel{P}{\rightarrow} \ME \left[ \left(\frac{Z-z_0}{h} \right)^{j+\ell} K_h(Z-z_0) \right].
    \]
    When $z_0 = ch$ lies on the boundary, we have (assume $n$ is sufficiently large so that $1/h-c>1$)
    \[
    \begin{aligned}
        &\,\ME \left[ \left(\frac{Z-z_0}{h} \right)^{j+\ell} K_h(Z-z_0) \right]\\
        = &\, \int_0^1 \left(\frac{z-z_0}{h} \right)^{j+\ell} K_h(z-z_0) f(z ) dz\\
        = &\, \int_{-c}^{1/h-c} u^{j+\ell} K(u) f(z_0 +hu ) du\\
        =&\, \int_{-c}^{1} u^{j+\ell} K(u) f(z_0 +hu ) du\\
        \rightarrow &\, f(z_0) \int_{-c}^{1} u^{j+\ell} K(u)  du.
    \end{aligned}
    \]
    Note that when $z_0$ is an interior point the limit of $\hat{D}_{hz_0,j\ell}$ is $f(z_0) \int_{-1}^{1} u^{j+\ell} K(u)  du$. One can proceed similarly as in Lemma \ref{lemma:LP-oracle} to bounding the empirical process term and the conditional bias. For example, to bound the empirical process term, for the first term in \eqref{eq:xi-error} we have
    \[
    \begin{aligned}
        &\, \ME_D \left[ \MP\left(g_{h,j}^2(Z-z_0)K_h^2(Z-z_0)\frac{(Y-\hat{\mu}(\bX,Z))^2(\hat{f}(Z)-f(Z))^2}{\hat{\pi}^2(Z \mid \bX)}\right)\right]\\
        \lesssim &\, \frac{1}{h} \int_{-c}^1 u^{2(j-1)} K^2(u) \ME_D \left[(\hat{f}(z_0+hu)-f(z_0+hu))^2\right] f(z_0+hu) du \\
        \leq &\, \frac{1}{h} \sup_{0 \leq z \leq z_0+h } \ME_D \left[(\hat{f}(z)-f(z))^2\right] \int_{-c}^1 u^{2(j-1)} K^2(u) f(z_0+hu) du \\
        \lesssim &\, \frac{1}{h}\sup_{0 \leq z \leq z_0+h } \ME_D \left[(\hat{f}(z)-f(z))^2\right].
    \end{aligned}
    \]
    Note that the range of $z$ is $[0,z_0+h]$. The remaining proof is similar and omitted. The final rate follows from Theorem 3.2 of \cite{fan2018local}.
\end{proof}

\subsection{Proof of Theorem \ref{thm:dose-response-smooth}}
\begin{proof}
    Under Assumption \ref{ass:smoothness}, the estimation error in Theorem \ref{thm:LP-boundary} is given by
\[
\hat{\tau}(z_0)-\tau(z_0)= O_{\MP} \left(h^{\gamma} + \frac{1}{\sqrt{nh}} + n^{-\left( \frac{1}{2+\frac{1}{\gamma}+\frac{d}{\beta}}+ \frac{1}{2+\frac{d+1}{\alpha}} \right)} \right).
\]
We can select $h$ to minimize the estimation error in Theorem \ref{thm:LP-boundary}. The results in two different smoothing regimes are summarized as follows:

\noindent \textbf{Case 1 :} The oracle regime 
\begin{equation*}
    \frac{d/\beta}{(2+1/\gamma)(2+1/\gamma+d/\beta)} \leq \frac{\alpha}{2\alpha+d+1}
\end{equation*}
or equivalently,
\[
\frac{1}{2+1/\gamma + d/\beta} + \frac{1}{2+(d+1)/\alpha} \geq \frac{\gamma}{2\gamma+1}.
\]
In this regime, the nuisance functions can be estimated at sufficiently fast rates and we can set $h \asymp n^{-\frac{1}{2\gamma+1}}$ to achieve the oracle rate for estimating a $\gamma$-smooth function:
\[
\hat{\tau}(z_0)-\tau(z_0)=O_{\MP}\left(  n^{-\frac{\gamma}{2\gamma+1}} \right).
\]
\textbf{Case 2:} The alternative regime 
\[
\frac{d/\beta}{(2+1/\gamma)(2+1/\gamma+d/\beta)} > \frac{\alpha}{2\alpha+d+1}.
\]
In this regime, the nuisance estimation error dominates and the final rate is 
\[
\hat{\tau}(z_0)-\tau(z_0) = O_{\MP} \left( n^{-\left( \frac{1}{2+\frac{1}{\gamma}+\frac{d}{\beta}}+ \frac{1}{2+\frac{d+1}{\alpha}} \right)} \right).
\]
\end{proof}

\subsection{Proof of Theorem \ref{thm:deriv-smooth}}\label{appendix:proof-deriv-smooth}
\begin{proof}
    Following the proof of \cite{Tsybakov2009}[Exercise 1.4], when $\tau$ is a $\gamma$-smooth function and $p=\lfloor \gamma \rfloor$, the MSE of the oracle estimator $\tilde{\theta}(z_0)$ can be bounded as $h^{2(\gamma-1)}+\frac{1}{nh^3}$ (under regular conditions specified there for local polynomial estimators), which implies
\[
\tilde{\theta}(z_0) -\theta(z_0) = O_{\MP}\left(h^{\gamma-1}+ \frac{1}{\sqrt{nh^3}} \right).
\]
Since the estimates $\hat{\mu}, \hat{\pi}$ are consistent
\[
\max \left\{ \sup_{|z-z_0|\leq h} \sqrt{\ME_{\bX}[\ME_D \left(\hat{\mu}(\bX,z) - \mu(\bX,z)\right)^2]}, \sup_{|z-z_0|\leq h} \sqrt{\ME_{\bX}[\ME_D \left(\hat{\pi}(z \mid \bX) - \pi(z \mid \bX)\right)^2]}\right\} \rightarrow 0.
\]
Lemma \ref{lemma:LP-oracle} then implies
\[
\begin{aligned}
    &\,\hat{\theta}(z_0) - \theta(z_0)\\
    = &\, O_{\MP}\left(h^{\gamma-1}+ \frac{1}{\sqrt{nh^3}}+ \frac{1}{h}\sup_{|z-z_0|\leq h} \sqrt{ \ME_{D} \ME_{\bX}(\hat{\pi}(z\mid \bX)-\pi(z \mid \bX))^2 } \sup_{|z-z_0|\leq h} \sqrt{ \ME_{D} \ME_{\bX}(\hat{\mu}(\bX,z)-\mu(\bX,z))^2 }\right)
\end{aligned}
\]
Under Assumption \ref{ass:smoothness}, the estimation error of $\hat{\theta}(z_0)$ is bounded as:
\begin{equation}\label{eq:LP-deriv-error}
    h^{\gamma-1} + \frac{1}{h} n^{-\left( \frac{1}{2+\frac{1}{\gamma}+\frac{d}{\beta}}+ \frac{1}{2+\frac{d+1}{\alpha}} \right)}+ \frac{1}{\sqrt{nh^3}}.
\end{equation}
As in the rate analysis in Section \ref{sec:dose-res-boundary}, the optimal choice of the bandwidth $h$ and the corresponding rate depend on the regime of the smoothness parameters $\alpha, \beta, \gamma$: 

\textbf{Case 1}: The oracle regime
\begin{equation*}
    \frac{d/\beta}{(2+1/\gamma)(2+1/\gamma+d/\beta)} \leq \frac{\alpha}{2\alpha+d+1}
\end{equation*}
or equivalently,
\[
\frac{1}{2+1/\gamma + d/\beta} + \frac{1}{2+(d+1)/\alpha} - \frac{1}{2\gamma+1} \geq \frac{\gamma-1}{2\gamma+1}.
\]
In this regime, the nuisance functions can be estimated at sufficiently fast rates, allowing us to balance $h^{\gamma-1}$ with $1/\sqrt{nh^3}$ by setting $h \asymp n^{-\frac{1}{2\gamma+1}}$. This yields:
\[
\hat{\theta}(z_0)-\theta(z_0)= O_{\MP}\left( n^{-\frac{\gamma-1}{2\gamma+1}} + n^{-\left(\frac{1}{2+1/\gamma + d/\beta} + \frac{1}{2+(d+1)/\alpha} - \frac{1}{2\gamma+1}\right)} \right)= O_{\MP}\left(n^{-\frac{\gamma-1}{2\gamma+1}}\right),
\]
which matches the rate for estimating the first-order derivative of a $\gamma$-smooth function \citep{Tsybakov2009}.

\textbf{Case 2}: The alternative regime
\[
\frac{d/\beta}{(2+1/\gamma)(2+1/\gamma+d/\beta)} > \frac{\alpha}{2\alpha+d+1}.
\]
In this regime, the nuisance estimation error is larger, requiring a larger bandwidth (compared to $h \asymp n^{-\frac{1}{2\gamma+1}}$) to minimize its contribution in \eqref{eq:LP-deriv-error}. A larger bandwidth reduces the variance term $1/\sqrt{nh^3}$, which then decays faster than the bias term $h^{\gamma-1}$. To balance these terms, we solve:
\[
h^{\gamma-1} \asymp \frac{1}{h} n^{-\left( \frac{1}{2+\frac{1}{\gamma}+\frac{d}{\beta}}+ \frac{1}{2+\frac{d+1}{\alpha}} \right)},
\]
or equivalently,
\[
h \asymp  n^{- \frac{1}{\gamma}\left( \frac{1}{2+\frac{1}{\gamma}+\frac{d}{\beta}}+ \frac{1}{2+\frac{d+1}{\alpha}} \right)},
\]
which yields the final rate for $\hat{\theta}(z_0)$ as
\[
\hat{\theta}(z_0)-\theta(z_0)=O_{\MP}\left( n^{- \frac{\gamma-1}{\gamma}\left( \frac{1}{2+\frac{1}{\gamma}+\frac{d}{\beta}}+ \frac{1}{2+\frac{d+1}{\alpha}} \right)} \right).
\]
\end{proof}

\subsection{Proof of Theorem \ref{thm:LP-normality}}

\begin{proof}
    The proof mainly follows from that of \cite{sawada2024local}. Note that the condition $nh^{2p+3} = O(1)$ is mainly used to obtain a specific order for the bias term. Following the notation in \cite{sawada2024local}, the upper bound on $h$ is used to derive an asymptotic expansion for $B_{n, j_1 \ldots j_L 2}+B_{n, j_1 \ldots j_L 4}$. Without the upper bound $nh^{2p+3} = O(1)$, we can keep $B_{n, j_1 \ldots j_L 2}+B_{n, j_1 \ldots j_L 4}$ in our analysis and result, which yields a bias term
    \[
    \frac{1}{(p+1)!} \sqrt{nh} \ME \left[ K_h(Z-z_0) \bg_h(Z-z_0)\theta^{(p+1)}(\tilde{Z})(Z-z_0)^{p+1} \right].
    \]
    The proof in \cite{sawada2024local} then yields
    \[
    \begin{aligned}
    &\,\sqrt{n h}\bH \left(\tilde{\bbb}(z_0) - \boldsymbol{\theta}(z_0)\right) - \frac{1}{(p+1)!} \sqrt{nh} \bS_n^{-1}(z_0) \ME \left[ K_h(Z-z_0) \bg_h(Z-z_0)\theta^{(p+1)}(\tilde{Z})(Z-z_0)^{p+1} \right] \\
    &\, \stackrel{d}{\rightarrow} N(0, \sigma^2(z_0)\bV/f(z_0)),
    \end{aligned}
    \]
    where $\boldsymbol{\theta}(z_0) = (\theta(z_0), \theta'(z_0),\dots, \theta^{(p)}(z_0))^{\top}$ and $\hat{\bbb}(z_0)$ is the local polynomial estimator of $\boldsymbol{\theta}(z_0)$ using the oracle pseudo-outcome $\xi(\bO;\pi,\mu)$.
    Our result in \eqref{eq:LP-normality} then follows from taking the second component and Lemma \ref{lemma:LP-oracle}. When $nh^{2p+3} = O(1)$ holds, the analysis in \cite{sawada2024local} shows the leading term of the bias 
    \[
    \frac{1}{(p+1)!h} \be_2^{\top} \bS_n^{-1}(z_0)\ME \left[ K_h(Z-z_0) \bg_h(Z-z_0)\theta^{(p+1)}(\tilde{Z})(Z-z_0)^{p+1} \right]
    \]
    is equal to 
    \[
    \frac{1}{(p+1)!}\theta^{(p+1)}(z_0) \be_2^{\top} \bS^{-1}(\mu_{p+1},\dots, \mu_{2p+1})^{\top}h^{p}.
    \]
\end{proof}

\subsection{Proof of Proposition \ref{prop:smooth-error}}

\begin{proof}
    By definition of $\theta_h$ we have
    \[
    \begin{aligned}
        &\,\theta_h(z_0) - \theta(z_0) \\
        = &\, \ME \left[\int\frac{\partial \mu(\bX, z)}{\partial z} K_h(z-z_0) dz - \left. \frac{\partial \mu(\bX, z)}{\partial z} \right|_{z=z_0} \right]\\
        = &\, \ME\left[\int\left.\frac{\partial \mu(\bX, z)}{\partial z}\right|_{z_0}^z K_h(z-z_0) dz   \right]\\
        = &\, \ME\left[\int\left.\frac{\partial \mu(\bX, z)}{\partial z}\right|_{z_0}^{z_0+hu} K(u) du   \right]
    \end{aligned}
    \]
    where $\left.\frac{\partial \mu(\bX, z)}{\partial z}\right|_{z_1}^{z_2} = \left.\frac{\partial \mu(\bX, z)}{\partial z}\right|_{z=z_2}-\left.\frac{\partial \mu(\bX, z)}{\partial z}\right|_{z=z_1}$ and the last equation follows from change of variables $u = (z-z_0)/h$. By Taylor's expansion we have for some $\tau \in (0,1)$, 
    \[
    \left.\frac{\partial \mu(\bX, z)}{\partial z}\right|_{z_0}^{z_0+hu} = \sum_{j=1}^{\ell-2} \frac{1}{j!} \left.\frac{\partial^{j+1} \mu(\bX,z)}{\partial z^{j+1}}\right|_{z=z_0} (hu)^j + \frac{1}{(\ell-1)!} \left.\frac{\partial^{\ell} \mu(\bX,z)}{\partial z^{\ell}} \right|_{z=z_0+\tau hu} (hu)^{\ell-1}. 
    \]
    Since $K$ is a ($\ell-1$)-th order kernel, we have
    \[
    \begin{aligned}
    &\,\ME\left[\int\left.\frac{\partial \mu(\bX, z)}{\partial z}\right|_{z_0}^{z_0+hu} K(u) du   \right]\\
    = &\, \ME\left[\int\frac{1}{(\ell-1)!} \left.\frac{\partial^{\ell} \mu(\bX,z)}{\partial z^{\ell}} \right|_{z=z_0+\tau hu} (hu)^{\ell-1} K(u) du   \right]\\
    = &\,\ME\left[\int\frac{1}{(\ell-1)!} \left.\frac{\partial^{\ell} \mu(\bX,z)}{\partial z^{\ell}} \right|_{z_0}^{z_0+\tau hu} (hu)^{\ell-1} K(u) du   \right].
    \end{aligned}
    \]
    Thus the approximation error can be bounded as 
    \[
    \begin{aligned}
     &\,|\theta_h(z_0) - \theta(z_0)| \\
      \leq &\,\frac{1}{(\ell-1)!} \ME\left[\int \left| \left.\frac{\partial^{\ell} \mu(\bX,z)}{\partial z^{\ell}} \right|_{z_0}^{z_0+\tau hu} \right| (h|u|)^{\ell-1} |K(u)| du   \right] \\
      \leq &\,\frac{L}{(\ell-1)!} \left[\int |\tau hu|^{\gamma-\ell} (h|u|)^{\ell-1} |K(u)| du   \right] \\
      \leq &\,\frac{Lh^{\gamma-1}}{(\ell-1)!} \left[\int |u|^{\gamma-1} |K(u)| du   \right]
    \end{aligned}
    \]
    
\end{proof}

\subsection{Proof of Proposition \ref{prop:bound-variance}}
    \begin{proof}
    The conditional bias can be directly calculated by
    \[
    \begin{aligned}
      &\,\ME[\hat{\theta}_h(z_0) - \theta_h(z_0)] \\
      = &\, -\ME \left[K_h'(Z-z_0) \frac{Y-\hat{\mu}(\bX,Z)}{\hat{\pi}(Z\mid \bX)} + \int  (\hat{\mu}(\bX, z)-{\mu}(\bX, z))K_h'(z-z_0) dz \right] \\
      = &\, -\ME \left[K_h'(Z-z_0) \frac{\mu(\bX,Z)-\hat{\mu}(\bX,Z)}{\hat{\pi}(Z\mid \bX)} + \int  (\hat{\mu}(\bX, z)-{\mu}(\bX, z))K_h'(z-z_0) dz \right]\\
      = &\, -\ME \left[\int K_h'(z-z_0) \frac{(\mu(\bX,z)-\hat{\mu}(\bX,z)) \pi(z \mid \bX)}{\hat{\pi}(z\mid \bX)}dz + \int  (\hat{\mu}(\bX, z)-{\mu}(\bX, z))K_h'(z-z_0) dz \right] \\
      = &\, -\ME\left [ \int K_h'(z-z_0)  (\hat{\mu}(\bX,z)-{\mu}(\bX,z)) \left( 1-\frac{\pi(z \mid \bX)}{\hat{\pi}(z \mid \bX)}\right) dz\right].
    \end{aligned}
    \]
    By Fubini's theorem, Cauchy Schwarz inequality and positivity assumption, it is bounded by
    \[
    |\ME[\hat{\theta}_h(z_0) - \theta_h(z_0)]| \lesssim \int |K_h'(z-z_0)| \|\hat{\mu}(\cdot, z) - \mu(\cdot, z)\|_2 \|\hat{\pi}(z\mid \cdot) - \pi(z \mid \cdot)\|_2 dz .
    \]
    We use $\ME_D$ to denote the expectation taken w.r.t. the data used to train the nuisance functions and $\ME_{\bX,Z,Y}$ to denote the expectation taken w.r.t. a new data point $(\bX,Z,Y)$ independent of $D$. 
    By Fubini's Theorem and Jenson's inequality, we have
    \[
    \begin{aligned}
        &\,\ME_D \left[\int |K_h'(z-z_0)| \|\hat{\mu}(\cdot, z) - \mu(\cdot, z)\|_2 \|\hat{\pi}(z\mid \cdot) - \pi(z \mid \cdot)\|_2 dz \right]\\
        = &\, \int |K_h'(z-z_0)| \ME_D[\|\hat{\mu}(\cdot, z) - \mu(\cdot, z)\|_2 \|\hat{\pi}(z\mid \cdot) - \pi(z \mid \cdot)\|_2] dz \\
        \leq &\, \int |K_h'(z-z_0)| \sqrt{\ME_D \ME_\bX [(\hat{\mu}(\bX, z) - \mu(\bX, z))^2]} \sqrt{\ME_D \ME_\bX[(\hat{\pi}(z\mid \bX) - \pi(z \mid \bX))^2]} dz \\
        = &\, \int |K_h'(z-z_0)| \sqrt{ \ME_\bX \ME_D [(\hat{\mu}(\bX, z) - \mu(\bX, z))^2]} \sqrt{ \ME_\bX \ME_D[(\hat{\pi}(z\mid \bX) - \pi(z \mid \bX))^2]} dz \\
        \leq &\, \sup_{|z-z_0|\leq h} \sqrt{\ME_\bX\ME_D[(\hat{\pi}(z\mid \bX)-\pi(z\mid \bX))^2]} \sup_{|z-z_0|\leq h} \sqrt{\ME_\bX\ME_D[(\hat{\mu}( \bX,z)-\mu(\bX,z))^2]} \int |K_h'(z-z_0)| dz\\
        = &\, \frac{1}{h}\sup_{|z-z_0|\leq h} \sqrt{\ME_\bX\ME_D[(\hat{\pi}(z\mid \bX)-\pi(z\mid \bX))^2]} \sup_{|z-z_0|\leq h} \sqrt{\ME_\bX\ME_D[(\hat{\mu}( \bX,z)-\mu(\bX,z))^2]} \int |K'(u)| du  .
    \end{aligned}
    \]
    Hence the conditional bias can be bounded as 
    \[
    \begin{aligned}
        &\,\MP[\hat{\varphi}_h(\bO; z_0)-\varphi_h(\bO;z_0)]\\
        = &\, O_{\MP} \left(\frac{1}{h}\sup_{|z-z_0|\leq h} \sqrt{\ME_\bX\ME_D[(\hat{\pi}(z\mid \bX)-\pi(z\mid \bX))^2]} \sup_{|z-z_0|\leq h} \sqrt{\ME_\bX\ME_D[(\hat{\mu}( \bX,z)-\mu(\bX,z))^2]} \right).
    \end{aligned}
    \]
    The conditional variance of $\hat{\theta}_h(z_0)$ is 
    \[
    \begin{aligned}
       &\, \operatorname{Var}\left(\hat{\theta}_h(z_0)\right)\\
       = &\,\frac{1}{n} \operatorname{Var}\left(K_h'(Z-z_0)  \frac{Y-\hat{\mu}(\bX,Z)}{\hat{\pi}(Z\mid \bX)} + \int  \hat{\mu}(\bX, z)K_h'(z-z_0) dz \right) \\
       \leq & \, \frac{2}{n} \left[\operatorname{Var}\left(K_h'(Z-z_0)  \frac{Y-\hat{\mu}(\bX,Z)}{\hat{\pi}(Z\mid \bX)}\right) + \operatorname{Var}\left( \int  \hat{\mu}(\bX, z)K_h'(z-z_0) dz \right) \right] 
    \end{aligned}
    \]
    For the first term we have
    \[
    \begin{aligned}
        &\,\operatorname{Var}\left(K_h'(Z-z_0)  \frac{Y-\hat{\mu}(\bX,Z)}{\hat{\pi}(Z\mid \bX)}\right)\\
        \leq &\, \ME\left[ \left(K_h'(Z-z_0)\right)^2  \frac{(Y-\hat{\mu}(\bX,Z))^2}{\hat{\pi}^2(Z\mid \bX)}\right]\\
        \lesssim &\, \ME\left[ \left(K_h'(Z-z_0)\right)^2\right]\\
        = &\, \int \frac{1}{h^4}\left(K'\left(\frac{Z-z_0}{h}\right)\right)^2 f(z) dz \\
        \lesssim &\, \frac{1}{h^3} \int \left(K'\left(u\right)\right)^2 du \\
        \lesssim &\, \frac{1}{h^3},
    \end{aligned}
    \]
    where the second inequality follows from bounds on the nuisance estimators. For the second term we have
    \[
    \begin{aligned}
        &\, \operatorname{Var}\left( \int  \hat{\mu}(\bX, z)K_h'(z-z_0) dz \right) \\
        \leq &\, \ME \left[\left(\int  \hat{\mu}(\bX, z)K_h'(z-z_0) dz\right)^2\right].
    \end{aligned}
    \]
    Similar calculations show
    \[
    \begin{aligned}
        &\,\left|\int  \hat{\mu}(\bX, z)K_h'(z-z_0) dz \right|\\
        \lesssim &\, \int |K_h'(z-z_0)| dz \\
        = &\, \frac{1}{h} \int |K'(u)|du.
    \end{aligned}
    \]
    Hence we have
    \[
     \operatorname{Var}\left( \int  \hat{\mu}(\bX, z)K_h'(z-z_0) dz \right)\leq \ME \left[\left(\int  \hat{\mu}(\bX, z)K_h'(z-z_0) dz\right)^2\right]\lesssim \frac{1}{h^2}.
    \]
    \[
    \operatorname{Var}\left(\hat{\theta}_h(z_0)\right) \lesssim \frac{1}{nh^3}
    \]
    
    \end{proof}

\subsection{Proof of Theorem \ref{thm:smooth-normality}}
\begin{proof}
    Recall we have the following decomposition of estimation error
    \[
    \begin{aligned}
        \hat{\theta}_h(z_0) - \theta(z_0) = &\,\hat{\theta}_h(z_0) - {\theta}_h(z_0) + {\theta}_h(z_0)-\theta(z_0)\\
        = &\, (\MP_n-\MP) [\varphi_h(\bO;z_0)] + (\MP_n-\MP)[\hat{\varphi}_h(\bO; z_0)-\varphi_h(\bO;z_0)] \\
        &\,+ \MP[\hat{\varphi}_h(\bO; z_0)-\varphi_h(\bO;z_0)]+ {\theta}_h(z_0)-\theta(z_0)
    \end{aligned}
    \]
    By Proposition \ref{prop:smooth-error}--\ref{prop:bound-variance}, we have
    \[
    {\theta}_h(z_0)-\theta(z_0) = O(h^{\gamma-1})
    \]
    \[
    \begin{aligned}
      &\,\MP[\hat{\varphi}_h(\bO; z_0)-\varphi_h(\bO;z_0)]\\
      = &\, O_{\MP} \left( \frac{1}{h}\sup_{|z-z_0|\leq h} \sqrt{\ME_\bX\ME_D[(\hat{\pi}(z\mid \bX)-\pi(z\mid \bX))^2]} \sup_{|z-z_0|\leq h} \sqrt{\ME_\bX\ME_D[(\hat{\mu}( \bX,z)-\mu(\bX,z))^2]} \right).  
    \end{aligned}
    \]
    The following lemma bounds the empirical process term $(\MP_n-\MP)[\hat{\varphi}_h(\bO; z_0)-\varphi_h(\bO;z_0)]$.
\begin{lemma}\label{lemma:bounds-remainder}
    Assume we estimate nuisance functions $\pi, \mu$ from a separate independent sample, and the nuisance functions and their estimates satisfy $\epsilon \leq \pi, \hat{\pi} \leq C, |Y|, |\mu|\leq C$. Further assume the kernel $K$ satisfies $\int |K'(u)|du, \int (K'(u))^2du < \infty$. Then we have 
    \[
    \begin{aligned}
        &\,(\MP_n-\MP)[\hat{\varphi}_h(\bO; z_0)-\varphi_h(\bO;z_0)] \\
        =&\, O_{\MP} \left(  \frac{1}{\sqrt{nh^3}} \max \left\{ \sup_{|z-z_0|\leq h} \sqrt{\ME_{\bX}[\ME_D \left(\hat{\mu}(\bX,z) - \mu(\bX,z)\right)^2]}, \sup_{|z-z_0|\leq h} \sqrt{\ME_{\bX}[\ME_D \left(\hat{\pi}(z \mid \bX) - \pi(z \mid \bX)\right)^2]}\right\} \right)
    \end{aligned}
    \]
    \end{lemma}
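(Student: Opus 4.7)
The plan is to invoke the conditional empirical-process inequality (Lemma 2 of \cite{kennedy2020sharp}): since $\hat\pi, \hat\mu$ are estimated from an independent training sample, conditionally on that sample we have
\[
(\MP_n - \MP)[\hat\varphi_h(\bO;z_0) - \varphi_h(\bO;z_0)] = O_{\MP}\!\left(\frac{\|\hat\varphi_h(\cdot;z_0) - \varphi_h(\cdot;z_0)\|_2}{\sqrt{n}}\right).
\]
Hence it suffices to show $\|\hat\varphi_h - \varphi_h\|_2 = O_{\MP}(\max\{r_n(z_0), s_n(z_0)\}/\sqrt{h^3})$.

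Using the identity $a/b - c/d = (a-c)/b + c(b-d)/(bd)$ on the IPW piece, I would decompose
\[
\hat\varphi_h - \varphi_h = \underbrace{\frac{K_h'(Z-z_0)(Y-\mu)(\hat\pi - \pi)}{\hat\pi\,\pi}}_{T_1} + \underbrace{\frac{K_h'(Z-z_0)(\hat\mu - \mu)}{\hat\pi}}_{T_2} - \underbrace{\int (\hat\mu(\bX,z)-\mu(\bX,z))K_h'(z-z_0)\,dz}_{T_3}.
\]
For $T_2$, boundedness of $\hat\pi$ away from zero and of $|Y|,|\mu|$ yields $|T_2| \lesssim |K_h'(Z-z_0)|\,|\hat\mu - \mu|$. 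Taking $\|\cdot\|_2^2$, averaging over the training sample, and applying Fubini,
\[
\ME_D \ME_{\bX,Z}\!\left[K_h'(Z-z_0)^2 (\hat\mu(\bX,Z) - \mu(\bX,Z))^2\right] \lesssim \int K_h'(z-z_0)^2 \int \ME_D (\hat\mu(\bx,z) - \mu(\bx,z))^2\,d\MP(\bx)\,dz.
\]
The inner integral is bounded by $s_n(z_0)^2$ uniformly on $|z-z_0|\le h$, while $\int K_h'(z-z_0)^2 dz = h^{-3}\int (K'(u))^2 du$, giving $\|T_2\|_2 = O_{\MP}(s_n(z_0)/\sqrt{h^3})$. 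An identical argument for $T_1$, after absorbing the factor $(Y-\mu)/(\hat\pi\pi)$ into a constant via the boundedness and positivity assumptions, gives $\|T_1\|_2 = O_{\MP}(r_n(z_0)/\sqrt{h^3})$.

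For $T_3$, the generalized Minkowski inequality brings the $L_2(\MP_\bX)$ norm inside the $z$-integral,
\[
\|T_3\|_2 \le \int |K_h'(z-z_0)|\,\|\hat\mu(\cdot,z) - \mu(\cdot,z)\|_2\,dz,
\]
and taking $\ME_D$ via Jensen's inequality yields the bound $\sup_{|z-z_0|\le h}\sqrt{\ME_\bX \ME_D(\hat\mu(\bX,z)-\mu(\bX,z))^2}\cdot \int |K_h'(z-z_0)|\,dz = O_{\MP}(s_n(z_0)/h)$. Since $1/h = o(1/\sqrt{h^3})$ as $h\to 0$, $T_3$ is dominated by $T_2$. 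Summing the three contributions and dividing by $\sqrt{n}$ via the initial inequality delivers the stated rate. The main technical care lies in (i) applying Fubini cleanly so that the training-sample expectation exchanges with the kernel integrals and collapses to the $\sup_{|z-z_0|\le h}$ quantities defining $r_n$ and $s_n$, and (ii) verifying that the generalized-Minkowski bound for $T_3$ truly is of lower order, so that the final bound is driven by the square-integrable derivative kernel scaling $h^{-3/2}$ rather than the $L_1$ scaling $h^{-1}$.
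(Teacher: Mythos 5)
Your proposal is correct and follows essentially the same route as the paper's proof: the same appeal to the conditional empirical-process bound of Kennedy (2020, Lemma 2), the same three-term decomposition of $\hat{\varphi}_h-\varphi_h$, the same Fubini-plus-sup argument yielding the $h^{-3/2}$ scaling for the two kernel-weighted terms, and the same generalized Minkowski bound showing the integral term is of lower order $s_n(z_0)/h$.
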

    The asymptotic expansion follows from combining these results. To show the asymptotic normality, we need the following lemma as a sufficient condition for Lindeberg's theorem. 
    \begin{lemma}[A sufficient condition for Lindeberg's condition]\label{lemma:lindeberg}
    Suppose $\{X_{nk},n \geq 1, 1\leq k \leq k_n\}$ is a triangular array such that for each $n$, $X_{n1}, \dots, X_{nk_n}$ are independent. Let $B_n^2 = \sum_{k=1}^{k_n} \operatorname{Var}(X_{nk})$. Further assume there exists a sequence $\{L_n,n \geq 1\}$ satisfying
    \[
    \max_{1\leq k \leq k_n} |X_{nk}| \leq L_n, \,L_n/B_n \rightarrow 0.
    \]
    Then Lindeberg's condition holds, i.e., for any $\tau >0$ we have
    \[
    \lim _{n \rightarrow \infty} \frac{1}{B_n^2} \sum_{k=1}^{k_n} \mathbf{E}\left[\left(X_{nk}-\ME[X_{nk}]\right)^2 I\left(\left|X_{nk}-\ME[X_{nk}]\right| \geq \tau B_n\right)\right]=0.
    \]
    As a consequence,
    \[
    \frac{\sum_{k=1}^{k_n} (X_{nk}-\ME[X_{nk}])}{B_n} \stackrel{d}{\rightarrow} N(0,1).
    \]
    \end{lemma}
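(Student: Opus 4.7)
The goal is to verify Lindeberg's classical condition under a uniform boundedness hypothesis. My approach is to argue that when $n$ is sufficiently large, the truncation event inside the Lindeberg sum is \emph{empty}, so the expression in the limit is identically zero for all large $n$ (and hence the limit is $0$ trivially). This bypasses any delicate tail analysis of the $X_{nk}$.

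\textbf{Key steps.} First, I will centralize: since $|X_{nk}| \leq L_n$ by hypothesis, Jensen's inequality (or the triangle inequality applied to $|X_{nk} - \mathbb{E}[X_{nk}]|$) yields the uniform bound
\[
\max_{1 \leq k \leq k_n} \bigl|X_{nk} - \mathbb{E}[X_{nk}]\bigr| \;\leq\; 2 L_n.
\]
Second, I will use the assumption $L_n/B_n \to 0$ to convert this uniform bound into an emptiness statement. Fix $\tau > 0$. Since $2L_n/B_n \to 0$, there exists $n_0 = n_0(\tau)$ such that $2L_n < \tau B_n$ for all $n \geq n_0$. Combined with the previous display, this forces
\[
\bigl\{\,|X_{nk} - \mathbb{E}[X_{nk}]| \geq \tau B_n\,\bigr\} \;=\; \varnothing
\]
for every $k \in \{1,\dots,k_n\}$ and every $n \geq n_0$. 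Therefore each expectation in the Lindeberg sum is $0$, and so the whole sum (divided by $B_n^2$) is exactly $0$ for $n \geq n_0$, giving the limit. The last step is to invoke the Lindeberg--Feller CLT for triangular arrays of row-wise independent variables: the Lindeberg condition just verified, together with the definition of $B_n^2$, immediately delivers $\sum_k (X_{nk} - \mathbb{E}[X_{nk}]) / B_n \stackrel{d}{\to} N(0,1)$.

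\textbf{Main obstacle.} There is essentially no obstacle: the lemma is the standard observation that a bounded-variable hypothesis stronger than Lyapunov (namely uniform boundedness by $L_n$ with $L_n = o(B_n)$) makes Lindeberg's condition trivial. The only thing that warrants a line of care is the centralization step, so I would make the factor of $2$ in the bound $|X_{nk} - \mathbb{E}[X_{nk}]| \leq 2L_n$ explicit to justify the choice $2L_n < \tau B_n$ rather than $L_n < \tau B_n$. Everything else is a one-line consequence.
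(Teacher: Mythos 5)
Your proposal is correct and follows essentially the same route as the paper's own proof: centralize to get $|X_{nk}-\ME[X_{nk}]|\leq 2L_n$, use $L_n/B_n\to 0$ to make the truncation event empty for all $n\geq n_0(\tau)$, so the Lindeberg sum vanishes identically for large $n$, and then invoke the Lindeberg--Feller CLT. No gaps.
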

    We verify Lemma \ref{lemma:lindeberg} with $k_n=n, $
    \[
    X_{nk} = -K_h'(Z_k-z_0)  \frac{Y_k-\mu(\bX_k,Z_k)}{\pi(Z_k\mid \bX_k)} - \int  \mu(\bX_k, z)K_h'(z-z_0) dz.
    \]
    It is easy to see 
    \[
    |X_{nk}| \lesssim \frac{1}{h^2}.
    \]
    By the same logic in the proof of Proposition \ref{prop:bound-variance} one can show
    \[
    \operatorname{Var}(X_{nk}) = O\left(\frac{1}{h^3}\right).
    \]
    We further argue that 
    \[
    \operatorname{Var}(X_{nk}) = \Omega\left(\frac{1}{h^3}\right).
    \]
    Since the two terms in $X_{nk}$ are uncorrelated, we have
    \[
    \begin{aligned}
        \operatorname{Var}(X_{nk}) \geq &\, \operatorname{Var} \left( K_h'(Z-z_0)  \frac{Y-\mu(\bX,Z)}{\pi(Z\mid \bX)}\right)\\
    = &\, \ME \left[ \left(K_h'(Z-z_0) \right)^2 \frac{(Y-\mu(\bX,Z))^2}{\pi^2(Z\mid \bX)}\right]\\
    = &\,  \ME \left[ \left(K_h'(Z-z_0) \right)^2 \frac{\operatorname{Var}(Y \mid \bX, Z)}{\pi^2(Z\mid \bX)}\right]\\
    \gtrsim &\, \ME \left[ \left(K_h'(Z-z_0) \right)^2 \frac{1}{\pi^2(Z\mid \bX)}\right] \\
    = &\, \ME\left[ \int\left(K_h'(z-z_0) \right)^2\frac{1}{\pi(z\mid \bX)} dz \right] \\
    \gtrsim &\, \int\left(K_h'(z-z_0) \right)^2dz \\
    = &\, \frac{1}{h^3} \int  \left( K'(u)\right)^2 du,
    \end{aligned}
    \]
    where we use the condition $\operatorname{Var}(Y \mid \bX, Z) \geq c>0$ and $\pi \leq C$. Thus we have
    \[
    B_n^2 = \sum_{k=1}^{n} \operatorname{Var}(X_{nk}) \asymp \frac{n}{h^3}.
    \]
    Under the assumed scaling condition, we have
    \[
    L_n \asymp \frac{1}{h^2},\, B_n \asymp \frac{\sqrt{n}}{\sqrt{h^3}},
    \]
    \[
    L_n/B_n \asymp \frac{1}{\sqrt{nh}} \rightarrow 0.
    \]
    So the condition in Lemma \ref{lemma:lindeberg} holds and Lindeberg's condition holds, which further implies the asymptotic normality of $(\MP_n-\MP) [\varphi_h(\bO;z_0)]$. The remainder terms are asymptotically negligible under the rate assumptions in the theorem.
\end{proof}

\subsection{Proof of Proposition \ref{prop:IF-pseudorisk}}
\begin{proof}
    We let $\bV = \emptyset$ and $A=Z$ in Theorem 4 of \cite{kennedy2019robust}, which reduces the local IV curve to the derivative of the dose-response function, i.e., $\gamma(t) = \theta(t)$. The influence function is then given by 
    \[
    L_w(\bO) = 2\int \frac{d}{d z} \{ w(z) \bar{\theta}(z)\} \mu(\bX,z)dz - \int \frac{d}{d z} \{ w(z) \bar{\theta}^2(z)\} z dz + 2\left.\frac{d}{d z} \{ w(z) \bar{\theta}(z)\} \right|_{z=Z}\frac{Y-\mu(\bX,Z)}{\pi(Z\mid \bX)}
    \]
    Integration by part then yields
    \[
    \int \frac{d}{d z} \{ w(z) \bar{\theta}^2(z)\} z dz =- \int w(z) \bar{\theta}^2(z) dz.
    \]
\end{proof}

\end{document}